\DeclareMathOperator{\id}{id}
\DeclareMathOperator{\Ima}{Im}
\DeclareMathOperator{\Mat}{Mat}
\newtheorem{defn}{Definition}[section]	
\newtheorem{prop}{Proposition}[section]
\newtheorem{exmp}{Example}[section]
\newtheorem{thm}{Theorem}[section]
\newtheorem{lm}{Lemma}[section]
\newtheorem{rem}{Remark}[section]
\newtheorem{conj}{Conjecture}[section]
\newcommand\hb{\hat{b}}
\newcommand\w{\wedge}
\newcommand\dl{\dd \log}
\newcommand\te{\tilde{e}}
\newcommand{\C}{{\mathbb C}}
\newcommand{\rD}{\mathrm{D}}
\newcommand\om{{\omega}}
\newcommand\bx{{\bf x}}
\newcommand\by{{\bf y}}
\newcommand\tx{\tilde{x}}
\newcommand\im{\mathrm{im}\,} 
\newcommand\hatom{\hat{\omega}}
\newcommand{\htau}{\hat{\tau}}
\newcommand{\cT}{\mathcal{ T}}
\newcommand{\cL}{{\cal L}}
\newcommand{\ts}{\textsuperscript}
\newcommand{\Mod}[1]{\ (\mathrm{mod}\ #1)}
\tikzset{decorate sep/.style 2 args=
{decorate,decoration={shape backgrounds,shape=circle,shape size=#1,shape sep=#2}}}
\title{Integrable deformations of cluster maps of type $D_{2N}$ }
\author{Wookyung Kim \footnote{Corresponding author e-mail: kimw@ms.u-tokyo.ac.jp} \,\orcidlink{0009-0004-8512-3014}}
\affil[]{Graduate School of Mathematical Sciences  \protect\\ 
University of Tokyo, 3-8-1 Komaba, Tokyo 153-8914, Japan.
}
\date{\today}
\begin{document}

\maketitle

\begin{abstract}\

In this paper, we extend one of the main results from our joint work \cite{hkm24} with Hone and Mase, in which we studied a deformed type $D_{4}$ map, to the general case of the type $D_{2N}$ for $N\geq3$. This can be achieved through a ``local expansion" operation, introduced in our joint work \cite{grab} with Grabowski and Hone. This operation involves inserting a specific subquiver into the quiver arising from the Laurentification of the deformed type $D_{4}$ map. This insertion yields a new quiver, obtained through the Laurentification of the deformed type $D_{6}$ map and thus enables systematic generalization to higher ranks $D_{2N}$. We  also study the degree growth of deformed type $D_{2N}$ map via the tropical method and conjecture that, for each $N$, the deformed map is an integrable, as indicated by the algebraic entropy test, the criterion for detecting integrability in the discrete dynamical systems.
\end{abstract}

\section{Introduction}

Cluster algebras are commutative algebras which was introduced by Fomin and Zelevensky \cite{FZ2001}. These algebras are defined by collection of distinguished generators, \textit{cluster variables} which are produced through the iterative process of  \textit{mutation}. Performing mutation $\mu_{j}$ on a \textit{cluster} $\vb{x} = (x_{1},\dots, x_{n})$, $n$-tuple of cluster variables, replaces variable $x_j$ with new variable $x_j'$ given by the following difference equation,
$$x_j'x_j = F(x_1,\dots,x_{j-1},x_{j+1},\dots, x_{n})$$
where $x_j'=\mu_{j}(x_j)$ and $F$ is subtraction-free polynomial in initial cluster variables. In the cluster algebra, this equation is known as \textit{exchange relation}, and it is determined by the \textit{quiver}, $Q$, that is a finite directed graph, consisting of no loops or 2-cycles. Moreover the quiver itself transforms under mutation, together with cluster variables. Thus more precisely, a cluster algebra is a collection of \textit{seeds}, where each seed is formed by a set of distinguished cluster variables and associated quiver. These seeds are induced from  the initial seed $(\vb{x},Q)$ by recursively applying the mutations. 

The exchange relation is said to possess the \textit{Laurent property}, that is, every variable induced by the exchange relation can be expressed as a Laurent polynomial in initial variables. In other words, new cluster variables induced by any composition of mutations $\mu_{i_{n}}\cdot \mu_{i_{n-1}} \cdots \mu_{i_{2}} \cdot \mu_{i_{1}}$ are elements of the Laurent polynomial ring $\mathbb{Z}_{\geq 0}[x_{1}^{\pm},x_{2}^{\pm},\dots, x_{n}^{\pm}   ]$. This is a key distinguished feature of cluster algebra.

In a particular case, if there exists a composition of mutations that fixes the exchange matrix, then it can be regarded as a birational map which maps between clusters,$\varphi:\vb{x} \to \vb{x}' = (x_1', x_2', \dots, x_n')$, which is referred to as \textit{cluster map}. This enables the description of discrete dynamical systems within the framework of cluster algebra, which leads to the study of the area of discrete integrable systems \cite{Shigeru,FH2013}. 

For instance, non-linear recurrence relation in the form of 
\begin{equation}\label{lyness5}
x_{n+2}x_{n} = 1 +  x_{n+1}
\end{equation}
is known as the \textit{Lyness recurrence} relation. The recurrence is equivalent to the iteration of the birational map given by 
\begin{equation}
\varphi:(x_{1},x_{2})\to \qty(x_{2},\frac{1+x_2}{x_1})
\end{equation}
This map can be constructed by composition of mutations in the cluster algebra with quiver which is identical to Dynkin type $A_{2}$. This map preserves the symplectic 2-form 
\begin{equation}
\omega = \frac{1}{x_1x_2}\dd x_1 \w \dd x_2
\end{equation}
 and there exists a symmetric function (first integral) which is invariant under the action of the map. Thus it is the Liouville integrable map whose iteration represents the discrete integrable system. 
One can observe that the orbit of recurrence relation is periodic with period 5 i.e. $(x_{n+5} = x_{n})$. This is the simplest example of Zamolodchikov periodicity, that is, the periodicity of the orbit of birational maps for which iterates returns to initial data after a fixed number of evolutions. This periodicity was first observed by Zamolodchikov \cite{zam} in the dynamics induced by the difference equation known as the \textit{Y-system}, which provides solutions to the Bethe ansatz equations associated with conformal field theories of the ADE scattering type. This Y-system was later  realized within the framework of cluster algebra by introducing \textit{coefficient variable} defined over a semifield. Subsequently, Fomin and Zelevinsky showed that both dynamics of coefficient variable and cluster variable exhibit same periodicity in the cluster algebras whose quiver has same structure as the finite Dynkin diagram \cite{fz2}.

The following parameter modification on the recurrence relation
\begin{equation}\label{lynessdeformed}
x_{n+2}x_{n} = b+ax_{n+1}
\end{equation}
does preserve the symplectic structure and integrability of the map. In other words, one can generalise the integrable cluster map by imposing arbitrary parameters in the exchange relation. This deformation provides the motivation behind the development of the results presented by Hone and Kouloukas in  \cite{hk}. However, in exchange for admitting deformation of the cluster map, the iteration no longer satisfies the Laurent property. This implies  that the map cannot be defined in the cluster algebras. One way to restore this property is to lift the deformed map to the higher dimensional space where the Laurent property holds. This method is so called Laurentification (a term coined by Hamad et al. in \cite{hkha17}), which is established by constructing the variable transformation through the $p$-adic methods
introduced by Kanki in \cite{kanki} which is analogous to the singularity confinement test in discrete systems \cite{GRP}. Through the following variable transformation, 
\begin{equation}
x_{n}= \frac{\tau_{n+5}\tau_{n}}{\tau_{n+2}\tau_{n+3}}
\end{equation}
the recurrence relation can be rewritten as 
\begin{equation}\label{specialsomos7}
\tau_{n+7}\tau_{n} = a \tau_{n+1}\tau_{n+6} + b \tau_{n+3}\tau_{n+4}
\end{equation}
which can be realised via mutations in cluster algebra associated with the quiver shown in Figure \ref{Qsomos7}.  This means that the parametric map can be generated from the cluster variables in cluster algebra of higher rank.

\begin{figure}[h!]
\begin{center}
\resizebox{0.4\textwidth}{!}{%
 \begin{tikzpicture}[every circle node/.style={draw,scale=0.6,thick},node distance=15mm]
  \node [draw,circle,fill=red!50,"$5$"] (5) at (0,0) {};
  \node [draw,circle,fill=red!50,"$6$"] (6) [right=of 5] {};
  \node [draw,circle,fill=red!50,"$7$" below] (7) [below right=of 6] {};
  \node [draw,circle,fill=red!50,"$4$" below] (4) [below left=of 5] {};
    \node [draw,circle,fill=red!50,"$3$" below] (3) [below=of 4] {};
      \node [draw,circle,fill=red!50,"$1$" below] (1) [below=of 7] {};
      \node (a) [left=of 5]{};
      \node (b) [right=of 6]{};
      \node  [draw,circle,fill=blue!50,"$8$" above]  (8) at (0.9,1.2) {};
      \node [draw,circle,fill=blue!50,"$9$" below] (9) at (-2,-4){};

 \node [draw,circle,fill=red!50,"$2$" below] (2) at (1,-4) {};
  
  \begin{scope}[>=Latex]
  \draw[-> , thick]  (5) edge (6);  
   \draw[-> , thick]  (4) edge (7); 
    \draw[-> , thick]  (3) edge (1); 
    
     \draw[-> , thick]  (4) edge (1); 
      \draw[-> , thick]  (1) edge (5); 
       \draw[-> , thick]  (1) edge (2); 
       
        \draw[-> , thick]  (7) edge (5); 
         \draw[-> , thick]  (7) edge (2); 
          \draw[-> , thick]  (4) edge (7); 
          \draw[-> , thick]  (3) edge (7); 
          
          \draw[-> , thick]  (6) edge (4); 
          \draw[-> , thick]  (6) edge (3); 
          \draw[-> , thick]  (2) edge (6); 
          
          \draw[-> , thick]  (5) edge (2); 
          \draw[-> , thick]  (2) edge (3); 
          \draw[-> , thick]  (2) edge (4); 
          
           \draw[-> , thick]  (5) edge (8); 
           \draw[-> , thick]  (8) edge (6); 
            \draw[-> , thick]  (8) edge (3);
             \draw[-> , thick]  (8) edge (1);  
              \draw[-> , thick]  (4) edge[bend left=25] (8); 
               \draw[-> , thick]  (7) edge[bend right=25] (8); 
               
                \draw[-> , thick]  (9) edge (4); 
                 \draw[-> , thick]  (1) edge (9);

    \end{scope}

\end{tikzpicture}
}
\end{center}
\caption{Quiver associated with special Somos-7 recurrence \eqref{specialsomos7} }\label{Qsomos7}
\end{figure}
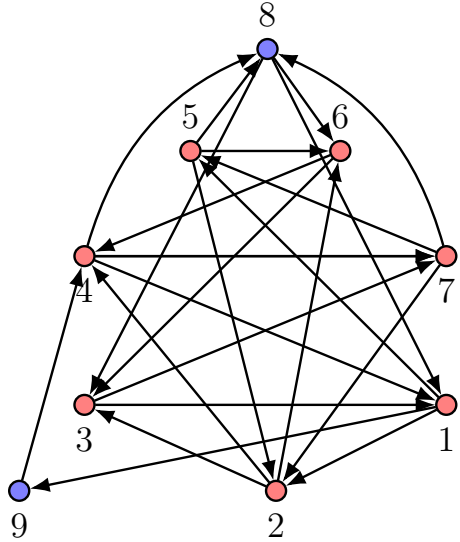

There is another form of cluster map which generates the same sequence of cluster variables as the Lyness recurrence. The iteration of the map is equivalent to the system of difference equations
\begin{equation}
\begin{split}
x_{1,n+1}x_{1,n}& = 1 + x_{2,n}\\
x_{2,n+1}x_{2,n}& = 1 + x_{1,n+1}\\
\end{split}
\end{equation}
Choosing a different deformation (compared to \eqref{lynessdeformed})
\begin{equation}\label{typeA2map}
\begin{split}
x_{1,n+1}x_{1,n}& = 1 + a_1x_{2,n}\\
x_{2,n+1}x_{2,n}& = 1 + a_2x_{1,n+1}\\
\end{split}
\end{equation}
gives rise a deformed map which preserves same symplectic form as Lyness map and yields a discrete integrable system. Under Laurentification, a variable transformation consisting a different form from that of the Lyness map, the deformed map is lifted to cluster algebra whose quiver is identical to the quiver in Figure \ref{Qsomos7} but with its frozen vertices connected differently (shown in Figure \ref{QA2first}). 
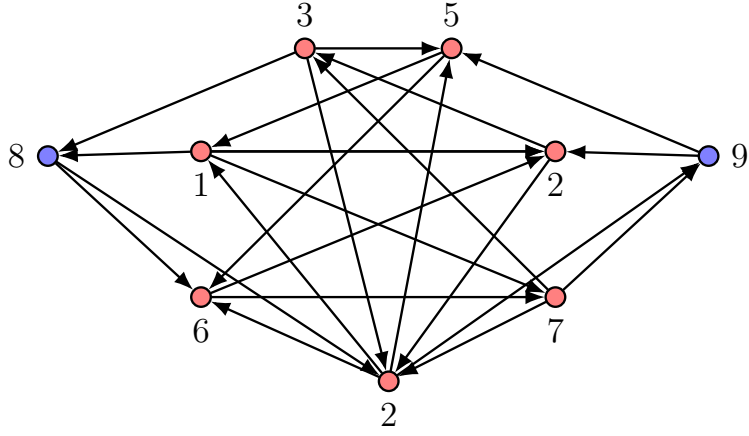
\begin{figure}[h!]
\begin{center}
\resizebox{0.6\textwidth}{!}{%
 \begin{tikzpicture}[every circle node/.style={draw,scale=0.6,thick},node distance=15mm]
  \node [draw,circle,fill=red!50,"$3$"] (5) at (0,0) {};
  \node [draw,circle,fill=red!50,"$5$"] (6) [right=of 5] {};
  \node [draw,circle,fill=red!50,"$2$" below] (7) [below right=of 6] {};
  \node [draw,circle,fill=red!50,"$1$" below] (4) [below left=of 5] {};
    \node [draw,circle,fill=red!50,"$6$" below] (3) [below=of 4] {};
      \node [draw,circle,fill=red!50,"$7$" below] (1) [below=of 7] {};
      \node (a) [left=of 5]{};
      \node (b) [right=of 6]{};
      \node  [draw,circle,fill=blue!50,"$8$" left]  (8) [below left=of a]{};
      \node [draw,circle,fill=blue!50,"$9$" right] (9) [below right=of b]{};

 \node [draw,circle,fill=red!50,"$2$" below] (2) at (1,-4) {};
  
  \begin{scope}[>=Latex]
  \draw[-> , thick]  (5) edge (6);  
   \draw[-> , thick]  (4) edge (7); 
    \draw[-> , thick]  (3) edge (1); 
    
     \draw[-> , thick]  (4) edge (1); 
      \draw[-> , thick]  (1) edge (5); 
       \draw[-> , thick]  (1) edge (2); 
       
        \draw[-> , thick]  (7) edge (5); 
         \draw[-> , thick]  (7) edge (2); 
          \draw[-> , thick]  (4) edge (7); 
          \draw[-> , thick]  (3) edge (7); 
          
          \draw[-> , thick]  (6) edge (4); 
          \draw[-> , thick]  (6) edge (3); 
          \draw[-> , thick]  (2) edge (6); 
          
          \draw[-> , thick]  (5) edge (2); 
          \draw[-> , thick]  (2) edge (3); 
          \draw[-> , thick]  (2) edge (4); 
          
          \draw[-> , thick]  (8) edge (2); 
          \draw[-> , thick]  (8) edge (3); 
          \draw[-> , thick]  (4) edge (8); 
          \draw[-> , thick]  (5) edge (8); 
          
          \draw[-> , thick]  (9) edge (6); 
           \draw[-> , thick]  (9) edge (7);
            \draw[-> , thick]  (2) edge (9);  
             \draw[-> , thick]  (1) edge (9);

    \end{scope}

\end{tikzpicture}
}
\end{center}
\caption{The extended quiver $Q_{A_{2}}$ constructed through Laurentification of \eqref{typeA2map} }\label{QA2first}
\end{figure}

In the one of previous work \cite{grab}, we showed that the quiver arising from deformation of type $A_{4}$ cluster, can be arranged in the form of $Q_{A_{4}}$ ((b) in Figure \ref{Q2toQ4}). This configuration has structure that allows the $Q_{A_{2}}$ to be extended into $Q_{A_{4}}$ as shown in Figure  \ref{Q2toQ4}  by insertion of particular quiver, referred to as local expansion, illustrated in the figure \ref{localexpansionA2toA4}. 
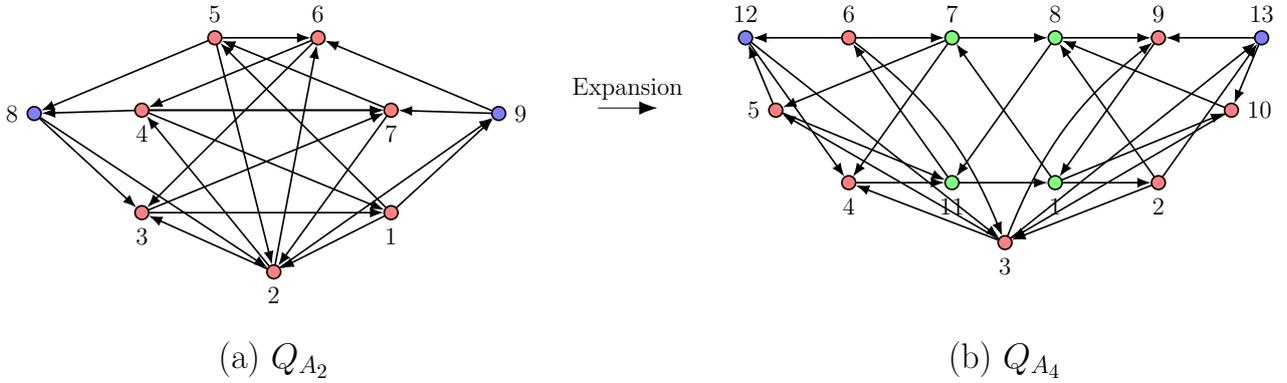
\begin{figure}[h!]
\begin{center}
\resizebox{1 \textwidth}{!}{%
 \begin{tikzpicture}[every circle node/.style={draw,scale=0.6,thick},node distance=15mm]

  \node [draw,circle,fill=red!50,"$5$"] (5) at (0,0) {};
  \node [draw,circle,fill=red!50,"$6$"] (6) [right=of 5] {};
  \node [draw,circle,fill=red!50,"$7$" below] (7) [below right=of 6] {};
  \node [draw,circle,fill=red!50,"$4$" below] (4) [below left=of 5] {};
    \node [draw,circle,fill=red!50,"$3$" below] (3) [below=of 4] {};
      \node [draw,circle,fill=red!50,"$1$" below] (1) [below=of 7] {};
      \node (a) [left=of 5]{};
      \node (b) [right=of 6]{};
      \node  [draw,circle,fill=blue!50,"$8$" left]  (8) [below left=of a]{};
      \node [draw,circle,fill=blue!50,"$9$" right] (9) [below right=of b]{};

 \node [draw,circle,fill=red!50,"$2$" below] (2) at (1,-4) {};
 \node (a) at (1,-5.5) {\Large (a) ${ Q_{A_{2}}}$};
  
  \begin{scope}[>=Latex]
  \draw[-> , thick]  (5) edge (6);  
   \draw[-> , thick]  (4) edge (7); 
    \draw[-> , thick]  (3) edge (1); 
    
     \draw[-> , thick]  (4) edge (1); 
      \draw[-> , thick]  (1) edge (5); 
       \draw[-> , thick]  (1) edge (2); 
       
        \draw[-> , thick]  (7) edge (5); 
         \draw[-> , thick]  (7) edge (2); 
          \draw[-> , thick]  (4) edge (7); 
          \draw[-> , thick]  (3) edge (7); 
          
          \draw[-> , thick]  (6) edge (4); 
          \draw[-> , thick]  (6) edge (3); 
          \draw[-> , thick]  (2) edge (6); 
          
          \draw[-> , thick]  (5) edge (2); 
          \draw[-> , thick]  (2) edge (3); 
          \draw[-> , thick]  (2) edge (4); 
          
          \draw[-> , thick]  (8) edge (2); 
          \draw[-> , thick]  (8) edge (3); 
          \draw[-> , thick]  (4) edge (8); 
          \draw[-> , thick]  (5) edge (8); 
          
          \draw[-> , thick]  (9) edge (6); 
           \draw[-> , thick]  (9) edge (7);
            \draw[-> , thick]  (2) edge (9);  
             \draw[-> , thick]  (1) edge (9);

    \end{scope}

\draw [-{Latex[length=3mm]}] (6.5,-1.2) -- (7.5,-1.2) node[midway,sloped,above] {Expansion};

  \node [draw,circle,fill=blue!50,"$12$"] (12) at (9,0) {};
  
     \node [draw,circle,fill=red!50,"$6$"] (6) [right= of 12] {};
      \node [draw,circle,fill=green!50,"$7$"] (7) [right=of 6] {};
      \node [draw,circle,fill=green!50,"$8$"] (8) [right=of 7] {};
      \node [draw,circle,fill=red!50,"$9$"] (9) [right=of 8] {};
      \node [draw,circle,fill=red!50,"$10$"right] (10) [below right=of 9] {};
      \node [draw,circle,fill=red!50,"$5$" left] (5) [below left=of 6] {};

      \node [draw,circle,fill=blue!50,"$13$"] (13) [right=of 9] {};

       \node [draw,circle,fill=red!50,"$4$"below] (4) [below right=of 5] {};
       \node [draw,circle,fill=green!50,"$11$"below] (11) [right=of 4] {};
       \node [draw,circle,fill=green!50,"$1$"below] (1) [right=of 11] {};
       \node [draw,circle,fill=red!50,"$2$"below] (2) [right=of 1] {};
       
        \node [draw,circle,fill=red!50,"$3$" below] (3) at (13.4,-3.5) {};
       
       \node (a) at (13.4,-5.5) {\Large (b) $Q_{A_{4}}$};

  \begin{scope}[>=Latex]
  
  \draw[-> , thick]  (1) edge (2); 
  \draw[-> , thick]  (1) edge (7); 
  \draw[-> , thick]  (9) edge (1);
  \draw[-> , thick]  (11) edge (1);
   \draw[-> , thick]  (1) edge (10);

 \draw[-> , thick]  (2) edge (3);
 \draw[-> , thick]  (2) edge (8);
 \draw[-> , thick]  (2) edge (13);

 \draw[-> , thick]  (4) edge (11);
  \draw[-> , thick]  (12) edge (4);
   \draw[-> , thick]  (7) edge (4);
    \draw[-> , thick]  (3) edge (4);
    
 \draw[-> , thick]  (5) edge (12);
  \draw[-> , thick]  (5) edge (11);
   \draw[-> , thick]  (3) edge (5);
    \draw[-> , thick]  (7) edge (5);
   
   \draw[-> , thick]  (6) edge (12);
    \draw[-> , thick]  (6) edge (7);
     \draw[-> , thick]  (11) edge (6);
      \draw[-> , thick]  (6) edge[bend left= 15] (3);
      
    \draw[-> , thick]  (7) edge (8);
    
     \draw[-> , thick]  (8) edge (11);
      \draw[-> , thick]  (8) edge (9);
       \draw[-> , thick]  (10) edge (8);
       
        \draw[-> , thick]  (13) edge (9);
         \draw[-> , thick]  (3) edge[bend left=15] (9);
         
          \draw[-> , thick]  (13) edge (10);
           \draw[-> , thick]  (10) edge (3);
           
            \draw[-> , thick]  (3) edge (13);
            
             \draw[-> , thick]  (12) edge (3);

    \end{scope}

\end{tikzpicture}
}
\end{center}
\caption{Extension from $Q_{A_{2}}$ to $Q_{A_{4}}$. Green nodes are new vertices which are inserted in $Q_{A_{2}}$ in the form of local expansion (shown in the figure \ref{localexpansionA2toA4})}
\label{Q2toQ4}
\end{figure}

  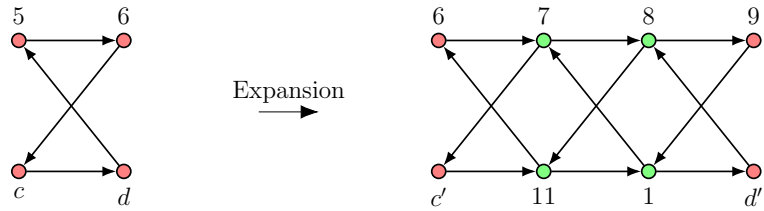
\begin{figure}[h!]
\begin{center}
\resizebox{0.6\textwidth}{!}{%
 \begin{tikzpicture}[every circle node/.style={draw,scale=0.6,thick},node distance=15mm]

  \node [draw,circle,fill=red!50,"$5$"] (5) at (0,0) {};
   \node [draw,circle,fill=red!50,"$6$"] (6)[right=of 5]{};
  
  \node [draw,circle,fill=red!50,"$c$"below] (4) at (0,-2.2) {};
   \node [draw,circle,fill=red!50,"$d$"below] (7) [right=of 4] {};
    
  \begin{scope}[>=Latex]
            
       \draw[-> , thick]  (5) edge (6);
        \draw[-> , thick]  (6) edge (4);
         \draw[-> , thick]  (4) edge (7);
          \draw[-> , thick]  (7) edge (5);

    \end{scope}
\draw [-{Latex[length=3mm]}] (4,-1.2) -- (5,-1.2) node[midway,sloped,above] {Expansion };
    \node [draw,circle,fill=red!50,"$6$"] (6) at (7,0) {};
  \node [draw,circle,fill=red!50,"$c'$" below] (4) at (7,-2.2) {};
   \node [draw,circle,fill=green!50,"$7$"] (7) [right=of 6] {};
  \node [draw,circle,fill=green!50,"$11$"below] (11) [right=of 4 ]  {};
   \node [draw,circle,fill=green!50,"$8$"] (8) [right=of 7] {};
  \node [draw,circle,fill=green!50,"$1$" below] (1) [right=of 11 ]  {};
    \node [draw,circle,fill=red!50,"$9$"] (9) [right=of 8] {};
  \node [draw,circle,fill=red!50,"$d'$" below] (2) [right=of 1 ]  {};

  \begin{scope}[>=Latex]
   
    \draw[-> , thick]  (7) edge (8);
     \draw[-> , thick]  (8) edge (11);
      \draw[-> , thick]  (11) edge (1);
       \draw[-> , thick]  (1) edge (7);
       
        \draw[-> , thick]  (8) edge (9);
         \draw[-> , thick]  (9) edge (1);
          \draw[-> , thick]  (1) edge (2);
           \draw[-> , thick]  (2) edge (8);
       
        \draw[-> , thick]  (6) edge (7);
         \draw[-> , thick]  (7) edge (4);
          \draw[-> , thick]  (4) edge (11);
           \draw[-> , thick]  (11) edge (6);

    \end{scope}

\end{tikzpicture}
}
\caption{Local expansion of the subquiver in $Q_{A_{2}}$. The nodes $c = \qty{3,4}$, $d=\qty{1,7}$, $c'=\qty{4,5}$ and $d'=\qty{2,10}$  }
\label{localexpansionA2toA4}
\end{center}
\end{figure}

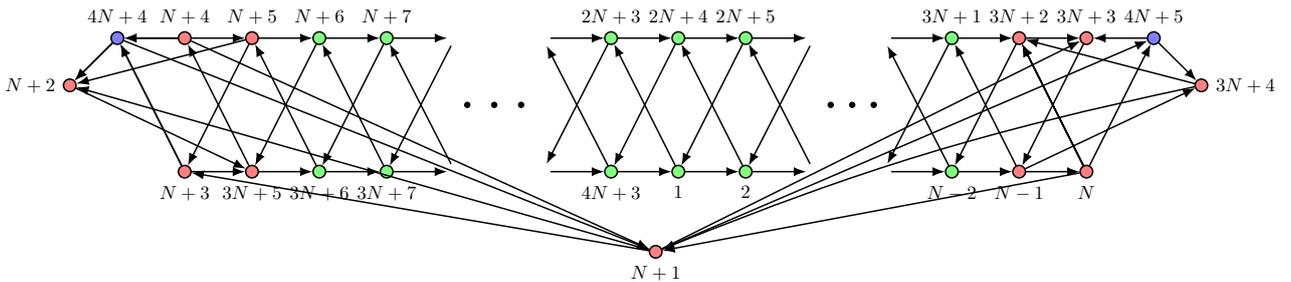
\begin{figure}[H]

\begin{center}
\resizebox{1\textwidth}{!}{%
 \begin{tikzpicture}[every circle node/.style={draw,scale=0.6,thick},node distance=10mm]

\node [draw,circle,fill=blue!50,"\footnotesize{$4N+4$}"] (aa1) at (0,0) {};
  \node [draw,circle,fill=red!50,"\footnotesize{$N+4$}"] (aa2) [right=of aa1] {};
  \node [draw,circle,fill=red!50,"\footnotesize{$N+5$}"] (aa3) [right=of aa2] {};
  \node [draw,circle,fill=green!50,"\footnotesize{$N+6$}"] (aa4) [right=of aa3] {};
  \node [draw,circle,fill=green!50,"\footnotesize{$N+7$}"] (aa5) [right=of aa4] {};
  \node [draw,circle,fill=red!50,"\footnotesize{$N+2$}" left]  (cc1) [below left=of aa1] {};
  \node  (aa6) [right=of aa5] {};

  \node [draw,circle,fill=red!50,"\footnotesize{$N+3$}" below] (bb1) at (1.25,-2.5) {} ; 
  \node [draw,circle,fill=red!50,"\footnotesize{$3N+5$}" below] (bb2) [right=of bb1] {};
  \node [draw,circle,fill=green!50,"\footnotesize{$3N+6$}" below] (bb3) [right=of bb2] {};
   \node [draw,circle,fill=green!50,"\footnotesize{$3N+7$}" below] (bb4) [right=of bb3] {};
   \node (bb5) [right=of bb4] {};

    \node [draw,circle,fill=red!50,"\footnotesize{$N+1$}" below] (dd1) at (10,-4) {} ;
    
  \begin{scope}[>=Latex]
  
\draw[-> , thick]  (aa1) edge (cc1);  
\draw[-> , thick]  (aa1) edge (dd1);  
\draw[-> , thick]  (aa2) edge (aa1);  
\draw[-> , thick]  (bb1) edge (aa1);  

\draw[-> , thick]  (cc1) edge (bb2);  
\draw[-> , thick]  (dd1) edge (cc1);  
\draw[-> , thick]  (aa1) edge (cc1);  
\draw[-> , thick]  (aa3) edge (cc1);  

\draw[-> , thick]  (bb1) edge (aa1);  
\draw[-> , thick]  (bb1) edge (bb2);  
\draw[-> , thick]  (dd1) edge (bb1);  
\draw[-> , thick]  (aa3) edge (bb1);  

\draw[-> , thick]  (aa2) edge (aa1);  
\draw[-> , thick]  (aa2) edge (dd1);  
\draw[-> , thick]  (aa2) edge (aa3);  
\draw[-> , thick]  (bb2) edge (aa2); 

\draw[-> , thick]  (aa3) edge (aa4);  
\draw[-> , thick]  (bb3) edge (aa3);  

\draw[-> , thick]  (aa4) edge (bb2);  
\draw[-> , thick]  (aa4) edge (aa5);  
\draw[-> , thick]  (bb4) edge (aa4); 

\draw[-> , thick]  (aa5) edge (bb3);
\draw[-> , thick]  (aa5) edge (aa6);

\draw[-> , thick]  (bb2) edge (bb3); 
\draw[-> , thick]  (bb3) edge (bb4);
\draw[-> , thick]  (bb4) edge (bb5);

\draw[-> , thick]  (aa6) edge (bb4);
\draw[-> , thick]  (bb5) edge (aa5);

\end{scope}

 \node [draw,circle,fill=green!50,"\footnotesize{$3N+1$}"] (a3) at (15.5,0) {};
  \node [draw,circle,fill=red!50,"\footnotesize{$3N+2$}"] (a4) [right=of a3] {};
  \node [draw,circle,fill=red!50,"\footnotesize{$3N+3$}"] (a5) [right=of a4] {};
  \node [draw,circle,fill=blue!50,"\footnotesize{$4N+5$}"] (a6) [right=of a5] {};
  \node [draw,circle,fill=red!50,"\footnotesize{$3N+4$}" right] (c2) [below right=of a6] {};
  \node (a1) [left=of a3] {};
   \node (a0) [left=of a1] {};
   \node (A0) [draw,circle,fill=green!50,"\footnotesize{$2N+5$}"] [left=of a0] {};
   \node (A1) [draw,circle,fill=green!50,"\footnotesize{$2N+4$}"][left=of A0] {};
   \node (A2) [draw,circle,fill=green!50,"\footnotesize{$2N+3$}"][left=of A1] {};
   \node (A3)[left=of A2] {};

  \node [draw,circle,fill=green!50,"\footnotesize{$N-2$}" below](b2) at (15.5,-2.5) {};
  \node (b1) [left=of b2] {};
  \node (b0) [left=of b1] {};
  \node (B0)[draw,circle,fill=green!50,"\footnotesize{$2$}" below] [left=of b0] {};
  \node (B1)[draw,circle,fill=green!50,"\footnotesize{$1$}" below] [left=of B0] {};
  \node (B2)[draw,circle,fill=green!50,"\footnotesize{$4N+3$}" below] [left=of B1] {};
  \node (B3) [left=of B2] {};
  \node [draw,circle,fill=red!50,"\footnotesize{$N-1$}" below] (b3) [right=of b2] {};
   \node [draw,circle,fill=red!50,"\footnotesize{$N$}" below] (b4) [right=of b3] {};

\draw[decorate sep={1mm}{5mm},fill] (6.5,-1.25) -- (7.5,-1.25);
\draw[decorate sep={1mm}{4mm},fill] (13.25,-1.25) -- (14.25,-1.25);

  \begin{scope}[>=Latex]

    \draw[-> , thick]  (a1) edge  (a3);
   \draw[-> , thick]  (a3) edge  (a4);
    \draw[-> , thick]  (a4) edge  (a5);
    \draw[-> , thick]  (a6) edge  (a5);

     \draw[-> , thick]  (b1) edge  (b2);
   \draw[-> , thick]  (b2) edge  (b3);
   \draw[-> , thick]  (b3) edge  (b4);

 \draw[-> , thick]  (b2) edge  (a1);
   \draw[-> , thick]  (a4) edge  (b2);

 \draw[-> , thick]  (b3) edge  (a3);
   \draw[-> , thick]  (b3) edge  (c2);
    \draw[-> , thick]  (a5) edge  (b3);

     \draw[-> , thick]  (b4) edge  (dd1);
   \draw[-> , thick]  (b4) edge  (a4);
   \draw[-> , thick]  (b4) edge  (a4);
\draw[-> , thick]  (b4) edge  (a6);

\draw[-> , thick]  (c2) edge  (a4);
\draw[-> , thick]  (c2) edge[bend right=5]  (dd1);
\draw[-> , thick]  (b4) edge  (a4);
\draw[-> , thick]  (a6) edge  (c2);
    
   \draw[-> , thick]  (dd1) edge  (a5); 
   \draw[-> , thick]  (b4) edge  (a4);

\draw[-> , thick]  (dd1) edge  (a6); 

\draw[-> , thick]  (a3) edge  (b1); 

\draw[-> , thick]  (B0) edge  (b0); 
\draw[-> , thick]  (B1) edge  (B0);
\draw[-> , thick]  (B2) edge  (B1); 
\draw[-> , thick]  (B3) edge  (B2);

\draw[-> , thick]  (b0) edge  (A0); 
\draw[-> , thick]  (a0) edge  (B0);
\draw[-> , thick]  (B0) edge  (A1); 
\draw[-> , thick]  (A0) edge  (B1);
\draw[-> , thick]  (B1) edge  (A2); 
\draw[-> , thick]  (A1) edge  (B2);
\draw[-> , thick]  (B2) edge  (A3); 
\draw[-> , thick]  (A2) edge  (B3);

\draw[-> , thick]  (A3) edge  (A2); 
\draw[-> , thick]  (A2) edge  (A1); 
\draw[-> , thick]  (A1) edge  (A0); 
\draw[-> , thick]  (A0) edge  (a0);

    \end{scope}

\end{tikzpicture}
}
\end{center}
 \caption{Quiver, $Q_{A_{2N}}$ associated to deformation of type $A_{2N}$ cluster map \cite{grab}. The red vertices indicate the nodes that already existed in $Q_{A_{4}}$ while the green vertices represent those newly included through local expansion. }
 \label{typeA2N}
\end{figure}
By applying this expansion recursively, one can build class of quiver $Q_{A_{2N}}$ (represented in Figure \ref{typeA2N}), together with associated family of parametric cluster map $\psi_{A_{2N}}$. As a result, Laurent property of the deformed type $A_{2N}$ maps is restored by defining it on the cluster algebra of higher rank.

In this paper, we follow a similar approach used in \cite{grab} to construct a family of quivers that yields a two-parameter cluster map associated with the deformation of a type $D_{2N}$ cluster map.  The paper is outlined as follows. 

In Section \ref{s:Preliminaries1}, we begin by reviewing a brief background of cluster algebras and consider the cluster algebra of type $D_{4}$ to introduce features such as  Laurent property, mutation periodicity and Zamolodchikov periodicity.  We then review the discrete integrable system in the cluster algebra. 

In Section \ref{s:DeformD4}, we review the integrable deformation of the type $D_{4}$ cluster map from  \cite{hkm24}. Under the parameter conditions for integrability, the map reduces to two distinct parametric Liouville integrable maps on the plane, each preserving a different first integral. Both maps admit Laurentifications, which are inequivalent to each other, that lift to cluster algebras of rank $8$. 
The two quivers share the same configuration for mutable nodes, but the frozen nodes are connected differently. 

In Section  \ref{deformedtypeD6} we examine the deformation of the cluster map associated with type $D_{6}$ and show that it admits Laurentification, which lifts to a cluster algebra of rank 12 with quiver consisting of two frozen nodes. Using the max-plus form of the exchange relations, we determine the degree growth and compute the \textit{algebraic entropy}, a standard criterion for integrability.

Subsequently, in Section \ref{S:localexpand}, we compare quiver induced in type $D_{6}$ case with one of the extended quiver arising from the deformation in cluster algebra of type $D_{4}$ and show that they are related through the insertion of specific subquiver in the form of local expansion. Iterating this expansion yields a quiver with $4N$ nodes with two frozen nodes, which corresponds to the Laurentification of the deformed type $D_{2N}$ map.

In section \ref{S: D2Nmap}, we prove that the extended quiver, constructed in the previous section, is mutation periodic up to a specific permutation of labelling of nodes, thereby  defining the cluster map associated with the deformed type $D_{2N}$ map. We further show that, for all $N\geq 1$, degree growth of the map is quadratic.

\section{Preliminaries}\label{s:Preliminaries1}

Cluster algebra of rank $n$ is generated by distinguished generators, known as \textit{cluster variables}, induced by applying mutation recursively starting from initial seed, $(\vb{x},Q)$ consisting of a set of cluster variables $\vb{x} = (x_{1},\dots, x_{n})$ and quiver $Q$, that is, finite directed graph with $n$ vertices, $Q$, which do not does not contain any loops (1-cycles) or 2-cycles. The mutation $\mu_{k}$ applies to each in the seed separately, $\mu_{k}(\vb{x},Q) = (\mu_{k}(\vb{x}),\mu_{k}(Q)) = (\vb{x}',Q')$, following individual transformation rules. Let us first consider \textit{quiver mutation}.

\textit{Quiver mutation}: The mutation $\mu_{k}(Q)$ at the node $k$ gives rises to new quiver $Q'$ by taking  following rule as shown below, 
\begin{enumerate}

\item For each path  $i \xrightarrow{r} k \xrightarrow{s} j $ in $Q$, multiply  the number of incoming and outgoing arrows at the node $k$ and insert the multiplied number of edges between $i$ and $j$,

\begin{center}
\resizebox{0.5\textwidth}{!}{%
 \begin{tikzpicture}[every circle node/.style={draw,scale=0.6,thick},node distance=15mm]
 
\draw [-{Latex[length=3mm]}] (4.5,0) -- (5.5,0) node[midway,sloped,above] {$\mu_{k}$};
    \node [draw,circle,fill=red!50,"$i$"below] (1) at (0,0) {};
  \node [draw,circle,fill=blue!50,"$k$" below] (2)[right=of 1] {};
   \node [draw,circle,fill=red!50,"$j$"below] (3) [right=of 2] {};
  
  \begin{scope}[>=Latex]
  
    \draw[-> , thick]  (1) edge["r"] (2);
    \draw[-, thick]  (2) edge["s"] (3);

    \end{scope}

  \node [draw,circle,fill=red!50,"$i$" below] (1) at (7,0) {};
  \node [draw,circle,fill=blue!50,"$k$"below ] (2)[right=of 1] {};
   \node [draw,circle,fill=red!50,"$j$" below] (3) [right=of 2] {};
  
  \begin{scope}[>=Latex]
  
    \draw[-> , thick,]  (1) edge["r"] (2);
    \draw[->, thick]  (2) edge["s"] (3);    
    \draw[->, thick]  (1) edge["$r\cdot s$",bend left=35] (3);  
    
   \end{scope}  
\end{tikzpicture}
}
\end{center}

\item Reverse edge  which are incident to node $k$

\item Remove any 2-cycles formed by step 1.

\end{enumerate}

The quiver with $n$ nodes can be encoded into $n\times n$ skew-symmetric matrix, $B\in \Mat(\mathbb{Z}) $ in which each entry $b_{ij}$ represents the number of edges from $i$ to $j$ i.e.
\begin{center}
\resizebox{0.23\textwidth}{!}{%
 \begin{tikzpicture}[every circle node/.style={draw,scale=0.6,thick},node distance=15mm]

    \node [draw,circle,fill=red!50,"$i$"] (1) at (0,0) {};
  \node [draw,circle,fill=red!50,"$j$" ] (2)[right=of 1] {};
  
  \node(a) at (-1,0) {$b_{ij}=$};
  
  \begin{scope}[>=Latex]
  
    \draw[-> , thick]  (1) edge (2);
 \end{scope}

\end{tikzpicture}
}
\end{center}
The corresponding matrix is known as \textit{exchange matrix}. The quiver mutation  $\mu_{k}$ at the node $k$, can be formulated by using the entries of the exchange matrix $B$ as shown below. 

\textit{Matrix mutation} $\mu_{k}(B)$: The new exchange matrix $B' = \mu_{k}(B)$  is given by 

  \begin{equation}\label{matrixmu}
       (\mu_{k}(B))_{ij} = (B')_{ij}= b'_{ij} =
        \begin{cases}
            -b_{ij} & \text{if} \ i= k \ \text{or} \ j=k  \\ 
            b_{ij} + \frac{1}{2} \qty(\abs{b_{ik}}b_{kj} + b_{ik}\abs{b_{kj}})& \text{otherwise}  \\ 
        \end{cases}
    \end{equation} 

Together with quiver (or matrix mutation), cluster variable $x_{k}$ in the cluster $\vb{x}$ is updated to new variable which is given by the equation below. 

 \textit{Cluster mutation} $\mu_{k}(\vb{x})$: The \textit{cluster mutation} of $\vb{x}$ in direction $k$, $\mu_{k}(\vb{x}) = \qty(x_{1},\dots,x_{k-1},x'_{k},\dots x_{n})$, give rises to new cluster variable $x'_{k}$ which is given by following expression,
    \begin{equation}\label{mu1}
    \mu_{k}(x_{k}) =x'_{k} =  \dfrac{1}{x_{k}}\qty( \prod^{n}\limits_{\substack{j=1 \\ b_{jk} > 0}}x^{b_{jk}}  + \prod^{n}\limits_{\substack{j=1 \\ b_{jk} < 0}}x^{-b_{jk}} )  
\end{equation}
This expression is known as a (coefficient free) \textit{exchange relation}. Note that the mutation is an involution such that the quiver remains unchanged when the same mutation is applied twice in the same direction i.e. $\mu_{k}\cdot \mu_{k} (\vb{x},Q)= \mu^{2}_{k}(\vb{x},Q) = (\vb{x},Q)$. The sequence of mutations starting from the initial seed $(\vb{x}_{0},B_{0})$,
\begin{align*}
(\vb{x}_{0},B_{0}) \xmapsto{\mu_{i_{1}}} (\vb{x}_{1},B_{1}) \xmapsto{\mu_{i_{2}}} (\vb{x}_{2},B_{2}) \xmapsto{\mu_{i_{3}}} \cdots \xmapsto{\mu_{i_{m}}}(\vb{x}_{m},B_{m})
\end{align*}
produces new cluster variables and quiver, resulting in construction of the cluster algebra.
\begin{defn}[Cluster algebra] Let $\mathcal{F}$ be the field of rational functions in $n$ independent variables $x_{1},\dots,x_{n}$ over $\mathbb{C}$ and set $\vb{x} = (x_{1},\dots,x_{n}) \in \mathcal{F}^{n}$. Let $B$ be $n\times n$ skew-symmetrizable matrix. Then cluster algebra of rank $n$,  $\mathcal{A}(\vb{x},B)$, is set of distinguished cluster variables in all of seeds that are reachable by sequence of mutations starting from the initial seed.
\end{defn}

One can extend the notion of cluster algebra by  incorporating non-mutable variables, \textit{frozen variables}, in cluster, together with their corresponding \textit{frozen nodes} in the quiver. As the name suggests, they never mutate. For the precise detail, let us consider the cluster algebra of rank $n$ with initial cluster $\vb{x}= (x_1,x_2,\dots, x_n)$ and initial quiver $Q$ with $n$ mutable nodes. To generalise the structure, we introduce the $m$ frozen variables $\tilde{x}={n+1},\tilde{x}={n+2},\cdots ,\tilde{x}={n+m}$ and corresponding $m$ frozen nodes. These variables are added to cluster forming  extended cluster \textit{extended cluster}  $\tilde{\vb{x}}=(x_{1},x_{2},\dots,x_{n},x_{n+1},\dots,x_{n+m})$ and the quiver is extended to $\tilde{Q}$ with $n+m$ nodes, where $m$ nodes are designated as frozen. Then the extended cluster algebra can be built from the extended initial seed $(\tilde{\vb{x}},\tilde{Q})$  by applying a sequence of mutations 
\begin{equation}
    x_{k}'x_{k} = \alpha_{k}\prod^{n}\limits_{\substack{j=1 \\ b_{jk} > 0}}x_{j}^{b_{jk}}  + \beta_{k}\prod^{n}\limits_{\substack{j=1 \\ b_{jk} < 0}}x_{j}^{-b_{jk}} 
\end{equation}
where the coefficients are 
\begin{equation}\label{coeffvar}
    \alpha_{k} = \prod^{n+m}\limits_{\substack{j=n+1 \\ b_{j,k} > 0}}x_{j}^{b_{j,k}}, \quad \beta_{k} = \prod^{n+m}\limits_{\substack{j=n+1 \\ b_{j,k} < 0}}x_{j}^{-b_{j,k}}.
\end{equation}
Note that mutation rule for the extended quiver is same as previously given. We refer to these more general cluster algebras as \textit{cluster algebras of geometric type}.

Let us consider the example of coefficient free cluster algebra. 

\begin{exmp}\label{TypeA2ex} Begin with the initial seed $(\vb{x},B)$, where $\vb{x} = (x_{1},x_{2})$ and the exchange matrix  
    \begin{align*}
    B = \mqty(0 & 1 \\ -1 & 0)
    \end{align*}
The mutation in the direction 1, $\mu_{1}(\vb{x},B) = (\vb{x}',B')$, where $\mqty(x_{1}',x_{2}')$
\begin{align*}
\mqty(x_{1}',x_{2}') = \left( \frac{1 + x_{2}}{x_{1}},x_2 \right), \quad \  B' = \mqty(0 & -1 \\ 1 & 0 ) = - B
\end{align*}
Since the mutation returns the initial seed if we apply the mutation in the same direction as previously, the next mutation should be in direction 2. The mutation $\mu_{2}(\vb{x}') =\mqty(x_{1}'',x_{2}'')$, 
\begin{align*}
    &x_{1}'' = x_{1}' \\ 
    &  x_{2}''= \frac{1 + x_{1}'}{x_{2}'} = \frac{1 + \left(  \frac{1 + x_{2}}{x_{1}} \right)}{x_2} = \frac{1 + x_1 + x_2}{x_1 x_2}
\end{align*}
and 
\begin{align*}
    \mu_{2}(B') = B
\end{align*}
Applying mutation in sequence yields the following cluster variables, 
\begin{center}
\resizebox{1\textwidth}{!}{%
 \begin{tikzpicture}[every circle node/.style={draw,scale=0.6,thick},node distance=10mm]

 \draw [-{Stealth[width=2mm,length=1.2mm]}] (1.7,0) -- (2.5,0) node[midway,sloped,above] {$\mu_{1}$};  
 
  \node [draw,circle,fill=red!50,"\footnotesize{$1$}"] (1) at (0,0) {};
  \node [draw,circle,fill=green!50,"\footnotesize{$2$}"] (2) [right=of 1] {};
  \node (1a) at (0.5,-0.3) {$(x_{1},x_{2})$};

  \begin{scope}[>=Latex]
\draw[-> , thick]  (1) edge (2);  
\end{scope}

 \draw [-{Stealth[width=2mm,length=1.2mm]}] (4.7,0) -- (5.5,0) node[midway,sloped,above] {$\mu_{2}$};

\node [draw,circle,fill=red!50,"\footnotesize{$1$}"] (1) at (3,0) {};
  \node [draw,circle,fill=green!50,"\footnotesize{$2$}"] (2) [right=of 1] {};
  \node (1a) at (3.5,-0.5) { $\qty(\frac{1+x_{2}}{x_{1}},x_{2})$};

 \begin{scope}[>=Latex]
\draw[-> , thick]  (2) edge (1);  
\end{scope}

 \draw [-{Stealth[width=2mm,length=1.2mm]}] (7.7,0) -- (8.5,0) node[midway,sloped,above] {$\mu_{1}$};

\node [draw,circle,fill=red!50,"\footnotesize{$1$}"] (1) at (6,0) {};
  \node [draw,circle,fill=green!50,"\footnotesize{$2$}"] (2) [right=of 1] {};
  \node (1a) at (6.5,-0.5) {$\qty(\frac{1+x_{2}}{x_{1}},\frac{1+x_{1} + x_{2}}{x_{1}x_{2}})$};

 \begin{scope}[>=Latex]
\draw[-> , thick]  (1) edge (2);  
\end{scope}

 \draw [-{Stealth[width=2mm,length=1.2mm]}] (10.7,0) -- (11.5,0) node[midway,sloped,above] {$\mu_{2}$};

\node [draw,circle,fill=red!50,"\footnotesize{$1$}"] (1) at (9,0) {};
  \node [draw,circle,fill=green!50,"\footnotesize{$2$}"] (2) [right=of 1] {};
  \node (1a) at (9.5,-0.5) {\small$\qty(\frac{1+x_1}{x_2},\frac{1+x_{1} + x_{2}}{x_{1}x_{2}})$};

 \begin{scope}[>=Latex]
\draw[-> , thick]  (2) edge (1);  
\end{scope}

\node [draw,circle,fill=red!50,"\footnotesize{$1$}"] (1) at (12,0) {};
  \node [draw,circle,fill=green!50,"\footnotesize{$2$}"] (2) [right=of 1] {};
  \node (1a) at (12.5,-0.5) {$\qty(\frac{1+x_1}{x_2}, x_1)$};
  
 \begin{scope}[>=Latex]
\draw[-> , thick]  (1) edge (2);  
\end{scope}

\draw [-{Stealth[width=2mm,length=1.2mm]}] (0.5,-1.5) -- (0.5,-1) node[midway,sloped," $\mu_{2}$ " ,left] {};
 
  \draw [-{Stealth[width=2mm,length=1.2mm]}] (2.5,-2) -- (1.7,-2) node[midway,sloped,above] {$\mu_{1}$};

\node [draw,circle,fill=red!50,"\footnotesize{$1$}"] (1) at (0,-2) {};
  \node [draw,circle,fill=green!50,"\footnotesize{$2$}"] (2) [right=of 1] {};
  \node (1a) at (0.5,-2.5) {$\qty(x_{1},\frac{1+x_{1}}{x_{2}})$};
   
 \begin{scope}[>=Latex]
\draw[-> , thick]  (1) edge (2);  
\end{scope}

\draw [-{Stealth[width=2mm,length=1.2mm]}] (5.5,-2) -- (4.7,-2) node[midway,sloped,above] {$\mu_{2}$};

\node [draw,circle,fill=red!50,"\footnotesize{$1$}"] (1) at (3,-2) {};
  \node [draw,circle,fill=green!50,"\footnotesize{$2$}"] (2) [right=of 1] {};
  \node (1a) at (3.5,-2.5) {\small $\qty(\frac{1+x_1+x_2}{x_1x_2},\frac{1+x_{1}}{x_{2}})$};
  
 \begin{scope}[>=Latex]
\draw[-> , thick]  (1) edge (2);  
\end{scope}

 \draw [-{Stealth[width=2mm,length=1.2mm]}] (8.5,-2) -- (7.7,-2) node[midway,sloped,above] {$\mu_{1}$};

\node [draw,circle,fill=red!50,"\footnotesize{$1$}"] (1) at (6,-2) {};
  \node [draw,circle,fill=green!50,"\footnotesize{$2$}"] (2) [right=of 1] {};
  \node (1a) at (6.5,-2.5) {\small$\qty(\frac{1+x_1+x_2}{x_1x_2},\frac{1+x_{2}}{x_{1}})$};
  
   \begin{scope}[>=Latex]
\draw[-> , thick]  (1) edge (2);  
\end{scope}

 \draw [-{Stealth[width=2mm,length=1.2mm]}] (11.5,-2) -- (10.7,-2) node[midway,sloped,above] {$\mu_{2}$};

 \node [draw,circle,fill=red!50,"\footnotesize{$1$}"] (1) at (9,-2) {};
  \node [draw,circle,fill=green!50,"\footnotesize{$2$}"] (2) [right=of 1] {};
  \node (1a) at (9.5,-2.5) {$\qty(x_{2},\frac{1+x_{2}}{x_{1}})$};
  
 \begin{scope}[>=Latex]
\draw[-> , thick]  (1) edge (2);  
\end{scope}

 \draw [-{Stealth[width=2mm,length=1.2mm]}] (12.5,-1) -- (12.5,-1.5) node[midway,sloped,"$\mu_{1}$", right] {};

 \node [draw,circle,fill=red!50,"\footnotesize{$1$}"] (1) at (12,-2) {};
  \node [draw,circle,fill=green!50,"\footnotesize{$2$}"] (2) [right=of 1] {};
  \node (1a) at (12.5,-2.5) {$\displaystyle \qty(x_2,x_1)$};
  
 \begin{scope}[>=Latex]
\draw[-> , thick]  (2) edge (1);  
\end{scope}

\end{tikzpicture}
}
\end{center}%
One can see that the seed returns to the initial seed after a sequence of 10 mutations. This means that no new cluster variables, except the variables above, can be generated by mutations in any directions. Thus, from the sequence of mutations above, we see that the distinct cluster variables are given by 
$$\text{Distinct cluster variables} = \qty{x_{1},x_{2},\dfrac{1+x_{2}}{x_{1}}, \dfrac{1+x_{1}+x_{2}}{x_{1}x_{2}}, \dfrac{1+x_{1}}{x_{2}}} $$
Hence associated cluster algebra is given by 
\begin{align*}
    \mathcal{A}(\vb{x}) = \mathbb{C}\qty[x_{1},x_{2},\frac{1 + x_{2}}{x_{1}}, \frac{1 + x_{1}}{x_{2}},\frac{1 + x_1 + x_2}{x_1 x_2}]
\end{align*}
\end{exmp}
In the example above, we observe that three distinct phenomenon arise. Each of these corresponds to features of the cluster algebra and  play an important role in the main result of this paper. The first phenomenon is that every generated cluster variables are expressed in Laurent polynomials in initial variables $x_{i}$ for $i=1,\dots n$.  This is known as the \textit{Laurent phenomenon}, stated as follows.
\begin{thm}[Laurent phenomenon (see \cite{FZ2002} )]
Every cluster variable generated by the cluster mutations is in the Laurent polynomial ring in its initial cluster variables. 
\end{thm}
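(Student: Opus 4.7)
The plan is to prove this by induction on the length $\ell$ of the mutation sequence $\mu_{i_\ell}\cdots\mu_{i_1}$ applied to the initial seed $(\vb{x},Q)$. The base case $\ell=0$ is immediate, and the case $\ell=1$ is direct from the exchange relation \eqref{mu1}, which by construction places every newly produced cluster variable in the Laurent polynomial ring $\mathbb{Z}[x_1^{\pm 1},\ldots,x_n^{\pm 1}]$. The difficulty begins at $\ell\geq 2$: to express a new variable back in the original cluster $\vb{x}$, one must substitute through a chain of exchange relations, and the denominators that accumulate are not \emph{a priori} guaranteed to cancel.

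To overcome this, I would follow the Caterpillar Lemma strategy of Fomin and Zelevinsky. The crucial local step is the following: if $(\vb{x},Q)$, $(\vb{x}',Q')=\mu_i(\vb{x},Q)$, and $(\vb{x}'',Q'')=\mu_j(\vb{x}',Q')$ are three consecutive seeds with $i\ne j$, then every coordinate of $\vb{x}''$ is a Laurent polynomial in $\vb{x}$. The only nontrivial coordinate is $x''_j$, whose exchange relation in $\vb{x}'$ involves $x'_i$; after substituting $x'_i=(M_+ + M_-)/x_i$ from the first mutation, the only possibly problematic denominator is a power of $x_i$. One then verifies by inspection of \eqref{matrixmu} that the numerator is divisible by that power, exploiting the rigid subtraction-free form of the exchange relation and the prescribed way in which edges at node $i$ are reversed under $\mu_i$.

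Once this local statement is in hand, the theorem follows by propagating Laurentness along the path of mutations connecting the initial seed to any other seed: a cluster variable at distance $\ell+1$ is, by the local step, a Laurent polynomial in some cluster at distance $\ell$, which in turn is Laurent in $\vb{x}$ by the inductive hypothesis, and composition of Laurent polynomials is Laurent. The main obstacle, in my view, is the divisibility verification in the two-step case: it must hold uniformly across all sign patterns of the entries of $B$, and this is precisely where the specific structure of the matrix mutation formula \eqref{matrixmu} and the skew-symmetry of $B$ do the real combinatorial work.
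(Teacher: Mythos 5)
The paper does not prove this statement at all: it is quoted as a known result with a citation to \cite{FZ2002}, so there is no ``paper proof'' to compare against, and your task is really to reconstruct the Fomin--Zelevinsky argument. Your local two-step lemma is a genuine ingredient of that argument, but your proposal has a fatal gap in the propagation step.

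The sentence ``composition of Laurent polynomials is Laurent'' is false, and it is false in exactly the way that makes the Laurent phenomenon nontrivial. Substituting the Laurent polynomial $1+x_2$ for $y$ in the Laurent polynomial $1/y$ produces $1/(1+x_2)$, which is not Laurent; more to the point, $x_1'=(1+x_2)/x_1$ is Laurent in $\vb{x}$, yet a later exchange relation divides by $x_1'$, and nothing in your scheme explains why the resulting denominator $1+x_2$ cancels. If composition of Laurent maps preserved Laurentness, the whole theorem would follow trivially from the single-mutation case and your two-step lemma would be unnecessary. The correct induction (the Caterpillar Lemma of \cite{FZ2002}) never passes through intermediate clusters in this way: it proves Laurentness of every seed \emph{with respect to the fixed initial cluster} directly, by showing that the candidate denominator of a new variable, written over $\mathbb{Z}[x_1^{\pm},\dots,x_n^{\pm}]$, is a monomial. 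That step requires (i) checking seeds at distance up to three from the head of the caterpillar, not two, and (ii) a coprimality argument: the exchange Laurent polynomials attached to consecutive mutation directions are shown to be coprime in the unique factorization domain $\mathbb{Z}[x_1^{\pm},\dots,x_n^{\pm}]$, so that two independent Laurent expressions for the same variable force the denominator to divide a monomial. Neither the distance-three verification nor the coprimality/UFD mechanism appears in your outline, and without them the induction does not close. Your divisibility check in the two-step case is fine as far as it goes, but it is the base of the induction, not the engine.
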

The second phenomenon is that the quiver returns to its initial state in two ways: one is by applying mutation $\mu_{1}$ followed by the $\mu_{2}$ (i.e. $\mu_{2} \cdot \mu_{1}$); the other is by applying $\mu_{1}$ and then perform permutation of the two nodes (i.e. $(1,2)\cdot \mu_{1}$). Such phenomenon is referred as \textit{mutation periodic} up to permutation, stated as following. 
\begin{defn}[Mutation periodic] Let $Q$ be a quiver with $n$ vertices. Then the quiver is \textit{mutation periodic} up to permutation $\rho$ if there exists a sequence of quiver mutations which is equivalent to applying permutation $\rho$ of the labelling of the quiver $Q$ i.e. 
\begin{align*}
            \mu_{i_{m}}\mu_{i_{m-1}}\cdots \mu_{i_{2}}\mu_{i_{1}}(Q) = \rho(Q),
        \end{align*}
where $\rho:(1,2,\dots, N) \to (\rho(1),\rho(2),\dots, \rho(N))$ is corresponding permutation. 
\end{defn}
This notion leads to definition of certain birational map between cluster seeds,
\begin{defn}[Cluster map] Let $(\vb{x},Q)$ be an initial seed with an initial cluster $\vb{x}=(x_{1},\dots, x_{n}) $ and mutation periodic quiver $Q$ i.e.  $\mu_{i_{m}}\cdots\mu_{i_{1}}(Q) = \rho(Q)$ for some $m > 0$ and some permutation $\rho$. Then 
\begin{align*}
	\varphi = \rho^{-1}\mu_{i_{m}}\mu_{i_{m-1}}\cdots \mu_{i_{2}}\mu_{i_{1}}
\end{align*}
is a birational map such that 
\begin{equation}
        \begin{array}{lcrcl}
   & &   \varphi \ : \ \C^n&\to&\C^{n}\\
    & &\ \vb{x}=(x_1,\dots,x_{n})&\mapsto&\vb{x}'=(x_{1}',\dots,x_{n}') \; 
  \end{array}
    \end{equation}
where $\vb{x}'=(x_{1}',\dots,x_{n}')$ is new cluster given by $\varphi(\vb{x})$. Such a map is called a \textit{cluster map}. 
\end{defn}

This implies that if there exists composition of mutations that restores the  associated quiver (or exchange matrix) after permutation of nodes, then it can be regarded as birational map which maps between cluster variables, which is referred to as \textit{cluster map}. This enables the description of discrete dynamical systems within the framework of cluster algebra, which leads to the study of the area of discrete integrable systems.

Finally, the third (and last) phenomenon, which can be obeserved in Example \ref{TypeA2ex}, is that the seed returns to the initial seed under the repeated application of the composition of mutations $\mu_{2}\mu_{1}$. In particular, the cluster algebra associated with finite Dynkin diagram exhibits a remarkable periodicity pheomenon, which is referred to as \textit{Zamolodchikov periodicity}. 

\begin{thm}[Zamolodchikov periodicity \cite{fz2,fz4,fr} ]\label{zam} Let $\mathcal{A}(\vb{x}, Q)$ be a cluster algebra together a quiver $Q$ corresponding to the finite Dynkin diagram. Let $h$ be the Coxeter number of the corresponding finite Dynkin diagram and let $\varphi$ be the composition of cluster mutations that satisfy $\varphi(Q)= Q$. Then there are two cases of periodicity, such as 
\begin{itemize}
    \item period $\frac{h+2}{2}$ 
    \begin{equation}
        \mu^{\frac{h+2}{2}}(\vb{x},Q) = (\vb{x},Q) 
    \end{equation}
    if associated roots $\alpha_{i}$ satisfies $w_{0}(\alpha_{i}) = -\alpha_{i}$ where $\omega_{0}$ is the longest element of the corresponding Weyl group,
    \item period $h+2$,
    \begin{equation}
        \mu^{h+2}(\vb{x},Q) = (\vb{x},Q) 
    \end{equation}
    otherwise.
\end{itemize}

\end{thm}

By combining all three phenomenons we observed above, we can define a cluster map of the Dynkin diagram that preserves the associated quiver and exhibits Zamolodchikov periodicity.  
\begin{equation}
\varphi = (1,2)\cdot \mu_{1} : (x_{1},x_{2}) \to \qty( x_{2}, \dfrac{1+x_{2}}{x_{1}}), \ \text{or} \quad \varphi =  \mu_{2}\mu_{1} : (x_{1},x_{2}) \to \qty(\dfrac{1+x_{2}}{x_{1}}, \dfrac{1+x_{1} + x_{2}}{x_{1}x_{2}})
\end{equation}
satisfying $\varphi^{5} = \id$. We remark that although the two maps may appear distinct, but they produce same set of cluster variables. 

The iteration of cluster maps gives rise to  a discrete dynamical system, which allows one to adopt the notion of Liouville integrable map, introduced in \cite{Shigeru}, by using  the \textit{log-canonical Poisson bracket and (pre) symplectic form}. Both of these structures are compatible with cluster algebra in the sense that they retain their structure under the cluster mutations, i.e.
 \begin{equation}
        \begin{array}{lcrcl}
         & &  (\vb{x},B) &\mapsto& \mu_{k}(\vb{x},B) =(\vb{x}',B') = (x_{i}') \\[1.5em]
  & &  \qty{x_{i},x_{j}} = P_{ij}x_{i}x_{j} &\to& \qty{x'_{i},x'_{j}} = P'_{ij}x'_{i}x'_{j}   \\
   & &  \om = \sum_{i<j} \dfrac{b_{ij}}{x_{i}x_{j}} \dd x_{i} \w \dd x_{j}  & \to&\om' = \sum_{i<j} \dfrac{b'_{ij}}{x'_{i}x'_{j}}\dd x_{i} \w \dd x_{j} 
  \end{array}
    \end{equation}
where $P = (P_{ij})$ is a skew-symmetric matrix, which is directly proportional to inverse of an exchange matrix  $B=(b_{ij})$. The two-form $\om$ is referred to as a \textit{presymplectic form} when $B$ is degenerate, and \textit{symplectic form} otherwise.  Since  the cluster map leaves exchange matrix $B$ invariant, the notion of cluster maps can be extended to poisson (or symplectic) maps. Hence we have following 
\begin{defn} 
Let $\mathcal{A}(\vb{x}, B)$ be a cluster algebra of rank $N$ and let $P$ be skew-symmetric matrix of rank $2r<N$ whose entries are coefficient of the \textit{log-canonical Poisson bracket}, $\qty{x_{i},x_{j}} = P_{ij} x_{i}x_{j}$. Then the $\varphi$ is \textit{ integrable} if there exists 
        \begin{enumerate}
        \item {\small $N-2r$ Casimir functions $\mathcal{C}_{k}$, i.e. $\varphi^{*}(\mathcal{C}_{k}) = \mathcal{C}_{k} $ satisfying $\qty{\mathcal{C}_{k},f(\vb{x})} =0 $ for all function $f(\vb{x})$ }
        \item {\small $r$ first integrals $h_{j}$, $j=1,\dots, r$, $(\varphi^{*}(h_{j}) = h_{j}$ such that $ \qty{h_{i},h_{j}} = 0$}
        \end{enumerate}

\end{defn}
For the non-degenerate case with $\rank(B) = n$, the coefficient matrix of the symplectic form $\omega$ is the inverse of the Poisson matrix,  i.e. $P= B^{-1}$. In this setting, the cluster map is both a Poisson and a symplectic map. On the other hand, in the degenerate case $\rank(B) < n$, the map is Poisson but not symplectic. A Poisson manifold equipped with a degenerate Poisson bracket admits a  foliation by symplectic leaves, which enables one to study the dynamical system on each leaf where the symplectic structure is non-degenerate. In this context, there is a canonical way to reduce a Poisson map to a symplectic map via the symplectic form. The following result is stated as Theorem 2.6 in \cite{FH2013}.

\begin{thm}[Symplectic reduction] \label{symreduction}
    In the degenerate case, $\rank B = 2r < N$, there exists rational map $\pi$,
    \begin{equation}
        \begin{array}{lcrcl}
    & &\pi : \C^N&\to&\C^{2r}\\
    & &\ \vb{x}=(x_1,\dots,x_{N})&\mapsto&\vb{y}=(y_{1},\dots,y_{2r}) \; 
  \end{array}
    \end{equation}
equivalent to
    \begin{equation}\label{reducy}
\begin{split}
    &y_{j}= \vb{x}^{\vb{v^{(j)}}} = \prod_{i} x_{i}^{v^{(j)}_{i}}, \quad \vb{v^{(j)}}\in \Ima B
    \end{split}
\end{equation}
which reduces the $\varphi$ to the symplectic map $\hat{\varphi}$ satisfying the following relation, 
    \begin{align*}
        \pi \cdot \varphi = \hat{\varphi} \cdot \pi
    \end{align*}
depicted by 
    \begin{center}
    \begin{tikzcd}
\mathbb{C}^{N} \arrow[r, "\varphi"] \arrow[d, "\pi"]
& \mathbb{C}^{N} \arrow[d, "\pi"] \\
\mathbb{C}^{2r} \arrow[r,  "\hat{\varphi}"]
&  \mathbb{C}^{2r}
\end{tikzcd}
    \end{center}
Let  $M$ be a matrix  whose first $2r$ rows are $\vb{v^{(i)}} \in  \Ima B$ for $1 \leq i \leq 2r$, and whose remaining  $N-2r$ rows are $\vb{u^{(j)}} \in  \ker B$ for $N-r \leq j \leq N$. The symplectic form $\hat{\omega}$ associated with $\hat{\varphi}$ is expressed by following
\begin{equation}
    \hat{\omega} = \sum\limits_{i<j}\frac{\hat{b}_{ij}}{y_{i}y_{j}}\dd y_{i} \wedge \dd y_{j}.
\end{equation}
The coefficients $\hat{b}_{ij}$ are entries of the new exchange matrix which satisfies 
\begin{equation}\label{sympmatrix}
M^{-T} BM^{-1} = \mqty(\hat{B} & 0 \\ 0& 0),
\end{equation}
and  the pull-back relation $\pi^* \hat{\omega} = \omega$ holds.
\end{thm}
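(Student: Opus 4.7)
The plan is to construct $\pi$ explicitly as a monomial map, identify the reduced coordinates with (a basis of) the Fomin--Zelevinsky $Y$-variables, and transport both the cluster map and the two-form through a $d\log$ change of variables. First I would fix a $\Q$-basis $\vb{v}^{(1)},\ldots,\vb{v}^{(2r)}$ of $\Ima B$ and a $\Q$-basis $\vb{u}^{(1)},\ldots,\vb{u}^{(N-2r)}$ of $\ker B$; since $B$ is skew-symmetric one has $(\Ima B)^{\perp}=\ker B^{T}=\ker B$, and a dimension count yields $\Q^{N}=\Ima B\oplus\ker B$, so the $N\times N$ matrix $M$ whose rows are these vectors is invertible. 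Setting $y_{j}:=\vb{x}^{\vb{v}^{(j)}}$ for $j\leq 2r$ and $c_{k}:=\vb{x}^{\vb{u}^{(k)}}$ for $k\leq N-2r$ then defines $\pi$ together with Casimir coordinates and gives the log-linear change of variables $d\log\vb{y}=M\,d\log\vb{x}$.

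Next I would show that $\varphi$ descends. For any $\vb{v}=B\vb{w}\in\Ima B$, the monomial
\[
\vb{x}^{\vb{v}}=\prod_{i}x_{i}^{\sum_{j}b_{ij}w_{j}}=\prod_{j}\Bigl(\prod_{i}x_{i}^{b_{ij}}\Bigr)^{w_{j}}=\prod_{j}Y_{j}^{w_{j}}
\]
depends only on the coefficient-free $Y$-variables $Y_{j}:=\prod_{i}x_{i}^{b_{ij}}$, so each $y_{j}$ is a Laurent monomial in the $Y_{j}$. A direct manipulation of the exchange relation \eqref{mu1} together with \eqref{matrixmu} shows that under a single mutation $\mu_{k}$ the new $Y$-variables $Y_{j}'$ are rational functions of $(Y_{1},\ldots,Y_{N})$ alone; iterating along the sequence defining $\varphi=\rho^{-1}\mu_{i_{m}}\cdots\mu_{i_{1}}$, and noting that $\rho$ merely permutes the indices of the preserved $B$, one concludes that $\varphi^{*}(y_{j})$ is a rational function of $\vb{y}=\pi(\vb{x})$ only. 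This defines $\hat{\varphi}\colon\C^{2r}\to\C^{2r}$ satisfying $\pi\cdot\varphi=\hat{\varphi}\cdot\pi$.

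For the symplectic form I would pass to logarithmic differentials. Substituting $d\log\vb{x}=M^{-1}d\log\vb{y}$ into $\omega=\tfrac{1}{2}\sum_{i,j}b_{ij}\,d\log x_{i}\wedge d\log x_{j}$ gives
\[
\omega=\tfrac{1}{2}\sum_{k,l}\bigl(M^{-T}BM^{-1}\bigr)_{kl}\,d\log y_{k}\wedge d\log y_{l}.
\]
The block shape \eqref{sympmatrix} then follows from analysing the columns $\vb{c}^{(k)}$ of $M^{-1}$: the duality $\vb{v}^{(j)}\cdot\vb{c}^{(k)}=\delta_{jk}$ for $j\leq 2r$ forces $\vb{c}^{(k)}\in(\Ima B)^{\perp}=\ker B$ whenever $k>2r$, so $B\vb{c}^{(k)}=0$ and the corresponding rows and columns of $M^{-T}BM^{-1}$ vanish. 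The surviving $2r\times 2r$ block $\hat{B}$ is non-degenerate because $\Ima B\cap\ker B=\{0\}$ for skew-symmetric $B$ (from $\vb{u}^{T}B^{T}B\vb{u}=-\vb{u}^{T}B^{2}\vb{u}$ one sees that $B^{2}\vb{u}=0$ forces $B\vb{u}=0$); hence $\hat{\omega}$ is genuinely symplectic, and the identity $\pi^{*}\hat{\omega}=\omega$ is immediate from the display above.

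The main obstacle I anticipate lies in the descent step: one must verify that the rational expressions obtained for $\varphi^{*}(y_{j})$ after the full composition $\rho^{-1}\mu_{i_{m}}\cdots\mu_{i_{1}}$ close up cleanly in the $y_{j}$, so that $\hat{\varphi}$ is well-defined as a birational morphism of the reduced ambient space rather than only on a dense open subset. This requires careful bookkeeping of the $Y$-pattern mutation formulas and of the action of the permutation $\rho$ on the index set $\{1,\ldots,2r\}$, but the underlying structural fact is standard. The remaining assertions of the theorem reduce to linear algebra once the matrix $M$ has been fixed.
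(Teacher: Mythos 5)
The paper does not prove this statement itself; it is quoted directly from Theorem 2.6 of \cite{FH2013}. Your proposal correctly reconstructs that standard argument — the decomposition $\Q^{N}=\Ima B\oplus\ker B$ from skew-symmetry, descent of $\varphi$ via the monomials $\prod_i x_i^{b_{ij}}$ (the coefficient-free $Y$-pattern), and the block structure of $M^{-T}BM^{-1}$ from the duality between the rows of $M$ and the columns of $M^{-1}$ — so it matches the route taken by the cited source in all essentials.
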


\section{Deformation integrable cluster map associated with type $D_{4}$} \label{s:DeformD4}

In this section, we revisit one of the results in \cite{hkm24}, that is, construction of the two parameter family of integrable map on the plane, which is obtained by deformed mutations in the cluster algebra of type $D_{4}$.

\subsection{Periodic cluster map associated with type $D_{4}$ } 

Let us consider Cartan matrix of Dynkin type $D_{4}$, 
\begin{equation}
    \mqty(2 & -1 & 0 & 0 \\-1 & 2 & -1 & -1 \\ 0 & -1 & 2 & 0 \\ 0 & -1 & 0 & 2)
\end{equation}
The exchange matrix corresponding to the Cartan matrix above is given by 
\begin{equation}
  B= \mqty(0 & -1 & 0 & 0 \\1 & 0 & -1 & -1 \\ 0 & 1 & 0 & 0 \\ 0 & 1 & 0 & 0)
\end{equation}
Starting with setting the initial seed  $\qty(\vb{x},B)$, consisting of initial cluster $\vb{x} = (x_{1},x_{2},x_{3},x_{4})$ and the exchange matrix, we consider the specific sequence of mutations in the cluster algebra $\mathcal{A}((\vb{x},B))$. One can verify that the exchange matrix is mutation periodic under the specific composition of matrix mutations, that is to say, 
\begin{equation}
\mu_{4}\mu_{3}\mu_{2}\mu_{1} (B) = B 
\end{equation}
Under the same sequence of mutations, the seed go through following  
\begin{align*}
(x_{1},x_{2},x_{3},x_{4}) \xmapsto{\mu_{1}} (x_{1}',x_{2},x_{3},x_{4}) \xmapsto{\mu_{2}} (x_{1}',x_{2}',x_{3},x_{4}) \xmapsto{\mu_{3}} (x_{1}',x_{2}',x_{3}',x_{4}) \xmapsto{\mu_{4}} (x_{1}',x_{2}',x_{3}',x_{4}')
\end{align*}
where each variable $x_{i}$ is replaced by a new cluster variable $x_{i}'$ in each mutations which is given by the 
\begin{equation}\label{typeD4:clmu}
\begin{split}
\mu_{1}: (x_{1},x_{2},x_{3},x_{4}) \mapsto (x_{1}',x_{2},x_{3},x_{4}): & \quad x_{1}'x_{1} = 1 + x_{2} \\ 
\mu_{2}: (x_{1}',x_{2},x_{3},x_{4}) \mapsto (x_{1}',x_{2}',x_{3},x_{4}): & \quad x_{2}'x_{2} = 1 + x_{1}' x_{3}x_{4} \\ 
\mu_{3}: (x_{1}',x_{2}',x_{3},x_{4}) \mapsto (x_{1}',x_{2}',x_{3}',x_{4}): & \quad x_{3}'x_{3} = 1 + x_{2}' \\ 
\mu_{4}: (x_{1}',x_{2}',x_{3}',x_{4}) \mapsto (x_{1}',x_{2}',x_{3}',x_{4}'): & \quad x_{4}'x_{4} = 1 + x_{2} '\\ 
\end{split}
\end{equation}
From these facts above,  one can construct corresponding cluster map $\varphi_{D_{4}} = \mu_{4}\mu_{3}\mu_{2}\mu_{1}$
\begin{equation}
\begin{array}{lcrcl}
    \varphi_{D_{4}}&:\bx=(x_1,x_2,x_3,x_4)&\mapsto&\bx'=(x_{1}',x_{2}',x_{3}',x_{4}') \; 
  \end{array}
\end{equation}
defined by the exchange relations shown in \eqref{typeD4:clmu}. Recall that sequence of mutations in cluster algebra of finite type has property of Zamolodchikov periodicity with period $(h+2)/2$ as stated in Theorem \ref{zam}. Since the Coxeter number of type $D_{4}$ is 6, the cluster map is periodic with period $4$, 
\begin{align*}
\varphi^{4}_{D_{4}}(\vb{x}) = \vb{x}
\end{align*}
Furthermore it preserves following presymplectic 2-form 
\begin{equation}\label{psympD4}
\om = \frac{1}{x_{1}x_{2}} \dd x_{1} \w \dd x_{2} + \frac{1}{x_{2}x_{3}} \dd x_{2} \w \dd x_{3} + \frac{1}{x_{2}x_{4}} \dd x_{2} \w \dd x_{4} 
\end{equation}
Note that the presymplectic form is degenerate as the rank of exchange matrix $B$ is 2 which is not full rank, i.e. $\rank(B)=2 < 4 $. Therefore we can perform a sympletic reduction (see Theorem \ref{symreduction}), which reduces the map to symplectic map defined on the plane. We begin with considering the null space, $\ker B$, and image, $\im B$, given by  
\begin{equation}
\ker B = < (1,0,0,1)^{T}, (1,0,1,0)^{T}> , \quad \im B = <(0,1,0,0)^{T}, (-1,0,1,1)^{T} > 
\end{equation}
Using the basis of $\im B$, we define rational map $\pi : \mathbb{C}^{4} \to \mathbb{C}^{2}$,
\begin{equation}
\begin{array}{lcrcl}
    \pi&:&\mathbb{C}^4&\to&\mathbb{C}^2\\
    & &\vb{x}=(x_1,x_2,x_3,x_4)&\mapsto&\vb{u}=(u_1,u_2) \\ 
  \end{array}
\end{equation} 
which projects $\vb{x}$ onto the plane of reduced coordinates, 
\begin{equation}
u_{1} = x_{2},\quad  u_{2} = \frac{x_{3}x_{4}}{x_{1}}
\end{equation}
Through this reduction $\pi$, the map $\varphi_{D_{4}}$ reduces to the 2D symplectic map $\hat{\varphi}_{D_{4}}$, given by 
\begin{equation}\label{phihatmapD4} 
 \begin{array}{lcrcl}
    \hat{\varphi}_{D_{4}}&:&\C^2&\to&\C^2\\
    & &\by=(u_1,u_2)&\mapsto&\left(\dfrac{1 + u_{2} + u_1u_2 }{u_1}, \dfrac{(1+u_{1}+u_{2}+u_{1}u_{2})^2}{u_2u_1^2(1 + u_1)}\right), 
  \end{array}
\end{equation}
which satisfies $\hat{\varphi}\cdot\pi = \pi\cdot\varphi $ and preserves the symplectic form, 
\begin{equation}\label{sympformD4}
\hatom = \dfrac{1}{u_1 u_2}\dd u_{1} \w \dd u_{2}, \quad \pi^{*}(\hatom) = \om.
\end{equation}

To show that $\hat{\varphi}_{D_{4}}$ is Liouville integrable, it is sufficient to find a single function (a first integral) that is invariant under the action of the map. Earlier in this section, we pointed out that the cluster map $\varphi_{D_{4}}$ is periodic with period $4$. Consequently, $\hat{\varphi}_{D_{4}}$ is also periodic, since it is induced from $\varphi_{D_{4}}$ via the projection $\pi$. Using this periodicity, one can construct a following symmetric function.
\begin{equation}\label{firstintD4}
\begin{split}
I(\vb{u}) &= \sum_{i=0}^{3} (\hat{\varphi}_{D_{4}}^{*})^{i} (u_{1}) \\ 
&=   u_1 + u_2 + \frac{ u_1}{u_2} + \frac{ u_2}{u_1} + \frac{1}{u_2} + \frac{1}{u_1} + \frac{ u_2}{u_1^2} + \frac{1}{u_2u_1} + \frac{1}{u_1^2} + \frac{1}{u_2u_1^2}
\end{split}
\end{equation}
which satisifies $\hat{\varphi}_{D_{4}}^{*}(I(\vb{u})) = I(\vb{u})$. 

\begin{rem}
The reason we prefer working on the reduced space is that there are examples which are not integrable in the original space they are defined on, but turns out to be integrable on the reduced space. For further detail please see \cite{FH2013,hone2019cluster}
\end{rem}

In the next section, we adjust the cluster mutations (deformation introduced in \cite{hk}) in \eqref{typeD4:clmu}  and find the condition under which the map remains to be integrable.


\subsection{Deformed map of type $D_{4}$}\label{DefmapD4}
We make a minor modification of the cluster mutations in the cluster map $\varphi_{D_{4}}$ in a way that an arbitrary coefficient is attached to each monomials of right hand side of expressions in \eqref{typeD4:clmu} as shown below. 
\begin{equation}\label{typeD4:dclmu}
\begin{split}
\tilde{\mu}_{1}: (x_{1},x_{2},x_{3},x_{4}) \mapsto (x_{1}',x_{2},x_{3},x_{4}): & \quad x_{1}'x_{1} = b_{1} + a_{1}x_{2} \\ 
\tilde{\mu}_{2}: (x_{1}',x_{2},x_{3},x_{4}) \mapsto (x_{1}',x_{2}',x_{3},x_{4}): & \quad x_{2}'x_{2} = b_{2} + a_{2}x_{1}' x_{3}x_{4} \\ 
\tilde{\mu}_{3}: (x_{1}',x_{2}',x_{3},x_{4}) \mapsto (x_{1}',x_{2}',x_{3}',x_{4}): & \quad x_{3}'x_{3} = b_{3} + a_{3}x_{2}' \\ 
\tilde{\mu}_{4}: (x_{1}',x_{2}',x_{3}',x_{4}) \mapsto (x_{1}',x_{2}',x_{3}',x_{4}'): & \quad x_{4}'x_{4} = b_{4} + a_{4}x_{2} '\\ 
\end{split}
\end{equation}
which we refer this as deformed map of type $D_{4}$, $\phi_{D_{4}} = \tilde{\mu}_{4}\tilde{\mu}_{3}\tilde{\mu}_{2}\tilde{\mu}_{1}$. We can reduce the number of parameters by scaling each cluster variables $ x_{i} \to \lambda_{i} x_{i}$ with an appropriate choice of parameters $(\lambda_{1},\lambda_{2},\lambda_{3},\lambda_{4}) \in (\mathbb{C}^{*})^4$.  which simplifies the process of the calculations in the later stage. The iteration of the deformed map can be written as  
\begin{equation}\label{typeD4:dclmu1}
\begin{split}
 x_{1,n+1}x_{1,n}& = b_{1} + x_{2,n} \\ 
x_{2,n+1}x_{2,n} &= b_{2} + x_{1,n+1} x_{3,n}x_{4,n} \\ 
 x_{3,n+1}x_{3,n} &= b_{3} + x_{2,n+1}\\ 
 x_{4,n+1}x_{4,n} &= b_{4} + x_{2,n+1} '\\ 
\end{split}
\end{equation}
where $n$ denotes the number of iterates $\tilde{\phi}^{n}(\vb{x})$. By the theorem 1.3 in \cite{hk}. Following the same steps as previously, we can project this map to 2D parametric birational map via the rational map $\pi:\mathbb{C}^{4} \to \mathbb{C}^{2}$ as
\begin{equation}
 \begin{array}{lcrcl}
    \hat{\phi}_{D_{4}}&:&\C^2&\to&\C^2\\
    & &\vb{u}=(u_1,u_2)&\mapsto&\left(\dfrac{(b_1 + u_1)u_2 + b_2}{u_1}, \dfrac{(b_4 + u_2)u_1 + b_1u_2 + b_2)((b_3 + u_2)y_1 + b_1u_2 + b_2)}{u_2u_1^2(b_1 + u_1)}\right) \;, 
  \end{array}
\end{equation}
which is intertwined with $\phi_{D_{4}}$ via $\pi$, i.e.  $\hat{\phi}_{D_{4}} \cdot \pi = \pi \cdot \phi_{D_{4}}$, and preserves the symplectic form \eqref{sympformD4}. To verify the Liouville integrability of the map, we first assume that corresponding first integral takes analogous form of the \eqref{firstintD4} such that  
\begin{equation}
\tilde{I}(\vb{u})=   \alpha_{0} u_1 + \alpha_1 u_2 + \frac{\alpha_2 u_1}{u_2} + \frac{\alpha_3 u_2}{u_1} + \frac{\alpha_4}{u_2} + \frac{\alpha_5}{u_1} + \frac{\alpha_6 u_2}{u_1^2} + \frac{\alpha_7}{u_2u_1} + \frac{\alpha_8}{u_1^2} + \frac{\alpha_9}{u_2u_1^2}
\end{equation} 
where we adjusted each monomial in the first integral \eqref{firstintD4} by attaching an arbitrary parameter $\alpha_{i}$. Without loss of generality, we can set $\alpha_{0} = 1$.  Imposing invariance of $\tilde{I}$ under the action of $\hat{\phi}_{D_{4}}$, i.e. (i.e.$\hat{\phi}_{D_{4}}^{*}(\tilde{I}(\vb{u})) = \tilde{I}(\vb{u})$), constrains parameters, yielding  
\begin{equation} 
\tilde{I}(\vb{u}) =  u_1 + u_2 + \frac{u_1}{u_2} + \frac{(b_1 + 1)u_2}{u_1} + \frac{b_3 + b_4 + 1}{u_2} + \frac{b_1 + b_2 + b_3 + b_4 + 1}{u_1} + \frac{b_1u_2}{u_1^2} + \frac{b_3b_4 + b_3 + b_4}{u_1u_2} + \frac{2b_2}{u_1^2} + \frac{b_3b_4}{u_1^2u_2},
\end{equation}
provided that one of the following sets of conditions is satisfied, 
\begin{align*}
(1) &\quad b_{2} = b_{4} = b_{1}b_{3} \\ 
(2) &\quad b_{1} = b_{2} = b_{3}b_{4} \\
(3) &\quad b_{2} = b_{3} = b_{1}b_{4} \\
\end{align*}
This result shows  that the cluster map $\varphi_{D_{4}}$ admits deformations into integrable maps in more than one distinct way. Notice that both the first integral and the deformed map are invariant under the exchange of parameters $b_{3} \leftrightarrow b_{4}$ and original variables $x_{3} \leftrightarrow x_{4}$. As a result, among the three 	cases considered,  integrable maps corresponding to cases (1) and (2) are genuinely distinct.  	
\begin{equation}\label{D41map}
\hat{\phi}_{1}: \quad  \mqty(u_{1} \\ u_{2}) \mapsto \qty(\frac{(b_1+u_{1})u_{2} + b_1b_3}{u_{1}}, \quad  
\frac{\qty[(b_1 + u_1)u_2 + b_1b_3(u_1+1)]\cdot \qty[u_2 + b_{3}]}{u_1^2u_2})
\end{equation}
\begin{equation}\label{D42map}
    \hat{\phi}_{2}: \quad  \mqty(u_{1} \\ u_{2}) \mapsto 
\qty(\frac{(b_3b_4+u_{1})u_{2} + b_3b_4}{u_{1}}, \quad  \frac{\qty[(b_{4} + u_{2})u_{1} + b_3b_4(u_{2} + 1)]\cdot \qty[(b_{3} + u_{2})u_{1} + 
b_3b_4 (u_{2}+1)]}{u_{1}^2u_{2}(b_3b_4+u_{1})})
\end{equation}
Thus, the deformed map preserves both integrability and symplectic structure after deformation, but in the process, it loses the Laurent property, which is an essential feature of cluster algebras. In our previous work \cite{hkm24}, we showed that each map admits \textit{Laurentification} that lifts each deformed map to cluster algebra of rank 8 (including 2 frozen variables), $\mathcal{A}(\tilde{\vb{x}}, Q_{D_{4}}^{(l)})$ endowed with cluster $\tilde{\vb{x}} = (\tx_{1}, \tx_{2},\dots, \tx_{10} )$ and the extended quiver $Q_{D_{4}}^{(l)}$, for $l= \qty{1,2}$  (shown in Figure \ref{DeformedQD4}).
\begin{figure}[h!]
\begin{center}
\resizebox{1\textwidth}{!}{%
 \begin{tikzpicture}[every circle node/.style={draw,scale=0.6,thick},node distance=15mm]
  \node [draw,circle,fill=red!50,"$4$"] (a7) at (0,0) {};
  \node [draw,circle,fill=red!50,"$5$"] (a2) [right=of a7] {};
  \node [draw,circle,fill=red!50,"$6$"] (a5) [right=of a2] {};
  \node [draw,circle,fill=red!50,"$2$" right] (a4) [below right=of a5] {};
  \node [draw,circle,fill=red!50,"$1$" below] (a3) [below left=of a4] {};
  \node [draw,circle,fill=red!50,"$8$" below] (a6) [left=of a3] {};
  \node [draw,circle,fill=red!50,"$3$" below] (a1) [left=of a6] {};
  \node [draw,circle,fill=red!50,"$7$" left]  (a8) [above left=of a1] {};
   \node [draw,circle,fill=blue!50,"$10$" below]  (10) at (2.7,-3.2) {};
   \node [draw,circle,fill=blue!50,"$9$" below]  (9) at (1,-3.2) {};
  
   \node (a) [below=of a6] {(a) $Q_{D_{4}}^{(1)}$};
  
  \begin{scope}[>=Latex]
  
  \draw[-> , thick]  (a2) edge (a1);
   \draw[-> , thick]  (a3) edge (a2);
  \draw[-> , thick]  (a1) edge (a6);
  \draw[-> , thick]  (a5) edge (a1);
  \draw[-> , thick]  (a2) edge (a8);
  \draw[-> , thick] (a7) edge (a2);
    \draw[-> , thick] (a2) edge (a5);
      \draw[-> , thick] (a4) edge (a2);
       \draw[-> , thick] (a4) edge (a2);
       \draw[-> , thick] (a6) edge (a3);
 \draw[-> , thick]  (a3) edge (a5);
 \draw[-> , thick]  (a6) edge (a4);
  \draw[-> , thick]   (a4) edge (a5);
   \draw[-> , thick]   (a7) edge[bend left=35] (a5);
    \draw[-> , thick]  (a5) edge(a6);
    \draw[-> , thick] (a5) edge(a8);
    \draw[-> , thick]  (a8) edge(a6);
    \draw[-> , thick] (a6) edge(a7);
     \draw[-> , thick] (a3) edge(a7);

     \draw[-> , thick] (a7) edge (9);
      \draw[-> , thick] (9) edge(a1);
      \draw[-> , thick] (a4) edge[bend left=40](10);
      \draw[-> , thick] (a7) edge(10);
      \draw[-> , thick] (10) edge(a6);

    \end{scope}
    
 \node [draw,circle,fill=red!50,"$4$"] (a7) at (8,0) {};
  \node [draw,circle,fill=red!50,"$5$"] (a2) [right=of a7] {};
  \node [draw,circle,fill=red!50,"$6$"] (a5) [right=of a2] {};
  \node [draw,circle,fill=red!50,"$2$" right] (a4) [below right=of a5] {};
  \node [draw,circle,fill=red!50,"$1$" below] (a3) [below left=of a4] {};
  \node [draw,circle,fill=red!50,"$8$" below] (a6) [left=of a3] {};
  \node [draw,circle,fill=red!50,"$3$" below] (a1) [left=of a6] {};
  \node [draw,circle,fill=red!50,"$7$" left]  (a8) [above left=of a1] {};
   \node [draw,circle,fill=blue!50,"$10$" below]  (10) at (10.7,-3.2) {};
   \node [draw,circle,fill=blue!50,"$9$" below]  (9) at (9,-3.2) {};
  
   \node (b) [below=of a6] {(b) $Q_{D_{4}}^{(2)}$};
  
  \begin{scope}[>=Latex]
  
  \draw[-> , thick]  (a2) edge (a1);
   \draw[-> , thick]  (a3) edge (a2);
  \draw[-> , thick]  (a1) edge (a6);
  \draw[-> , thick]  (a5) edge (a1);
  \draw[-> , thick]  (a2) edge (a8);
  \draw[-> , thick] (a7) edge (a2);
    \draw[-> , thick] (a2) edge (a5);
      \draw[-> , thick] (a4) edge (a2);
       \draw[-> , thick] (a4) edge (a2);
       \draw[-> , thick] (a6) edge (a3);
 \draw[-> , thick]  (a3) edge (a5);
 \draw[-> , thick]  (a6) edge (a4);
  \draw[-> , thick]   (a4) edge (a5);
   \draw[-> , thick]   (a7) edge[bend left=35] (a5);
    \draw[-> , thick]  (a5) edge(a6);
    \draw[-> , thick] (a5) edge(a8);
    \draw[-> , thick]  (a8) edge(a6);
    \draw[-> , thick] (a6) edge(a7);
     \draw[-> , thick] (a3) edge(a7);
     
      \draw[-> , thick] (9) edge(a1);
      \draw[-> , thick] (a4) edge(9);
      \draw[-> , thick] (10) edge(a1);
      \draw[-> , thick] (a3) edge(10);
      
    \end{scope}

\end{tikzpicture}
}
\end{center}
\caption{(a) corresponds to $\hat{\varphi}_{1}$, while (b) corresponds to $\hat{\varphi}_{2}$  }\label{DeformedQD4}
\end{figure}
The process begins with defining the rational map $\pi_{1} : (\tx_{1}, \tx_{2},\dots, \tx_{8} ) \to (u_{1},u_{2})$, which is equivalent to
 \begin{equation}\label{vartransD4i}
    u_{1} =\frac{\tx_3}{\tx_{5}\tx_6}, \qquad 
u_2 = \frac{\tx_2\tx_4\tx_8}{\tx_1\tx_6\tx_{7}},
\end{equation}
and whose structure is derived from analysing singularity confinement pattern using an empirical version of $p$-adic analysis (see further details in \cite{hkm24}). Using this rational map $\pi_{1}$, one can pullback the symplectic form $\hat{\omega}$ to define new symplectic form  
\begin{equation}
 \tilde{\om} = \pi_{1}^{*}(\hat{\om}) = \sum_{i<j}\frac{\tilde{b}_{ij}}{\tilde{x}_{i}\tilde{x}_{j}} \dd \tilde{x}_{i} \w \dd \tilde{x}_{j}
\end{equation}
where the coefficients $\tilde{b}_{ij}$ denote the number of edges between mutable (non frozen) nodes in $ Q_{D_{4}}^{(l)}$, directed from node $i$ to node $j$. For each quiver, there exists a corresponding sequence of mutations that produces a cluster map. Each such cluster map arises from the deformed maps via Laurentification in the cluster algebra  $\mathcal{A}(\tilde{\vb{x}}, Q_{D_{4}}^{(l)})$. The map corresponding to $Q_{D_{4}}^{(1)}$ is given by 
\begin{equation}
\psi^{(1)}_{D_{4}} = \pi_{1}^{*}\hat{\phi}_{D_{4}}^{(1)}  = \hat{\rho}_{1}^{-1}\hat{\mu}_{1} \hat{\mu}_{3}\hat{\mu}_{8}, \qquad \rho_{1} = (3,2,8)(1,5,6,7,4)
\end{equation}
whose mutations are defined in cluster algebra  $\mathcal{A}(\tilde{\vb{x}}, Q_{D_{4}}^{(1)})$. Note that here we denote the mutations in this cluster algebra  $\hat{\mu}$ to distinguish them from the mutations in the cluster algebra of type $D_{4}$.
By setting the initial cluster $\tilde{\vb{x}}$ as 
$$(\tx_{1}, \tx_{2},\tx_{3}, \tx_{4},\tx_{5},\tx_{6},\tx_{7},\tx_{8},\tx_{9},\tx_{10} ) = (\tau_{0},\sigma_{1},\sigma_{0},\tau_{4}, \tau_{1},\tau_{2},\tau_{3},r_{0}, b_3, b_4)$$
the action of the map $\psi^{(1)}_{D_{4}}$ induces 
\begin{equation}\label{LaurentphihatmapD4i} 
 \begin{array}{lcrcl}
   \psi^{(1)}_{D_{4}}  &:&(\tx_{1}, \tx_{2},\tx_{3}, \tx_{4},\tx_{5},\tx_{6},\tx_{7},\tx_{8},\tx_{9},\tx_{10} )&\to& (\tx_{5}, \tx_{8}',\tx_{2},\tx_{1}',\tx_{6},\tx_{7},\tx_{4}, \tx_{3}' ,\tx_{9},\tx_{10})\\
    & &(\tau_{0},\sigma_{1},\sigma_{0},\tau_{4}, \tau_{1},\tau_{2},\tau_{3},r_{0}, b_3, b_4)&\mapsto& (\tau_{1},\sigma_{2},\sigma_{1},\tau_{5}, \tau_{2},\tau_{3},\tau_{4},r_{1},b_{3},b_{4}), 
  \end{array}
\end{equation}
Discrete evolution induced by map is described by the following system of recurrence relations, 
\begin{equation}\label{systmtauD4i}
    \begin{split}
        & \sigma_{n+2}r_{n} =b_{3}\sigma_{n}\tau_{n+2}\tau_{n+3} + \sigma_{n+1}\tau_{n}\tau_{n+4} \\ 
        &  r_{n+1}\sigma_{n} = \sigma_{n + 1}\tau_{n+4}\tau_{n} + b_{1}\sigma_{n+2}\tau_{n+2}\tau_{n} \\ 
        &\tau_{n+5}\tau_{n} = b_1b_3 \tau_{n+2}\tau_{n+3} + r_{n+1} . 
    \end{split}
\end{equation}
Laurentification of the second deformed map $\hat{\phi}_{2}$ gives rises to following cluster map, 
\begin{equation}
\psi^{(2)}_{D_{4}} = \pi_{2}^{*}\hat{\phi}_{D_{4}}^{(2)}  = \hat{\rho}_{2}^{-1}\hat{\mu}_{2}\hat{\mu}_{1}\hat{\mu}_{8}\hat{\mu}_{3} ,\qquad \hat{\rho}_{2} = (1,2)(3,4,5,6,7)
\end{equation}
Now if we identify the cluster variables in the initial cluster $\tilde{\vb{x}}$ as 
\begin{equation} 
(\tx_{1}, \tx_{2},\tx_{3}, \tx_{4},\tx_{5},\tx_{6},\tx_{7},\tx_{8},\tx_{9},\tx_{10} ) = (\hat{s}_{0},\hat{r}_{0},\hat{\tau}_{0}, \htau_{1},\htau_{2},\htau_{3},\htau_{4},\hat{\eta}_{0}, b_3, b_4)
\end{equation}
then we see that the iteration of the map, 
\begin{equation}\label{LaurentphihatmapD4ii} 
 \begin{array}{lcrcl}
   \psi^{(1)}_{D_{4}}  &:&(\tx_{1}, \tx_{2},\tx_{3}, \tx_{4},\tx_{5},\tx_{6},\tx_{7},\tx_{8},\tx_{9},\tx_{10} )&\to& (\tx_{2}',\tx_{1}', \tx_{3}',\tx_{4},\tx_{5},\tx_{6},\tx_{7},\tx_{3}, \tx_{8}' ,\tx_{9},\tx_{10})\\
    & &(\hat{s}_{0},\hat{r}_{0},\hat{\tau}_{0}, \htau_{1},\htau_{2},\htau_{3},\htau_{4},\hat{\eta}_{0}, b_3, b_4)&\mapsto& (\hat{s}_{1},\hat{r}_{1},\hat{\tau}_{1}, \htau_{2},\htau_{3},\htau_{4},\htau_{5},\hat{\eta}_{1}, b_{3},b_{4}), 
  \end{array}
\end{equation}
is equivalent to the following recursion relations,
\begin{equation}
\begin{split}
 &\htau_{n+5}\htau_{n} = b_3b_4  \htau_{n+3}\htau_{n+2} + \hat{\eta}_{n}, \\ 
    &\hat{\eta}_{n+1}\hat{\eta}_{n} = \hat{r}_{n}\hat{\sigma}_{n}\htau_{n+1}\htau_{n+5} + b_3b_4 \htau_{n+4}\htau_{n+3}^2\htau_{n+2}, \\ 
    &\hat{r}_{n+1}\hat{s}_{n} = b_{3}\htau_{n+3}\htau_{n+4} +\hat{\eta}_{n+1}, \\
    & \hat{s}_{n+1}\hat{r}_{n} = b_{4} \htau_{n+3}\htau_{n+4} + \hat{\eta}_{n+1}. 
\end{split}
\end{equation}

In summary, the deformation approach introduced by \cite{hk} generalizes the Liouville integrable type $D_{4}$ cluster map in two distinct ways. Moreover, this approach yields new cluster algebras via Laurentification, with quivers that induce cluster maps, $\psi^{(1)}$ and $\psi^{(2)}$, corresponding to each deformed map, $\hat{\varphi}_{1}$ and $\hat{\varphi}_{2}$ respectively.


\section{Deformed cluster map of type $D_{6}$}\label{deformedtypeD6}

The Cartan matrix for the type $\rD_6$ is 
\begin{equation}
    \mqty(2 & -1 & 0 & 0 & 0 & 0 \\-1 & 2 & -1 & 0 & 0 & 0  \\ 0 & -1 & 2 & -1 & 0 & 0  \\ 0 & 0 & -1 & 2 & -1 & -1 \\  0 & 0 & 0 & -1 & 2 & 0 \\ 0 & 0 & 0 & -1 & 0 & 2 \\ )
\end{equation}
which corresponds to the following exchange matrix, 
\begin{equation}\label{exchD6}
  B=  \mqty(0 & 1 & 0 & 0 & 0 & 0 \\-1 & 0 & 1 & 0 & 0 & 0  \\ 0 & -1 & 0 & 1 & 0 & 0  \\ 0 & 0 & -1 & 0 & 1 & 1 \\  0 & 0 & 0 & -1 & 0 & 0 \\ 0 & 0 & 0 & -1 & 0 & 0 \\ )
\end{equation}
The associated cluster map $\varphi_{D_{6}}$ is given by the composition of mutations $\mu_{6}\mu_{5}\mu_{4}\mu_{3}\mu_{2}\mu_{1}$, 
which preserves the type $D_{6}$ quiver. Let us consider the deformation of each mutations in the cluster map $\varphi_{D_{6}}$ which is shown as below. 
\begin{equation} \label{D6dmaps}\begin{array}{rcl}
\mu_{1} : (x_1,x_2,x_3,x_4, x_{5},x_{6}) \mapsto (x_1',x_2,x_3,x_4, x_{5},x_{6}), \qquad x_1 x_1' &=& b_1 + a_1x_2 \\
\mu_{2} :(x_1',x_2,x_3,x_4, x_{5},x_{6}) \mapsto (x_{1}',x_{2}',x_3,x_4, x_{5},x_{6}), \qquad x_2 x_2' &=& b_2 + a_2 x_3x_{1}'\\
\mu_{3} : (x_{1}',x_{2}',x_3,x_4, x_{5},x_{6}) \mapsto (x_{1}',x_{2}',x_{3}',x_4, x_{5},x_{6}), \qquad  x_3 x_{3}'& = & b_3 + a_3 x_{4}x_2'\\
\mu_{4} : (x_1',x_2',x_3',x_4, x_{5},x_{6}) \mapsto (x_{1}',x_{2}',x_{3}',x_{4}', x_{5},x_{6}), \qquad x_4 x_4' &= & b_{4} + a_4 x_{5}x_{6}x_{3}'\\ 
\mu_{5} : (x_1',x_2',x_3',x_4', x_{5},x_{6}) \mapsto (x_{1}',x_{2}',x_{3}',x_{4}', x_{5}',x_{6}), \qquad x_5 x_5' &= & b_{5} + a_5 x_{4}'\\ 
\mu_{6} : (x_1',x_2',x_3',x_4, x_{5}',x_{6}) \mapsto (x_{1}',x_{2}',x_{3}',x_{4}', x_{5}',x_{6}'), \qquad x_6 x_6' &= & b_{6} + a_6 x_{4}'\\ 
\end{array}
\end{equation} 
The deformed map $\tilde{\varphi}_{\rD_{6}} =   \tilde{\mu_{6}}\tilde{\mu_{5}}\tilde{\mu_{4}}\tilde{\mu_{3}}\tilde{\mu_{2}}\tilde{\mu_{1}}$ transforms back to original cluster map $\varphi_{\rD_{6}}= \mu_{6}\mu_{5}\mu_{4}\mu_{3}\mu_{2}\mu_{1}$ when we fix the parameters $a_i=1=b_i$ for all $i=1,\dots,4$. Furthermore, since Coxeter number $h$ for the type $\rD_6$ is 10,  the periodicity for the cluster map is period 6, 
\begin{equation}
    \varphi_{\rD_{6}}\cdot(\vb{x},B) = (\varphi_{\rD_{6}}(\vb{x}),B) \ \ \text{and}\ \  \varphi^6(\vb{x}) = \vb{x}
\end{equation}
The symplectic form $\om$, that is invariant under the deformed map, takes a following form, 
\begin{equation}
\begin{split}
\om = & \frac{1}{x_{1}x_{2}} \dd  x_{1} \wedge \dd x_{2} + \frac{1}{x_{2}x_{3}} \dd  x_{2} \wedge \dd x_{3} \\
&+ \frac{1}{x_{4}x_{5}} \dd  x_{4} \wedge \dd  x_{5} + \frac{1}{x_{4}x_{6}} \dd  x_{4} \wedge \dd x_{6}
\end{split}
\end{equation}
As in the previous example, we rescale each cluster variable (i.e. $x_{i} \to \lambda_{i} x_{i}$) so that the \eqref{D6dmaps} can be rewritten as 
\begin{equation}\label{D6recs} 
\begin{array}{rcl}
x_{1,n+1} x_{1,n} &=& x_{2,n} + b_1 , \\   
 x_{2,n+1} x_ {2,n} &=&  x_{3,n}x_{1,n+1} + b_{2}, \\  
 x_{3,n+1} x_{3,n}& = &  x_{4,n}x_{2,n+1} + b_3\\
x_{4,n+1} x_{4,n}&= &   x_{5,n}x_{6,n}x_{3,n+1}+b_{3} \\  
 x_{5,n+1} x_{5,n} &= &   x_{4,n+1} + b_{5} \\ 
 x_{6,n+1} x_{6,n} &= &  x_{4,n+1} + b_{6} \\ 
\end{array} 
\end{equation} 

From the exchange matrix $B_{\rD_{6}}$,  \eqref{exchD6}, one can see that it is degenerate and possess rank 4. Thus we consider the null space and column space of $B_{\rD_{6}}$ given by following
\begin{equation}
\begin{split}
    &\ker(B) = \qty{(1,0,1,0,0,1)^{T}, (1,0,1,0,1,0)^{T}}, \\
    &\im(B)=\qty{(0,1,0,0,0,0)^{T}, (-1,0,1,0,0,0)^{T}, (0,-1,0,1,0,0)^{T},  (0,0,-1,0,1,1)^{T}} 
\end{split}
\end{equation}
The vectors in $\im(B)$ leads to constructing new variables $y_{1}$, $y_{2}$, $y_{3}$, $y_{4}$ i.e. 
\begin{equation}
    \begin{split}
        y_{1} = x_{2}, \quad y_{2} = \frac{x_{3}}{x_{1}}, \quad y_{3} = \frac{x_{4}}{x_{2}}, \quad y_{4} = \frac{x_{5}x_{6}}{x_{3}}
    \end{split}
\end{equation}
which reduces the deformed map into the 4D symplectic map i.e. 
\begin{equation}
\hat{\varphi}: (y_{1},y_{2},y_{3},y_{4}) \to (y_{1}',y_{2}',y_{3}',y_{4}') 
\end{equation}
where the produced variables are written as 
\begin{equation}
\begin{split}
&y_{1}' = \frac{(b_1 + y_{1}) y_{2} + b_{2}}{y_{1}}\\
&y_{2}' = \frac{y_{3}(b_{1} + y_{1})y_{2} + b_{2}y_{3} + b_{3}}{y_{2}(b_{1}+y_{1})}\\
&y_{3}' = \frac{y_{3}(b_{1}+y_{1})y_{2}y_{4} + b_{2}y_{3}y_{4} + b_{3}y_{4} + b_{4}}{y_{3}(b_{2} + (b_{1}+y_{1}) y_{2})} \\
&y_{4}' = \frac{\qty((b_{1}y_{2} +y_{1}y_{2} + b_{2})y_{3}y_{4} +b_{6}y_{1}y_{3} + b_{3} y_{4}+ b_{4} ) \qty((b_{1}y_{2} + y_{1}y_{2} + b_{2})y_{3}y_{4} + b_{5}y_{1}y_{3} + b_{3}y_{4} + b_{4})}{y_{1}^2 y_{3}^2y_{4}((b_{1}y_{2}+y_{1}y_{2} + b_{2})y_{3} + b_{3} )}
\end{split}
\end{equation}
This reduced map preserves the nondegenerate symplectic form 
\begin{align*}
\hat{\om} = \sum_{ij}\hat{b}_{ij} \dl y_{i} \w \dl y_{j}
\end{align*}
whose coefficients are entries of the following $4\times 4$ skew-symmetric matrix 
$$\mqty(0 & 1 & 0 & 1 \\ -1 & 0 & 0 & 0 \\ 0 & 0 & 0 & 1 \\ -1 & 0 & -1 & 0 )$$

\subsection{Laurentification of deformed type $D_{6}$ map}


From the several successful cases shown in \cite{hk,grab,hkm24} (including the previous section), we observed that the degrees (or magnitude) of numerators and denominators in the iterates grows exponentially when arbitrary parameter values are used. However when the integrability condition is imposed, this growth is reduced significantly. This hints that choosing suitable conditions on the parameters leads to cancellation between the numerator and denominator, simplifying the rationals expressions and hence reducing complexity of the map. This reduced growth corresponds to the notion of degree growth used in \textit{algebraic entropy}, which is an algebraic test for integrability. Under these conditions, one was able to find the singularity confinement patterns which allow us to apply Laurentification and lift the map to a space where the Laurent property holds. By combining altogether, this indicates a close relation between slow growth (slower than exponential growth) and Laurentification.

We begin by considering the iteration of deformed type $D_6$ map, with the parameters chosen so that $b_{1} = b_{2} = b_{3} = b_{4} = b_{5}b_{6}$. We study the heights $H(\vb{y}_{n})$ of the iterates  $\vb{y}_{n} = (y_{j,n})_{1\leq j \leq4}$ of the deformed map $\hat{\varphi}_{D_{6}}$,  which measure the complexity growth of the map numerically. The height is defined by 
$$H(\vb{y}_{n}) = \max_{1 \leq j \leq 4}\max (\abs{u_{j,n}},\abs{v_{j,n}})$$
where $u_{j,n}$ and $v_{j,n}$ are numerator and denominator of the iterates $y_{j,n}$ and they are coprime. A standard way to test for subexponential is to compare the logarithmic height $ \log H(\vb{y}_{n})$ with $ \log n$: if $\log H(\vb{y}_{n})$ is asymptotically proportional to $\log n$, then $H(\vb{y}_{n})$ grows polynomially. Setting initial values $y_{1} = y_{2} = y_{3} = y_{4} =1$ and parameters $b_{5} = 2$, $b_{6} = 3$, we plot $ \log H(\vb{y}_{n})$ against $ \log n$ which is illustrated in the figure \ref{fig:D6graph}. From the graph, it is clear that the plot is asymptotically a straight line; the slope $a$ of this line indicates that 
\begin{align*}
\log H(\vb{y}_{n}) \sim a \log n \implies H(\vb{y}_{n})\sim n^{a}
\end{align*}
Thus the height exhibits polynomial growths $n^{a}$ for some  $a>0$. This growth is therefore slower than the exponential. 

\begin{figure}[!h]
\begin{center}
\includegraphics[width=0.5\textwidth]{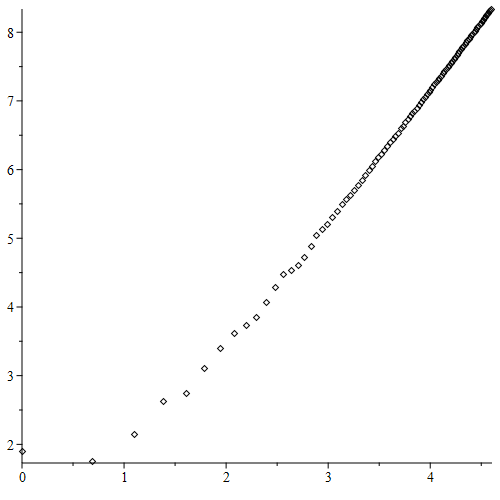}
\end{center}
\caption{$x$-axis represents $\log n$ and $y$-axis corresponds to $\log H(\vb{y}_{n})$}. 
\label{fig:D6graph}
\end{figure}
%





Next, using the same initial values and parameter values as mentioned earlier, we consider sequence of rational values, which are factorized into prime numbers as shown in the table.
{\renewcommand{\arraystretch}{1}
\begin{center}
\begin{tabular}{ |c | c | c | c | c |  } 
  \hline
  $n$ & $y_{1,n} $ & $ y_{2,n} $ & $y_{3,n} $ & $ y_{4,n} $ \\ 
  \hline
  $1 $ & $13 $ & $ \frac{19}{7} $ & $\frac{5^2}{13} $ & $\frac{2^2\cdot 3^3 \cdot 7 }{19}$ \\
  \hline
  $2$ & $\frac{31}{7}$ & $\frac{43}{19}$ & $2 \cdot 3 \cdot 7$ & $\frac{7\cdot 47}{43}$  \\ 
  \hline
  $3$ & $\frac{127}{19}$ & $\frac{2^3\cdot 3 \cdot 7 \cdot 23}{73}$ & $\frac{977}{127}$ & $\frac{5\cdot 11 \cdot 29}{2^2 \cdot 3 \cdot 19 \cdot 23}$  \\ 
  \hline
  $4$ & $\frac{2 \cdot 3^3 \cdot 137}{73}$ & $\frac{13109}{7 \cdot 241}$ & $ \frac{2 \cdot 23 \cdot 467}{3^3 \cdot 19 \cdot 137} $ & $\frac{3^3 \cdot 5^2 \cdot 7 \cdot 11 \cdot 13}{73 \cdot 13109}$  \\ 
  \hline
   $5$ & $\frac{5 \cdot 2797}{7 \cdot 241}$ & $\frac{47 \cdot 499}{2 \cdot 3 \cdot 19 \cdot 653}$ &  $ \frac{3 \cdot 7 \cdot 36263}{5 \cdot 73 \cdot 2797} $  & $\frac{2^4 \cdot 17 \cdot 19 \cdot 443}{47 \cdot 241 \cdot 499}$  \\ 
  \hline
    $6$ & $\frac{94307}{2 \cdot 3 \cdot 19 \cdot 653}$ & $\frac{3 \cdot 7 \cdot 257501}{73 \cdot 24107}$ &  $ \frac{2^2\cdot 19 \cdot 23 \cdot 19259}{241 \cdot 94307} $  & $\frac{73 \cdot 13457 \cdot 5197}{3^2 \cdot 653 \cdot 257501}$  \\ 
  \hline
    $7$ & $\frac{3 \cdot 13117691}{73 \cdot 24107}$ & $\frac{2^4 \cdot 19 \cdot 211 \cdot 30559}{7 \cdot 241 \cdot 269 \cdot 2011}$ &  $ \frac{5^2 \cdot 73 \cdot 2711 \cdot 62311}{3^2 \cdot 653 \cdot 13117691} $  & $\frac{2^5 \cdot 7 \cdot 23 \cdot 41 \cdot 181\cdot 241 \cdot  881}{211 \cdot 30559 \cdot 24107}$  \\ 
  \hline
    $8$ & $\frac{2 \cdot 199 \cdot 980249}{241 \cdot 269 \cdot 2011}$ & $ \frac{2^3 \cdot 13 \cdot 73 \cdot 149 \cdot 1053713}{3^3 \cdot 7 \cdot 11 \cdot 19 \cdot 653 \cdot 10289}$ &  $ \frac{83 \cdot 229 \cdot 241 \cdot 6317 \cdot 8681 }{199 \cdot 980249 \cdot 24107} $  & $\frac{3^2 \cdot 11 \cdot 179 \cdot 653 \cdot 98597 \cdot 14281}{2^3 \cdot 13 \cdot 149 \cdot 269 \cdot 1053713 \cdot 2011}$  \\ 
  \hline
  $9$ & $\frac{19 \cdot 252533 \cdot 45127}{3^3 \cdot 7 \cdot 11 \cdot 653 \cdot 10289}$ & $ \frac{59 \cdot 97 \cdot 241 \cdot 605239213}{73 \cdot 7712933 \cdot 24107}$ &  $ \frac{3^2 \cdot 11 \cdot 13 \cdot 109 \cdot 653 \cdot 74857 \cdot 31627 }{19 \cdot 269 \cdot 252533 \cdot 45127 \cdot 2011} $  & $\frac{5 \cdot 1543 \cdot 24107 \cdot 6089 \cdot 280759}{2 \cdot 3 \cdot 7 \cdot 59 \cdot 97 \cdot 605239213 \cdot 10289}$  \\ 
  \hline
\end{tabular}
\end{center}
}
To construct variable transformations which lifts the deformed type $D_{6}$ map to cluster algebra of higher rank, we study the singularity structure of the map via p-adic analysis, which is done by observing the patterns of prime factors appearing in the sequences.
One can see that for each prime numbers  $p_{1}= 73,241, 653$ (for instance) appears in all variables $y_{i}$ for $i=\qty{1,2,3,4}$. The corresponding $p_{1}$-adic norm exhibits the following pattern
\begin{equation}
\begin{split}
&\abs{y_{1,n}}_{p_{1}} = 1, 1 ,p_{1}, 1, 1, p_{1}, 1  \\
&\abs{y_{2,n}}_{p_{1}} = 1, p_{1}, 1 ,1 , p_{1}, 1, p_{1}^{-1}, p_{1}, 1  \\
&\abs{y_{3,n}}_{p_{1}} = 1, 1, 1,  p_{1}, 1 , p_{1}^{-1}, 1, 1 \\
&\abs{y_{4,n}}_{p_{1}} = 1, 1, p_{1}, 1, p_{1}^{-1}, 1, 1 \\
\end{split}
\end{equation}
There are particular values which emerge in two variables. For $p_{2} = 31,137$, one has $\abs{y_{1,n}}_{p_{2}} = p_{2}^{-1}$ and $\abs{y_{3,n}}_{p_{2}}= p_{2} $. For  $p_{3}= 19,23,43, 47,499,13109$, one has $\abs{y_{2,n}}_{p_{3}} = p_{3}^{-1}$ and $\abs{y_{4,n}}_{p_{3}}= p_{3} $. Each variable is associated with a set of primes that do does not appear in the other variables, for instance, the primes $p = 31,137$ arise exclusively in the $y_1$. Thus one can observe the singularity confinement patterns in the iterations in $(y_{1,n},y_{2,n},y_{3,n},y_{4,n})$
\begin{equation}\label{singD6}
\begin{split}
    \text{Pattern 1 :}  \dots &\to (R,\infty^{1}, R,R) \to (	\infty^{1},R, R , \infty^{1}) \to (R, R, \infty^{1}, R) \to (R, \infty^{1}, R , 0^{1} ) \\
& \to (\infty^{1}, R, 0^{1} , R)  \to (R,0^{1},R,R) \to(R,\infty^{1},R,R) \to \dots \\[0.5em]  
    \text{Pattern 2 :} \dots & \to (0^{1},R,\infty,R) \to \dots \\[0.5em]  
    \text{Pattern 3 :}\dots & \to (R,0^{1},R, \infty) \to \dots \\[0.5em]
    \text{Pattern 4 :}\dots & \to (R,R,0^{1}, R) \to \dots \\[0.5em]
     \text{Pattern 5 :}\dots & \to (R,R,R,0^{1}) \to \dots \\[0.5em]
\end{split}
\end{equation}
Given tau-functions specified by $\tau \equiv 0 \Mod{p_{1}}, \ \chi \equiv 0 \Mod{p_{2}}, \eta \equiv 0 \Mod{p_{3}}, \ \xi \equiv 0 \Mod{p_{4}}, r\equiv 0 \Mod{p_{5}}, \ s \equiv 0 \Mod{p_{6}} $, the  symplectic coordinates can b	e written as 
\begin{equation}\label{vartransformD6y}
    \begin{split}
        y_{1,n} = \frac{\chi_{n}}{\tau_{n+5}\tau_{n+2}}, \quad y_{2,n} = \frac{\tau_{n+1} \eta_{1,n}}{\tau_{n}\tau_{n+3}\tau_{n+6}}, \quad y_{3,n} = \frac{\tau_{n+2}\xi_{1,n}}{\tau_{n+4}\chi_{n}}, \quad y_{4,n} = \frac{\tau_{n+3}r_{n}s_{n}}{\tau_{n+5}\eta_{1,n}}
    \end{split}
\end{equation}
so that the confinement patterns are reproduced under the iteration of the map. Substituting these variables into the exchange relations yields new expressions but they are not in the form of an exchange relation. Instead, we consider the original system formulated with the  $x$-variables. By the initial cluster variables $x_{i} = 1$ for $i=1,2,3,4,5,6$ and taking the same parameter values as above, the  iteration then produces the sequences shown in the table below, 
{\renewcommand{\arraystretch}{1}
\begin{center}
\begin{tabular}{ |c | c | c | c | c | } 
  \hline
  $n$ & $x_{1,n} $ & $ x_{2,n} $ & $x_{3,n} $ & $ x_{4,n} $  \\ 
  \hline
  $1 $ & $7 $ & $ 13 $ & $  19 $ & $ 5^2 $  \\
  \hline
  $2$ & $\frac{19}{7}$ & $\frac{31}{7}$ & $\frac{43}{7} $ & $2\cdot 3 \cdot 31$ \\ 
  \hline
  $3$ & $\frac{73}{19}$ & $\frac{127}{19}$ & $\frac{2^3 \cdot 3 \cdot 7 \cdot 23}{19}$ & $\frac{977}{19 }$  \\ 
  \hline
  $4$ & $\frac{241}{73}$ & $ \frac{2 \cdot 3^3 \cdot 137}{73} $ & $\frac{13109}{7\cdot 73 }$ & $ \frac{2^2 \cdot 23 \cdot 467}{19 \cdot 73}$ \\ 
  \hline
   $5$ & $\frac{2^2 \cdot 3 \cdot 653}{241}$ & $\frac{5 \cdot 2797}{7 \cdot 241}$ &  $ \frac{2 \cdot 47 \cdot 499}{19 \cdot 241} $  & $\frac{3 \cdot 36263}{73 \cdot 241}$ \\ 
  \hline
    $6$ & $\frac{24107}{2^2 \cdot 3 \cdot 7 \cdot 653}$ & $\frac{94307}{2 \cdot 3 \cdot 19 \cdot 653}$ &  $ \frac{257501}{2^2 \cdot 73 \cdot 653} $  & $\frac{2 \cdot 23 \cdot 19259 }{3\cdot 241 \cdot 653}$ \\ 
  \hline
    $7$ & $\frac{2 \cdot 7 \cdot 269 \cdot 2011 }{19 \cdot 24107}$ & $\frac{3 \cdot 13117691}{73 \cdot 24107}$ &  $ \frac{2^5 \cdot 211 \cdot 30559}{241 \cdot 24107} $  & $\frac{5^2 \cdot 2711 \cdot 62311}{3 \cdot 653 \cdot 24107}$  \\ 
  \hline
    $8$ & $\frac{3^2 \cdot 7 \cdot 11 \cdot 19 \cdot 10289}{2 \cdot 73 \cdot 269 \cdot 2011}$ & $ \frac{2 \cdot 199 \cdot 980249}{241 \cdot 269 \cdot 2011}$ &  $ \frac{2^2 \cdot 13 \cdot 149 \cdot 1053713 }{3 \cdot 269 \cdot 653 \cdot 2011} $  & $\frac{2 \cdot 83 \cdot 229 \cdot 6317 \cdot 8681}{269 \cdot 24107 \cdot 2011}$  \\ 
  \hline
\end{tabular}
\end{center}
}
{\renewcommand{\arraystretch}{1}
\begin{center}
\begin{tabular}{| c | c| c |  } 
  \hline
  $n$ & $x_{5,n}$ & $x_{6,n}$ \\ 
  \hline
 $1$ & $3^3$ & $2^2 \cdot 7$ \\
  \hline
  $2$  & $\frac{2^2 \cdot 47}{3^3}$  & $ \frac{3^3}{2^2}$\\ 
  \hline
  $3$  & $ \frac{3^3 \cdot 5 \cdot 7 \cdot 29}{2^2 \cdot 19 \cdot 47}$ & $\frac{2^3\cdot 11 \cdot 47}{3^3 \cdot 19}$\\ 
  \hline
  $4$ & $ \frac{2^3 \cdot 11^2 \cdot 47}{5 \cdot 29 \cdot 73}$ & $\frac{3^3 \cdot 5^3 \cdot 13 \cdot 29}{2^3 \cdot 11 \cdot 47 \cdot 73}$ \\ 
  \hline
   $5$  & $\frac{5^3 \cdot 13 \cdot 29 \cdot 443}{2^3 \cdot 11^2 \cdot 47 \cdot 241}$ & $\frac{2^8 \cdot 11^2 \cdot 17 \cdot 47}{5^3 \cdot 13 \cdot 29 \cdot 241}$\\ 
  \hline
    $6$ & $\frac{2^6 \cdot 11^2 \cdot 17 \cdot  47 \cdot 13457}{3 \cdot 5^3 \cdot 13 \cdot 29 \cdot 443 \cdot 653}$ & $\frac{5^3 \cdot 13 \cdot 29 \cdot 443 \cdot 5197 }{2^8 \cdot 3 \cdot 11^2 \cdot 17 \cdot 47 \cdot 653} $\\ 
  \hline
    $7$ & $ \frac{5^3 \cdot 13 \cdot 23 \cdot 29 \cdot 41 \cdot 443 \cdot 881 \cdot 5197}{2^6 \cdot 11^2 \cdot 17 \cdot 47 \cdot 13457 \cdot 24107}$ & $\frac{2^16 \cdot 7 \cdot 11^2 \cdot 17 \cdot 47 \cdot 181 \cdot 13457}{5^3 \cdot 13 \cdot 29 \cdot 443 \cdot 24107 \cdot 5197}$\\ 
  \hline
    $8$ & $ \frac{2^{15} \cdot 3 \cdot 7 \cdot 11^3 \cdot 17 \cdot 47 \cdot 181 \cdot 98597\cdot 13457}{5^3 \cdot 13 \cdot 23 \cdot 29 \cdot 41 \cdot 269 \cdot 443 \cdot 881 \cdot 5197 \cdot 2011}$ & $ \frac{5^3 \cdot 13 \cdot 23 \cdot 29 \cdot 41 \cdot 179 \cdot 443 \cdot 881 \cdot 5197 \cdot 14281}{2^16 \cdot 7 \cdot 11^2 \cdot 17 \cdot 47 \cdot 181 \cdot 269 \cdot 13457 \cdot 2011}$ \\ 
  \hline
\end{tabular}
\end{center}
}
The prime numbers, appeared in the $y$ variables, once again can be spotted in the $x$-variables. e.g $p= 73,241, 653$  corresponds to $ \tau$, $p=31,137 $ associated with $P$ , $p= 19,23,43, 47,499$ corresponds to $Q$. $p= 467,977,36263$ corresponds to $R$.  There are specific primes in $x_{5}$ and $x_{6}$, which cancels out in $y_{4} = x_{5}x_{6} / x_{3}$, for instance $p=19,73$. Following the same approach as previously, we define the new tau function $\sigma_{n}$ which is associated such primes. Thus we can then construct explicit expression for the tau-functions, 
\begin{equation}\label{vartransformD6}
    \begin{split}
       & x_{1,n} = \frac{\tau_{n+6}\tau_{n}}{\tau_{n+5}\tau_{n+1}}, \quad  x_{2,n} = \frac{\chi_{n}}{\tau_{n+5}\tau_{n+2}}, \quad  x_{3,n} = \frac{\eta_{1,n}}{\tau_{n+5}\tau_{n+3}} \\ 
       &x_{4,n} = \frac{\xi_{1,n}}{\tau_{n+5}\tau_{n+4}}, \quad   x_{5,n} = \frac{r_{n}\sigma_{n}}{\tau_{n+5}}, \quad   x_{6,n} = \frac{s_{n}}{\tau_{n+5}\sigma_{n}}
    \end{split}
\end{equation}
where $\sigma_{n}\sigma_{n+1} = \frac{s_{n}}{r_{n}} $. Then by directly substituting thees variables into the deformed mutations, we obtain the following recurrence relations, 
\begin{equation}\label{eq:deformD6}
\begin{split}
&\tau_{n+7}\tau_{n} = b_{5}b_{6} \tau_{n+5}\tau_{n+2} + \chi_{n} \\ 
&\chi_{n+1}\chi_{n} = b_{5}b_{6}\tau_{n+2}\tau_{n+3}\tau_{n+5}\tau_{n+6} + \eta_{1,n}\tau_{n+1}\tau_{n+7} \\ 
&\eta_{1,n+1}\eta_{1,n} = b_{5}b_{6}\tau_{n+3}\tau_{n+4}\tau_{n+5} \tau_{n+6} + \xi_{1,n}\chi_{n+1} \\ 
&\xi_{1,n+1}\xi_{1,n} = b_{5}b_{6}\tau_{n+4}\tau_{n+5}^2 \tau_{n+6} + r_{n}s_{n}\eta_{1,n+1} \\ 
&r_{n+1}s_{n} = b_{5}\tau_{n+5}\tau_{n+6} + \xi_{1,n+1} \\ 
& s_{n+1}r_{n} = b_{6}\tau_{n+5}\tau_{n+6} + \xi_{1,n+1} \\ 
\end{split}
\end{equation}
that we denote as iterations of corresponding birational maps 
\begin{equation}
\psi_{\rD_{6}}: (\xi_{1,0},\tau_{0}, \tau_{1}, \tau_{2},\tau_{3},\tau_{4},\tau_{5},\tau_{6},\chi_{0}, \eta_{1,0},b_{5},b_{6}) \to   (\xi_{1,1}, \tau_{1}, \tau_{2},\tau_{3},\tau_{4},\tau_{5},\tau_{6},\tau{7},\chi_{1}, \eta_{1,1},b_{5},b_{6})
\end{equation}
on following initial data,
\begin{align*}
(\tilde{x}_{1},\tilde{x}_{2}, \tilde{x}_{4}, \tilde{x}_{5}, \tilde{x}_{6}, \tilde{x}_{7}, \tilde{x}_{8}, \tilde{x}_{9, }\tilde{x}_{10}, \tilde{x}_{11}, \tilde{x}_{12}) = (\xi_{1,0},\tau_{0}, \tau_{1}, \tau_{2},\tau_{3},\tau_{4},\tau_{5},\tau_{6},\chi_{0}, \eta_{1,0}).
\end{align*}
Let us denote the rational map $\pi : \mathbb{C}^{12} \to \mathbb{C}^{4}$ given by the variable transformations \eqref{vartransformD6}. Applying the pullback of the presymplectic form  $\om$ via the rational map $\pi$, yields a new presymplectic form whose coefficients give rise to new exchange matrix, which is essential in defining $\psi_{\rD_{6}}$. Furthermore, we inserts extra rows, $\qty(0,0,-1,1,0,0,0,0,0,0,0 )^{T}$ and $ \qty(0,-1,0,1,0,0,0,0,0,0,0 )^{T}$,  at the bottom of the matrix, thereby constructing the extended exchange matrix $\tilde{B}_{\rD_{6}}$ \eqref{exchmD6wfrozen}, which represents the quiver  illustrated in Figure \ref{quiverD6}.
\begin{figure}[!ht]
\centering
\resizebox{0.6\textwidth}{!}{%
 \begin{tikzpicture}[every circle node/.style={draw,scale=0.6,thick},node distance=15mm]

     \node [draw,circle,fill=red!50,"$5$"] (a7) at (9,0) {};
  \node [draw,circle,fill=red!50,"$6$"] (a2) [right=of a7] {};
  
   \node [draw,circle,fill=red!50,"$7$"] (aa1) [right=of a2] {};
     \node [draw,circle,fill=red!50,"$8$"] (aa2) [right=of aa1] {};

  \node [draw,circle,fill=red!50,"$9$"] (a5) [right=of aa2] {};
  \node [draw,circle,fill=red!50,"$3$" right] (a4) [below right=of a5] {};
  \node [draw,circle,fill=red!50,"$2$" below] (a3) [below left=of a4] {};
  
  \node [draw,circle,fill=red!50,"$1$" below] (bb1) [left=of a3] {};
  \node [draw,circle,fill=red!50,"$12$" below] (bb2) [left=of bb1] {};
  
  \node [draw,circle,fill=red!50,"$11$" below] (a6) [left=of bb2] {};
  \node [draw,circle,fill=red!50,"$4$" below] (a1) [left=of a6] {};
  \node [draw,circle,fill=red!50,"$10$" left]  (a8) [above left=of a1] {};
   \node [draw,circle,fill=blue!50,"$14$" below]  (10) at (12.7,-3.2) {};
   \node [draw,circle,fill=blue!50,"$13$" below]  (9) at (11,-3.2) {};

  \begin{scope}[>=Latex]
  
     \draw[-> , thick]  (a2) edge (a1);
   \draw[-> , thick]  (a3) edge (aa2);
  \draw[-> , thick]  (a1) edge (a6);
  \draw[-> , thick]  (a5) edge (a1);
  \draw[-> , thick]  (a2) edge (a8);
  \draw[-> , thick] (a7) edge (a2);
    \draw[-> , thick] (a2) edge (aa1);
      \draw[-> , thick] (a4) edge (aa2);
 
       \draw[-> , thick] (a6) edge (bb2);
 \draw[-> , thick]  (a3) edge (a5);
 \draw[-> , thick]  (bb1) edge (a4);
  \draw[-> , thick]   (a4) edge (a5);
   \draw[-> , thick]   (a7) edge[bend left=35] (a5);
    \draw[-> , thick]  (a5) edge(bb1);
    \draw[-> , thick] (a5) edge(a8);
    \draw[-> , thick]  (a8) edge(a6);
    \draw[-> , thick] (a6) edge(a7);
    
     \draw[-> , thick] (aa2) edge(a5);
     \draw[-> , thick] (aa1) edge (aa2);
     \draw[-> , thick] (bb2) edge(bb1);
     \draw[-> , thick] (bb1) edge(a3);
     \draw[-> , thick] (bb2) edge(a2);
     \draw[-> , thick] (bb1) edge(aa1);
     \draw[-> , thick] (aa1) edge(a6);
     \draw[-> , thick] (aa2) edge(bb2);
    
    \draw[-> , thick] (9) edge (a1);
     \draw[-> , thick] (a4) edge (9);
     
      \draw[-> , thick] (10) edge (a1);
     \draw[-> , thick] (a3) edge (10);

    \end{scope}

\end{tikzpicture}
}

\caption{Extended quiver $Q_{D_{6}}$ associated with the deformed $\rD_6$}
\label{quiverD6}
\end{figure}
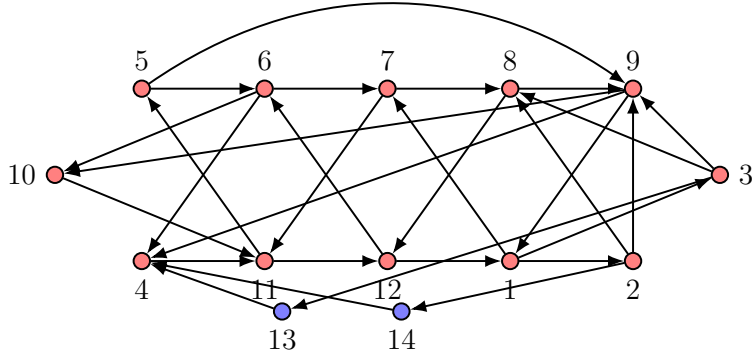
As a result of this insertion, the parameters $b_{5}$ and $b_{6}$ appear in the cluster variables generated by mutations in cluster algebra determined by the pair consisting of matrix $\tilde{B}_{\rD_{6}}$ (or, equivalently, the quiver $Q_{D_{6}}$) and initial cluster 
\begin{align*}
\hat{\vb{x}}&= (\xi_{1,0},r_{0},s_{0},\tau_{0}, \tau_{1}, \tau_{2},\tau_{3},\tau_{4}, \tau_{5} ,\tau_{6},\chi_{0},\eta_{1,0}, b_{5},b_{6})  \\
		&= (\tilde{x}_{1},\tilde{x}_{2}, \tilde{x}_{4}, \tilde{x}_{5}, \tilde{x}_{6}, \tilde{x}_{7}, \tilde{x}_{8}, \tilde{x}_{9 }, \tilde{x}_{10}, \tilde{x}_{11}, \tilde{x}_{12},  \tilde{x}_{13},   \tilde{x}_{14} )
\end{align*}
%
%
Thus we reaches to the following result. 
\begin{thm}[Laurentification of the deformed map] Let $(\hat{\vb{x}},\hat{B}_{\rD_{6}})$ be initial seed which is composed of extended initial cluster
\begin{equation}\label{D6initialvar}
\begin{split}
\hat{\vb{x}}&= (\xi_{1,0},r_{0},s_{0},\tau_{0}, \tau_{1}, \tau_{2},\tau_{3},\tau_{4}, \tau_{5} ,\tau_{6},\chi_{0},\eta_{1,0}, b_{5},b_{6})  \\
		&= (\tilde{x}_{1},\tilde{x}_{2}, \tilde{x}_{4}, \tilde{x}_{5}, \tilde{x}_{6}, \tilde{x}_{7}, \tilde{x}_{8}, \tilde{x}_{9 }, \tilde{x}_{10}, \tilde{x}_{11}, \tilde{x}_{12},  \tilde{x}_{13},   \tilde{x}_{14} )
\end{split}
\end{equation}
together with the associated extended exchange matrix
\begin{equation}\label{exchmD6wfrozen}
\footnotesize	
    \tilde{B}_{D_{6}}= 
      \left(
\begin{array}{*{12}c}
0 & 1 & 1 & 0 & 0 & 0 & 1 & 0 & -1 & 0 & 0  & -1  \\ 
 -1 & 0 & 0 & 0 & 0 & 0 & 0 & 1 & 1 & 0 & 0 & 0 \\
  -1 & 0 & 0 & 0 & 0 & 0 & 0 & 1 & 1 & 0 & 0 & 0 \\ 
   0 &  0 &  0 & 0 & 0 & -1 & 0 & 0 & -1 & 0 & 1 & 0 \\
 0 & 0 & 0 & 0 & 0 & 1 & 0 & 0 & 1 & 0 & -1 & 0 \\
 0 & 0 & 0 & 1  & -1 & 0 & 1 & 0 & 0 & 1 & 0 & -1 \\ 
 -1 & 0 & 0 & 0  & 0 & -1 & 0 & 1 & 0 & 0 & 1 & 0 \\ 
 0 & -1 & -1 & 0 & 0 & 0 & -1 & 0 & 1 & 0 & 0 & 1  \\ 
 1 & -1& -1 & 1 & -1 & 0  & 0 & -1 & 0 & 1 & 0 & 0 \\ 
 0 & 0 & 0  & 0 & 0  & -1 & 0 & 0 & -1 & 0 & 1 & 0 \\ 
 0 & 0 & 0 & -1 & 1 & 0 & -1  & 0 & 0 & -1 & 0 & 1  \\ 
  0 & 0 & 0 & 0 & 0 & 1 & 0  & -1 & 0 & 0 & -1 & 0  \\ 
   0&0 &-1 & 1 & 0 & 0 & 0 & 0 & 0 & 0 & 0 & 0 \\
  0 & -1 & 0 & 1 & 0 & 0 & 0 & 0 & 0 & 0 & 0 & 0
    \end{array}
\right)
\end{equation}
and consider the permutation $\rho = (456789\underline{10})(23)$ . Then the iteration of  cluster map $\psi_{D_{6}} = \rho^{-1} \mu_{12}\mu_{11}\mu_{10}\mu_{9}\mu_{8}\mu_{1}$ is equivalent to the recurrence \eqref{eq:deformD6}, and for the tau functions  $\tau_{n}$, $\chi_{n}$,  $\eta_{1,n}$, $\xi_{1,n}$, $r_{n}$, $s_{n}$ are elements of the Laurent polynomial ring $$ \mathbb{Z}_{>0}\qty[b_{5},b_{6}, \chi_{0}^{\pm}, \eta_{1,0}^{\pm},\xi_{1,0}^{\pm},r_{0}^{\pm},s_{0}^{\pm}, \tau_{0}^{\pm}, \tau_{1}^{\pm},\tau_{2}^{\pm},\tau_{3}^{\pm},\tau_{4}^{\pm},\tau_{5}^{\pm},\tau_{6}^{\pm}].$$ 

\end{thm}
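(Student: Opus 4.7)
The plan is to verify by direct computation that the prescribed composition of mutations and relabelling realises the time-one map of the system \eqref{eq:deformD6}, and then invoke the Laurent phenomenon for the conclusion on polynomiality.

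First I would check that the composite matrix mutation $\mu_{12}\mu_{11}\mu_{10}\mu_9\mu_8\mu_1$ applied to $\tilde{B}_{D_{6}}$ yields the relabelled matrix $\rho(\tilde{B}_{D_6})$, where $\rho = (4\,5\,6\,7\,8\,9\,10)(2\,3)$. This would confirm that the sequence is mutation-periodic up to $\rho$, so that $\psi_{D_6}=\rho^{-1}\mu_{12}\mu_{11}\mu_{10}\mu_9\mu_8\mu_1$ is a genuine cluster map. The verification is inductive: at each of the six stages the entries of the current $14\times 12$ matrix are updated via \eqref{matrixmu}, with the two frozen rows tracked in parallel, and the outcome after the sixth step is compared with the row/column permutation of $\tilde{B}_{D_6}$ prescribed by $\rho$.

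Next I would show that the six cluster mutations in this order reproduce the six recurrence relations. The key observation is that the equations in \eqref{eq:deformD6} nest naturally: $\tau_{n+7}$ must be produced first, as it feeds the $\chi_{n+1}$-equation, which in turn is needed for $\eta_{1,n+1}$, then for $\xi_{1,n+1}$, and finally for $r_{n+1}$ and $s_{n+1}$. Correspondingly, at each mutation step I would read the exchange relation from the current (already-mutated) quiver at the vertex being mutated, converting in-arrows into one monomial and out-arrows into another and including the contributions from the frozen vertices $13,14$ as the $b_5,b_6$ coefficients. Using the tau-function substitution \eqref{vartransformD6} obtained earlier from singularity-confinement analysis, each exchange relation must then match term-for-term with one of the six recurrence equations, in the order dictated by which of the positions $\{1,8,9,10,11,12\}$ carries the variable playing the role of $\tau_{n+7}, \chi_{n+1}, \eta_{1,n+1}, \xi_{1,n+1}, r_{n+1}, s_{n+1}$. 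Finally, applying $\rho^{-1}$ relabels the new seed so that the cyclically shifted $\tau$'s and the freshly produced variables sit in the positions matching the time-$(n{+}1)$ identification of \eqref{D6initialvar}.

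Once this equivalence is established, the Laurent property is immediate from the Laurent phenomenon for cluster algebras of geometric type applied to $(\hat{\vb{x}},\tilde{B}_{D_6})$: every cluster variable reachable by any composition of mutations from the initial seed lies in the Laurent polynomial ring in the initial (mutable) cluster variables over the coefficient ring $\Z_{\geq 0}[b_5,b_6]$, so in particular all iterates of the tau-functions belong to $\Z_{>0}[b_5,b_6,\chi_0^{\pm},\eta_{1,0}^{\pm},\xi_{1,0}^{\pm},r_0^{\pm},s_0^{\pm},\tau_0^{\pm},\dots,\tau_6^{\pm}]$, with positivity of coefficients following from the subtraction-free nature of the exchange relation \eqref{mu1}. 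The main obstacle is the bookkeeping in the first two stages: the $14\times 12$ matrix must be propagated correctly through six mutations, and one has to verify that the propagation is simultaneously consistent with the matrix-periodicity claim and with the recurrence \eqref{eq:deformD6}. The computation is mechanical but error-prone, so in practice the cleanest presentation would be to tabulate the intermediate quivers $\mu_1(\tilde{Q}_{D_6}),\ \mu_8\mu_1(\tilde{Q}_{D_6}),\dots$ and read off the exchange relations directly from them.
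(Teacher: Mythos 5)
Your proposal is correct and follows essentially the same route as the paper: the paper establishes the theorem by deriving the tau-function substitution, checking by direct substitution that the mutations of the seed $(\hat{\vb{x}},\tilde{B}_{D_6})$ in the order $\mu_1,\mu_8,\mu_9,\mu_{10},\mu_{11},\mu_{12}$ reproduce the system \eqref{eq:deformD6} with the relabelling $\rho$, and then invoking the Laurent phenomenon for cluster algebras of geometric type to conclude that all tau functions lie in the stated Laurent polynomial ring. Your additional explicit check of matrix mutation-periodicity is exactly the verification the paper carries out (in the general $D_{2N}$ setting) in Section \ref{S: D2Nmap}, so nothing essential is missing.
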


\subsection{Tropicalization and degree growth for deformed $\rD_{6}$ map} \label{ss:tropD6}

In the previous work \cite{hkm24}, we showed that the degree growth of iteration of cluster maps $\psi_{D_{4}}^{(1)}$ and $\psi_{D_{4}}^{(2)}$,  (induced from deformed type $D_{4}$ map introduced in Section \ref{DefmapD4} ) is quadratic, with a (max,+) relation corresponding to the exchange relations. This results in vanishing algebraic entropy, indicating that the maps are integrable, which is consistent with the integrability of the deformed map $\tilde{\varphi}_{D_{4}}$. In this section, we study the degree growth of cluster map $\psi_{D_{6}}$ and  determine associated algebraic entropy by following the same steps  as in the type $D_4$ case \cite{hkm24}.

As the Laurent property of the cluster map $\psi_{\rD_{6}}$ holds, we can write the sequence of tau-functions $\tau_{n}$, $\chi_{n}$, $\eta_{1,n}$, $\xi_{1,n}$, $r_{n}$ and $s_{n}$ as Laurent polynomial in initial cluster $\hat{\vb{x}}$  \eqref{D6initialvar}, shown below
\begin{equation}\label{eq:LaurentD6m}
\begin{split}
&\tau_{n} = \frac{N^{(1)}_{n}(\hat{\vb{x}})}{\hat{\vb{x}}^{\vb{d}_{n}}}, \quad \chi_{n} = \frac{N^{(2)}_{n}(\hat{\vb{x}})}{\hat{\vb{x}}^{\vb{p}_{n}}}, \quad \eta_{1,n} = \frac{N^{(3)}_{n}(\hat{\vb{x}})}{\hat{\vb{x}}^{\vb{q}_{n}}}, \\
&\xi_{1,n} = \frac{N^{(4)}_{n}(\hat{\vb{x}})}{\hat{\vb{x}}^{\vb{r}_{n}}}, \quad  r_{n} = \frac{N^{(4)}_{n}(\hat{\vb{x}})}{\hat{\vb{x}}^{\vb{v}_{n}}},\quad s_{n} = \frac{N^{(4)}_{n}(\hat{\vb{x}})}{\hat{\vb{x}}^{\vb{j}_{n}}} \\
\end{split}
\end{equation}
where d-vectors (denominator vectors) $\vb{d}_{n}, \vb{p}_{n}, \vb{q}_{n},\vb{r}_{n}, \vb{v}_{n}, \vb{j}_{n} $ is associated with the tau-functions assoicated with unfrozen variables $
(\xi_{1,0},r_{0},s_{0},\tau_{0}, \tau_{1}, \tau_{2},\tau_{3},\tau_{4}, \tau_{5} ,\tau_{6},\chi_{0},\eta_{1,0}, b_{5},b_{6})  $ 
and  have initial data given by a $12 \times 12$ identity matrix.
\begin{equation}\label{initialdvD6}
\qty( \vb{q}_{0} \ \vb{r}_{0} \ \vb{v}_{0} \ \vb{j}_{0} \ \vb{d}_{0} \ \vb{d}_{1} \ \vb{d}_{2} \ \vb{d}_{3} \ \vb{d}_{4} \ \vb{d}_{5} \  \vb{d}_{6} \  \vb{p}_{0}  ) = -I
\end{equation}
Direct substitution \eqref{eq:LaurentD6m} into \eqref{eq:deformD6} and comparison of the exponents in the denominators on both sides  yields the (max,+) relations for the d-vectors,
\begin{equation}\label{bilinearD6deg}
\begin{array}{rcl}
\vb{d}_{n+7} + \vb{d}_{n} & = & \max(\vb{d}_{n+5} + \vb{d}_{n+2}, \vb{p}_{n} ), \\
 \vb{p}_{n+1} + \vb{p}_{n} & = & \max(\vb{d}_{n+2} + \vb{d}_{n+3} +\vb{d}_{n+5} + \vb{d}_{n+6}, \vb{q}_{n} + \vb{d}_{n+1} + \vb{d}_{n+7} ), \\
  \vb{q}_{n+1} + \vb{q}_{n} & = & \max(\vb{d}_{n+3} + \vb{d}_{n+4} +\vb{d}_{n+5} +\vb{d}_{n+6}, \vb{r}_{n}+\vb{p}_{n+1} ), \\
 \vb{r}_{n+1} + \vb{r}_{n} & = & \max(\vb{d}_{n+4} + 2\vb{d}_{n+5}+ \vb{d}_{n+6}, \vb{v}_{n} +  \vb{j}_{n}+  \vb{q}_{n+1}), \\
  \vb{v}_{n+1} + \vb{j}_{n} & = & \max(\vb{d}_{n+5} + \vb{d}_{n+6}, \vb{r}_{n+1}), \\
   \vb{j}_{n+1} + \vb{v}_{n} & = & \max(\vb{d}_{n+5} + \vb{d}_{n+6}, \vb{r}_{n+1}), \\
\end{array}  
\end{equation}

Next we introduce quantities which is analogous to the tropical version of \eqref{vartransformD6} as following,
\begin{equation}\label{eq:tropvarD6x}
\begin{split}
\vb{X}_{1,n} = \vb{d}_{n} + \vb{d}_{n+6} - \vb{d}_{n+1} - \vb{d}_{n+5}, & \quad \vb{X}_{2,n} = \vb{p}_{n} - \vb{d}_{n+2} - \vb{d}_{n+5},\\
\vb{X}_{3,n} = \vb{q}_{n} - \vb{d}_{n+3} - \vb{d}_{n+5}, &  \quad  \vb{X}_{4,n} = \vb{r}_{n} - \vb{d}_{n+4} - \vb{d}_{n+5}, \\ 
\vb{X}_{5,n} = \vb{v}_{n} + \vb{s}_{n}  - \vb{d}_{n+5}, & \quad \vb{X}_{6,n} = \vb{j}_{n} - \vb{d}_{n+5} - \vb{s}_{n},
\end{split}
\end{equation}
along with quantities corresponding to symplectic coordinates   $y_{j}$ \eqref{vartransformD6y}, shown below. 
\begin{equation}\label{eq:tropvarD6y}
\begin{array}{rcl}
\vb{Y}_{1,n} = \vb{p}_{n} - \vb{d}_{n+2} - \vb{d}_{n+5},& \quad &\vb{Y}_{2,n} = \vb{d}_{n+1} + \vb{q}_{n} - \vb{d}_{n} - \vb{d}_{n+3} - \vb{d}_{n+6},\\
  \vb{Y}_{3,n} = \vb{d}_{n+2} + \vb{r}_{n} - \vb{d}_{n+4} - \vb{p}_{n}, & \quad &\vb{Y}_{4,n} = \vb{d}_{n+3} +  \vb{v}_{n} + \vb{j}_{n} - \vb{d}_{n+5} - \vb{q}_{n} \\ 
\end{array}
\end{equation}
It follows that  $\vb{X}_{i,n}$ satisfies ultradiscretized expression of original type $\rD_{6}$ (\eqref{D6recs} with all $b_{i}=1$), which can be obtained from the	(max,+) equations \eqref{bilinearD6deg}, as shown below.
\begin{lm}\label{periodtropD6}
The quantities $\vb{X}_{j,n}$ in \eqref{eq:tropvarD6x} satisfy following system of (max,+) equations: 
\begin{equation} \label{maxD6}
\begin{array}{rcl}
\vb{X}_{1,n+1} + \vb{X}_{1,n} & = & \qty[\vb{X}_{2,n}]_{+}, \\
\vb{X}_{2,n+1} + \vb{X}_{2,n} & = & \qty[\vb{X}_{1,n+1} +\vb{X}_{3,n} ]_{+} ,\\
\vb{X}_{3,n+1} + \vb{X}_{3,n} & = & \qty[\vb{X}_{2,n+1} +\vb{X}_{4,n} ]_{+} ,\\
\vb{X}_{4,n+1} + \vb{X}_{4,n} & = & \qty[\vb{X}_{5,n}+ \vb{X}_{6,n} + \vb{X}_{3,n+1} ]_{+}. \\
\vb{X}_{5,n+1} + \vb{X}_{5,n} & = & \qty[\vb{X}_{4,n+1} ]_{+}. \\
\vb{X}_{6,n+1} + \vb{X}_{6,n} & = & \qty[\vb{X}_{4,n+1} ]_{+}. \\
\end{array}
\end{equation}
where $ \qty[a]_{+} = \max(a,0)$.

Given arbitrary initial values $(\vb{X}_{1,0},\vb{X}_{2,0}, \vb{X}_{3,0}, \vb{X}_{4,0}, \vb{X}_{5,0},\vb{X}_{6,0})$, the quantities $\vb{X}_{j,n}$ ( and $\vb{Y}_{i,n}$) are periodic with period 6 for $1\leq i \leq 4$ and $1 \leq j \leq 6$. 
\end{lm}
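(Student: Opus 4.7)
The plan is to split the proof into two clearly separated parts: first, derive the (max,+) system \eqref{maxD6} by direct substitution, and second, deduce the period-$6$ property from the fact that this system is precisely the tropicalization of the (undeformed) type $D_{6}$ cluster recurrences, whose $y$-variables already enjoy Zamolodchikov periodicity with period $h/2+1=6$.

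For the first part, I would take each of the six bilinear (max,+) equations in \eqref{bilinearD6deg} and rewrite it after subtracting the appropriate shifted sums of $\vb{d}$-vectors so that the quantities $\vb{X}_{j,n}$ appear. Concretely, the equation for $\vb{d}_{n+7}+\vb{d}_{n}$, rearranged as $(\vb{d}_{n+7}+\vb{d}_{n})-(\vb{d}_{n+2}+\vb{d}_{n+5})=\max(0,\vb{p}_n-\vb{d}_{n+2}-\vb{d}_{n+5})$, immediately becomes $\vb{X}_{1,n+1}+\vb{X}_{1,n}=[\vb{X}_{2,n}]_{+}$ after using the definition $\vb{X}_{1,n}=\vb{d}_{n}+\vb{d}_{n+6}-\vb{d}_{n+1}-\vb{d}_{n+5}$. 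The remaining five identities follow by the same bookkeeping: subtract the correct combination of $\vb{d}_{n+k}$'s so that both sides line up with the definitions \eqref{eq:tropvarD6x}. This is routine but tedious — the main bookkeeping obstacle is keeping the shifts consistent in the equations involving $\vb{r}_{n},\vb{v}_{n},\vb{j}_{n}$, where the $(+\vb{v}_n+\vb{j}_n)$ term needs to be matched against $\vb{X}_{5,n}+\vb{X}_{6,n}$.

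For the periodicity claim, I would observe that the system \eqref{maxD6} is exactly the ultradiscretization (replacing $xy\mapsto X+Y$, $x+y\mapsto \max(X,Y)$, and $1\mapsto 0$) of the undeformed type $D_{6}$ exchange relations \eqref{D6recs} with all $b_i=1$. Hence the tropical dynamics admits the same underlying cluster combinatorics. Zamolodchikov periodicity (Theorem \ref{zam}) applied to the type $D_6$ cluster map gives period $6$ in the original variables, and this periodicity passes to the tropical level because the cluster $y$-variables (the $\vb{Y}_{i,n}$, as defined in \eqref{eq:tropvarD6y}) are subtraction-free rational expressions in the initial data whose tropical avatars obey identical max-plus identities. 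Therefore $\vb{Y}_{i,n+6}=\vb{Y}_{i,n}$ holds for arbitrary initial data.

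Finally, to transfer periodicity from the $\vb{Y}$'s back to the $\vb{X}$'s, I would note that \eqref{eq:tropvarD6x} and \eqref{eq:tropvarD6y} differ only by affine combinations of $\vb{d}$-vectors that telescope over a full period; equivalently, each $\vb{X}_{j,n}$ is determined by the tropical $y$-data together with the (max,+) recurrences, so period-$6$ of $\vb{Y}_{i,n}$ forces period-$6$ of $\vb{X}_{j,n}$. The main obstacle I anticipate is not the algebra but justifying cleanly that tropicalization commutes with periodicity here; I would address this either by invoking the general principle that subtraction-free periodic identities tropicalize to periodic identities, or by a direct finite check starting from a generic symbolic $(\vb{X}_{1,0},\dots,\vb{X}_{6,0})$ and iterating \eqref{maxD6} six times to recover the initial tuple.
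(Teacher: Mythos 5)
Your proposal is correct and follows essentially the same route as the paper: first rearrange the bilinear (max,+) relations \eqref{bilinearD6deg} into the quantities $\vb{X}_{j,n}$ to obtain the tropical type $D_{6}$ system, then invoke Zamolodchikov periodicity of the type $D_{6}$ cluster map to conclude that the $\vb{X}_{j,n}$ (and $\vb{Y}_{i,n}$) are $6$-periodic. The only difference is that you explicitly flag and justify the step the paper leaves implicit --- that periodicity survives tropicalization because the periodic identities are subtraction-free --- which is a worthwhile clarification but not a different argument.
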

\begin{proof}
The (max,+) equations in \eqref{maxD6} arise from \eqref{bilinearD6deg} by rearranging the d-vectors so that they can be expressed in terms of quantities $\vb{X}_{j,n}$ in \eqref{eq:tropvarD6x}. Alternatively, these equations can be derived from the exchange relations of type $D_{6}$ cluster map, $\varphi_{D_{6}}$, \eqref{typeD4:clmu}, by considering the d- vectors of cluster variables $x_j$. Since  $\varphi_{D_{6}}$ is periodic with period 6, it follows that each quantity $\vb{X}_{j,n}$ inherits this periodicity. 
\end{proof}
Note that for $1\leq j \leq 4$, each $\vb{Y}_{j,n}$ satisfies the same periodicity since it can be expressed in terms of $\vb{X}_{i,n}$ for all $1\leq i \leq 6$. The periodicity of $\vb{X}_{i,n}$ (or $\vb{Y}_{j,n}$) is essential in determining the degree growth of the d-vectors of tau functions as demonstrated in the proof of the following statement. 
\begin{thm} Let $\cT$ be linear operator which shifts $n \to n+1$.
The  d-vectors $\vb{e}_{n}$,$\vb{d}_{n}$,$ \vb{f}_{n}$ and $\vb{g}_{n}$, which solve the system of equations \eqref{bilinearD6deg}, satisfy the following linear difference equations
\begin{equation}
\cL\vb{r}_{n}=(\cT^{6} - 1)(\cT^5 - 1)(\cT-1)\vb{r}_{n} = 0 
\end{equation}
where $\cT$ is shift operator corresponding to $n \to n+1$ and $\vb{r}_{n} = \vb{e}_{n}$,$\vb{d}_{n}$,$ \vb{f}_{n}, \vb{g}_{n}$.  For the generated tau functions, the leading order of degree growth of their denominators is given by 
\begin{equation}\label{degreegrowthD6}
\begin{split}
\vb{d}_{n} = \frac{n^2}{60} \vb{a} + O(n),& \quad \vb{v}_{n} = \frac{n^2}{60} \vb{a} + O(n), \quad \vb{j}_{n} = \frac{n^2}{60} \vb{a} + O(n)   \\
\vb{p}_{n} = \frac{n^2}{30}\vb{a} + O(n),& \quad \vb{q}_{n} = \frac{n^2}{30}\vb{a} + O(n), \quad \vb{r}_{n}=\frac{n^2}{30}\vb{a} + O(n),
\end{split}
\end{equation}
where $\vb{a} = (2,1,1,1,1,1,1,1,1,1,2,2)^{T}$. 
\end{thm}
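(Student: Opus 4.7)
\emph{Proof proposal.} The plan is to derive the stated linear difference equation by leveraging the period-$6$ behavior of the tropical quantities $\vb{X}_{j,n}$ from Lemma~\ref{periodtropD6}, then read the quadratic growth off the factorization of the resulting characteristic polynomial, and finally extract the leading coefficient by averaging over a full tropical period.

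The first step is to recast \eqref{eq:tropvarD6x} in operator form. Writing $\cT$ for the shift $n\mapsto n+1$, one has $\vb{X}_{1,n} = (1-\cT)(1-\cT^5)\vb{d}_n$, while each of $\vb{X}_{2,n}, \vb{X}_{3,n}, \vb{X}_{4,n}$ expresses the corresponding d-vector $\vb{p}_n,\vb{q}_n,\vb{r}_n$ as a polynomial in $\cT$ applied to $\vb{d}_n$ plus a single $\vb{X}_{j,n}$. For $\vb{v}_n,\vb{j}_n$, subtracting the last two equations of \eqref{bilinearD6deg} shows $\vb{v}_n-\vb{j}_n$ is constant, while their sum is controlled by $\vb{X}_{5,n}+\vb{X}_{6,n}$ modulo shifts of $\vb{d}_n$. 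Lemma~\ref{periodtropD6} gives $(\cT^6-1)\vb{X}_{j,n}=0$ for every $j$; applying $\cL := (\cT^6-1)(\cT^5-1)(\cT-1)$ to $\vb{X}_{1,n}$ and noting $(\cT^6-1)(1-\cT)(1-\cT^5)=\cL$ immediately yields $\cL\vb{d}_n=0$. Since polynomial operators in $\cT$ commute and $\cL$ kills both the $\vb{X}_{j,n}$ and all shifts of $\vb{d}_n$, the same equation $\cL\vb{r}_n=0$ follows for each d-vector appearing in the statement.

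To extract the growth, factor $\cL = (\cT-1)^3\,\Phi_2(\cT)\Phi_3(\cT)\Phi_5(\cT)\Phi_6(\cT)$ into cyclotomic pieces. The triple root at $\cT=1$ produces the polynomial modes $c_0+c_1 n + c_2 n^2$, while the remaining cyclotomic factors contribute bounded oscillatory modes whose periods divide $\mathrm{lcm}(2,3,5,6)=30$. Every d-vector therefore grows at most quadratically, and the leading coefficient is a well-defined direction in $\Z^{12}$.

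To pin down the leading coefficient, I would compute $30$ consecutive d-vectors from the (max,+) recurrence \eqref{bilinearD6deg} starting from the identity initial data \eqref{initialdvD6}, and isolate the quadratic part by $30$-term averaging of the second difference $(\cT-1)^2\vb{d}_n$; this averages to $2c_2$ because the oscillatory modes have zero mean over a full period. The same procedure applied to the remaining sequences should produce the common direction $\vb{a}=(2,1,1,1,1,1,1,1,1,1,2,2)^T$, while the different prefactors $\tfrac{1}{60}$ for $\vb{d}_n,\vb{v}_n,\vb{j}_n$ versus $\tfrac{1}{30}$ for $\vb{p}_n,\vb{q}_n,\vb{r}_n$ are explained by the polynomial-in-$\cT$ multiplicities in the relations $\vb{p}_n=\vb{X}_{2,n}+(\cT^2+\cT^5)\vb{d}_n$, and so on. The principal obstacle here is the bookkeeping of the $[\,\cdot\,]_+$ branches in the (max,+) equations: since the tropical max breaks linearity, one must verify that the same branch is selected throughout the averaging window once $n$ is large enough, which reduces to checking the sign structure of each $\vb{X}_{j,n}$ on its period.
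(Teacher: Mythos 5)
Your proposal is correct and follows essentially the same route as the paper: the operator identity $\vb{X}_{1,n}=(\cT^{6}-\cT^{5}-\cT+1)\vb{d}_{n}$ combined with the period-$6$ property from Lemma~\ref{periodtropD6} gives $\cL\vb{d}_{n}=0$, the other d-vectors inherit the same equation because each differs from a periodic quantity $\vb{X}_{j,n}$ by shifts of $\vb{d}_{n}$ (with $\vb{v}_{n},\vb{j}_{n}$ handled through their difference and sum), and the triple root of $\cL$ at $\cT=1$ forces quadratic growth. The only real divergence is in extracting the leading coefficient: you average $(\cT-1)^{2}\vb{d}_{n}$ over a $30$-term window, whereas the paper simply evaluates the constant $(\cT^{6}-1)(\cT^{5}-1)\vb{d}_{n}=60\vb{a}$ from the first few iterates --- both work, the latter needing fewer terms --- and your derivation of the factor of $2$ between the $\tfrac{1}{60}$ and $\tfrac{1}{30}$ prefactors directly from $\vb{p}_{n}=\vb{X}_{2,n}+\vb{d}_{n+2}+\vb{d}_{n+5}$ is a clean touch. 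One clarification: the branch-selection issue you flag as the principal obstacle is not actually an obstacle, since $\cL\vb{r}_{n}=0$ holds exactly for all $n$ as a consequence of the exact periodicity of the $\vb{X}_{j,n}$, independently of which branch of each $[\,\cdot\,]_{+}$ is active, so the oscillatory modes average out over any full period without further checking.
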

\begin{proof} The quantity $\vb{X}_{1,n}$ can be written in terms of the shift  operator as following 
\begin{equation}\label{quantityX1}
\vb{X}_{1,n} = (\cT^6 - \cT^5 - \cT + 1)\vb{d}_{n}
\end{equation}
Since $\vb{X}_{1,n}$ exhibit periodicity of period $6$, shown in Lemma \ref{periodtropD6}, it follows that the quantity \eqref{quantityX1} yields a characteristic equation in $\cT$,
\begin{equation}\label{relX1}
 (\cT^6 -1 )\vb{X}_{1,n} = (\cT-1)^{3} (\sum_{i=0}^{5}\cT^i) (\sum_{i=0}^{4}\cT^i)\vb{d}_{n}  = 0
\end{equation}
Solving this equation provides the general solution for the $\vb{d}_{n}$ whose leading order term is $n^2$ with some constant vector $\vb{a}$, 
\begin{equation}
\vb{d}_{n} = \vb{a}n^{2} + O(n) 
\end{equation}
We now consider an relation \eqref{relX1} once again to determine the coefficient constant of the leading term of $\vb{d}_{n}$. The expression can be rewritten in the form $(\cT^6 -1 )(\cT^5 - 1)(\cT-1)\vb{d}_{n}$ which yields following relation 
\begin{equation}\label{dnrelation}
\begin{split}
&(\cT^6 -1 )(\cT^5 - 1)(\cT-1)\vb{d}_{n}=  0 \\
& \implies (\cT^6 -1 )(\cT^5 - 1)\vb{d}_{n+1} = (\cT^6 -1 )(\cT^5 - 1)\vb{d}_{n}.
\end{split}
\end{equation}
Substituting the general solution into this relation gives
\begin{equation}\label{D6constantrel}
(\cT^6 -1 )(\cT^5 - 1)\vb{d}_{n} = 60 \vb{a}
\end{equation}
Evaluating the sequence of $\vb{d}_{n}$, obtained from the (max,+) relations \eqref{bilinearD6deg} with initial d-vectors \eqref{initialdvD6}, and then substituting  it into \eqref{D6constantrel}, we obtain  
$$60 \vb{a} = (2,1,1,1,1,1,1,1,1,2,2)^{T} $$
Thus the leading order term of  $\vb{d}_{n}$ is:
$$\vb{d}_{n} = \frac{n^2}{60} \vb{a} + O(n)$$
which agrees with one of the results in \eqref{degreegrowthD6}. 

In the case of d-vector $\vb{p}_{n}$, by using periodicity of quantity $\vb{X}_{2,n}$, $(\cT^6 - 1) \vb{X}_{2,n} = 0$ and applying  $(\cT^5-1)(\cT-1)$ to both sides, one finds 
\begin{align*}
(\cT^6 - 1)(\cT^5-1)(\cT-1)\vb{p}_{n} =0
\end{align*} 
where $\vb{d}_{n+2}$ and $\vb{d}_{n+5}$ vanish since the $\vb{d}_{n}$ satisfy the relation \eqref{dnrelation}. By using a similar argument above, one can show that the d-vectors  $\vb{q}_{n}$ and $\vb{r}_{n}$ satisfy the same linear relation of $\cT$ in \eqref{dnrelation}. 

As for the rest of d-vectors $\vb{v}_{n}$ and $\vb{j}_{n}$, we first consider the last two (max,+) relations in \eqref{bilinearD6deg}.  Subtracting these relations then yields the following expression 
\begin{equation}\label{eq:D6vj}
(\cT - 1)\vb{v}_{n} =  (\cT - 1)\vb{j}_{n}.
\end{equation}
From the periodicity $(\cT^6 -1) \vb{Y}_{4,n} = 0 $, the same procedure used in the case for $\vb{p}_{n}$ gives   
 \begin{equation}
 (\cT^6 -1)(\cT^5 -1)(\vb{v}_{n+1} + \vb{j}_{n+1} - \vb{v}_{n} - \vb{j}_{n}) = 0
 \end{equation}
 Then substituting the identity \eqref{eq:D6vj}, we find that both $\vb{v}_{n}$ and $\vb{j}_{n}$ are solutions of the recurrence in \eqref{dnrelation}.
 
For the rest of proof (determining the coefficients of leading order terms of d-vectors), we can take the same approach as for the case $\vb{d}_{n}$ and the required results follows. 
\end{proof}

Since the degree growth of each variable is quadratic, associated algebraic entropy vanishes, which leads to the conjecture that the deformed type $\rD_{6}$ map is a Liouville integrable map.

\newpage

\section{Local expansion} \label{S:localexpand}


The main result of the previous work \cite{grab} is that the quiver $Q_{A_{2N}}$, arising from the deformed type $A_{2N}$ map via Laurentification, can be obtained by inserting a specific subquiver (referred to as \textit{local expansion}) into the quiver $Q_{A_2}$, which itself arises from the Laurentification of the deformed type $A_{2}$ map, as shown in the Figure \ref{Q4toQ6}. Here we take similar approach to the type $D_{4}$.

\begin{figure}[h!]
\begin{center}
\resizebox{1 \textwidth}{!}{%
 \begin{tikzpicture}[every circle node/.style={draw,scale=0.6,thick},node distance=15mm]

  \node [draw,circle,fill=blue!50,"$12$"] (12) at (0,0) {};
  
     \node [draw,circle,fill=red!50,"$6$"] (6) [right= of 12] {};
      \node [draw,circle,fill=red!50,"$7$"] (7) [right=of 6] {};
      \node [draw,circle,fill=red!50,"$8$"] (8) [right=of 7] {};
      \node [draw,circle,fill=red!50,"$9$"] (9) [right=of 8] {};
      \node [draw,circle,fill=red!50,"$10$"right] (10) [below right=of 9] {};
      \node [draw,circle,fill=red!50,"$5$" left] (5) [below left=of 6] {};

      \node [draw,circle,fill=blue!50,"$13$"] (13) [right=of 9] {};

       \node [draw,circle,fill=red!50,"$4$"below] (4) [below right=of 5] {};
       \node [draw,circle,fill=red!50,"$11$"below] (11) [right=of 4] {};
       \node [draw,circle,fill=red!50,"$1$"below] (1) [right=of 11] {};
       \node [draw,circle,fill=red!50,"$2$"below] (2) [right=of 1] {};
       
        \node [draw,circle,fill=red!50,"$3$" below] (3) at (4.4,-3.5) {};
       
       \node (a) at (4.4,-5.5) {\Large (a) $Q_{A_{4}}$};

  \begin{scope}[>=Latex]
  
  \draw[-> , thick]  (1) edge (2); 
  \draw[-> , thick]  (1) edge (7); 
  \draw[-> , thick]  (9) edge (1);
  \draw[-> , thick]  (11) edge (1);
   \draw[-> , thick]  (1) edge (10);

 \draw[-> , thick]  (2) edge (3);
 \draw[-> , thick]  (2) edge (8);
 \draw[-> , thick]  (2) edge (13);

 \draw[-> , thick]  (4) edge (11);
  \draw[-> , thick]  (12) edge (4);
   \draw[-> , thick]  (7) edge (4);
    \draw[-> , thick]  (3) edge (4);
    
 \draw[-> , thick]  (5) edge (12);
  \draw[-> , thick]  (5) edge (11);
   \draw[-> , thick]  (3) edge (5);
    \draw[-> , thick]  (7) edge (5);
   
   \draw[-> , thick]  (6) edge (12);
    \draw[-> , thick]  (6) edge (7);
     \draw[-> , thick]  (11) edge (6);
      \draw[-> , thick]  (6) edge[bend left= 15] (3);
      
    \draw[-> , thick]  (7) edge (8);
    
     \draw[-> , thick]  (8) edge (11);
      \draw[-> , thick]  (8) edge (9);
       \draw[-> , thick]  (10) edge (8);
       
        \draw[-> , thick]  (13) edge (9);
         \draw[-> , thick]  (3) edge[bend left=15] (9);
         
          \draw[-> , thick]  (13) edge (10);
           \draw[-> , thick]  (10) edge (3);
           
            \draw[-> , thick]  (3) edge (13);
            
             \draw[-> , thick]  (12) edge (3);

    \end{scope}

\draw [-{Latex[length=3mm]}] (10,-1.2) -- (11,-1.2) node[midway,sloped,above] {Expansion};

  \node [draw,circle,fill=blue!50,"$12$"] (12) at (12,0) {};
  
     \node [draw,circle,fill=red!50,"$7$"] (6) [right= of 12] {};
      \node [draw,circle,fill=red!50,"$8$"] (7) [right=of 6] {};
       \node [draw,circle,fill=green!50,"$9$"] (a1) [right=of 7] {};
        \node [draw,circle,fill=green!50,"$10$"] (a2) [right=of a1] {};
      
      \node [draw,circle,fill=red!50,"$11$"] (8) [right=of a2] {};
      \node [draw,circle,fill=red!50,"$12$"] (9) [right=of 8] {};
      \node [draw,circle,fill=red!50,"$13$"right] (10) [below right=of 9] {};
      \node [draw,circle,fill=red!50,"$6$" left] (5) [below left=of 6] {};

      \node [draw,circle,fill=blue!50,"$17$"] (13) [right=of 9] {};

       \node [draw,circle,fill=red!50,"$5$"below] (4) [below right=of 5] {};
       \node [draw,circle,fill=red!50,"$14$"below] (11) [right=of 4] {};
        \node [draw,circle,fill=green!50,"$15$"below] (b1) [right=of 11] {};
         \node [draw,circle,fill=green!50,"$1$"below] (b2) [right=of b1] {};
       
       \node [draw,circle,fill=red!50,"$2$"below] (1) [right=of b2] {};
       \node [draw,circle,fill=red!50,"$3$"below] (2) [right=of 1] {};
       
        \node [draw,circle,fill=red!50,"$4$" below] (3) at (18,-3.5) {};
       
       \node (a) at (18,-5.5) {\Large (b) $Q_{A_{6}}$};

  \begin{scope}[>=Latex]
  
  \draw[-> , thick]  (1) edge (2); 
 
  \draw[-> , thick]  (9) edge (1);

   \draw[-> , thick]  (1) edge (10);

 \draw[-> , thick]  (2) edge (3);
 \draw[-> , thick]  (2) edge (8);
 \draw[-> , thick]  (2) edge (13);

 \draw[-> , thick]  (4) edge (11);
  \draw[-> , thick]  (12) edge (4);
   \draw[-> , thick]  (7) edge (4);
    \draw[-> , thick]  (3) edge (4);
    
 \draw[-> , thick]  (5) edge (12);
  \draw[-> , thick]  (5) edge (11);
   \draw[-> , thick]  (3) edge[bend left=35] (5);
    \draw[-> , thick]  (7) edge (5);
   
   \draw[-> , thick]  (6) edge (12);
    \draw[-> , thick]  (6) edge (7);
     \draw[-> , thick]  (11) edge (6);
      \draw[-> , thick]  (6) edge[bend left= 25] (3);

      \draw[-> , thick]  (8) edge (9);
       \draw[-> , thick]  (10) edge (8);
       
        \draw[-> , thick]  (13) edge (9);
         \draw[-> , thick]  (3) edge[bend left=25] (9);
         
          \draw[-> , thick]  (13) edge (10);
           \draw[-> , thick]  (10) edge[bend left= 35] (3);
           
            \draw[-> , thick]  (3) edge (13);
            
             \draw[-> , thick]  (12) edge (3);
        

 \draw[-> , thick]  (7) edge (a1);
  \draw[-> , thick]  (a1) edge (a2);
   \draw[-> , thick]  (a2) edge (8);
    \draw[-> , thick]  (11) edge (b1);
     \draw[-> , thick]  (b1) edge (b2);
      \draw[-> , thick]  (b2) edge (1);
      
      \draw[-> , thick]  (b1) edge (7);
     \draw[-> , thick]  (b2) edge (a1);
      \draw[-> , thick]  (1) edge (a2);

      \draw[-> , thick]  (a1) edge (11);
	\draw[-> , thick]  (a2) edge (b1);  
	\draw[-> , thick]  (8) edge (b2);

    \end{scope}

 \node [draw,circle,fill=red!50,"$7$"] (5) at (6,-7) {};
   \node [draw,circle,fill=red!50,"$8$"] (6)[right=of 5]{};
  
  \node [draw,circle,fill=red!50,"$11$"below] (4) at (6,-9.2) {};
   \node [draw,circle,fill=red!50,"$1$"below] (7) [right=of 4] {};
    
  \begin{scope}[>=Latex]
            
       \draw[-> , thick]  (5) edge (6);
        \draw[-> , thick]  (6) edge (4);
         \draw[-> , thick]  (4) edge (7);
          \draw[-> , thick]  (7) edge (5);

    \end{scope}
\draw [-{Latex[length=3mm]}] (10,-8.2) -- (11,-8.2) node[midway,sloped,above] {Expansion };
    \node [draw,circle,fill=red!50,"$8$"] (6) at (13,-7) {};
  \node [draw,circle,fill=red!50,"$14$" below] (4) at (13,-9.2) {};
   \node [draw,circle,fill=green!50,"$9$"] (7) [right=of 6] {};
  \node [draw,circle,fill=green!50,"$15$"below] (11) [right=of 4 ]  {};
   \node [draw,circle,fill=green!50,"$10$"] (8) [right=of 7] {};
  \node [draw,circle,fill=green!50,"$1$" below] (1) [right=of 11 ]  {};
    \node [draw,circle,fill=red!50,"$11$"] (9) [right=of 8] {};
  \node [draw,circle,fill=red!50,"$2$" below] (2) [right=of 1 ]  {};

  \begin{scope}[>=Latex]
   
    \draw[-> , thick]  (7) edge (8);
     \draw[-> , thick]  (8) edge (11);
      \draw[-> , thick]  (11) edge (1);
       \draw[-> , thick]  (1) edge (7);
       
        \draw[-> , thick]  (8) edge (9);
         \draw[-> , thick]  (9) edge (1);
          \draw[-> , thick]  (1) edge (2);
           \draw[-> , thick]  (2) edge (8);
       
        \draw[-> , thick]  (6) edge (7);
         \draw[-> , thick]  (7) edge (4);
          \draw[-> , thick]  (4) edge (11);
           \draw[-> , thick]  (11) edge (6);

    \end{scope}

\end{tikzpicture}
}
\end{center}
\caption{Extension from $Q_{A_{4}}$ to $Q_{A_{6}}$. Green nodes are new vertices which are inserted in $Q_{A_{4}}$ in the form of local expansion.}
\label{Q4toQ6}
\end{figure}
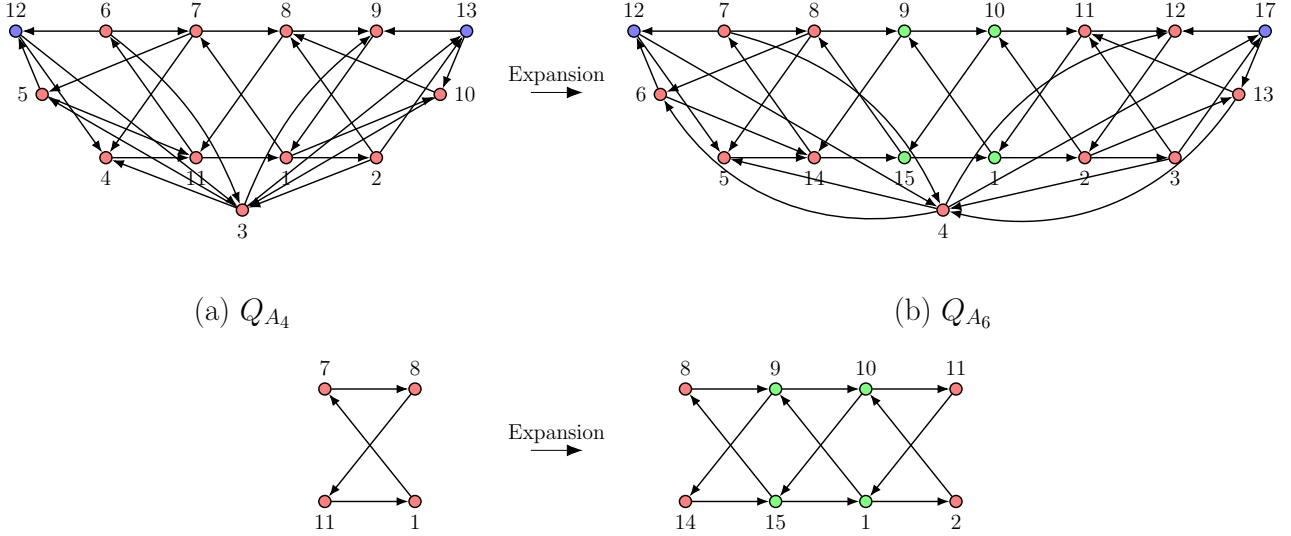


Comparing the quivers, $Q_{D_{4}}^{(2)}$ (Figure \ref{DeformedQD4} (b))  and $Q_{D_{6}}$ (Figure \ref{quiverD6}), one can see that the  $Q_{D_{4}}^{(2)}$ can be extended to  $Q_{D_{6}}$ via a transformation that splits the node 3 and 5 into two nodes each, then insert a subquiver that has the same structure as illustrated in Figure \ref{extensionD4toD6}, and relabels the nodes accordingly.  
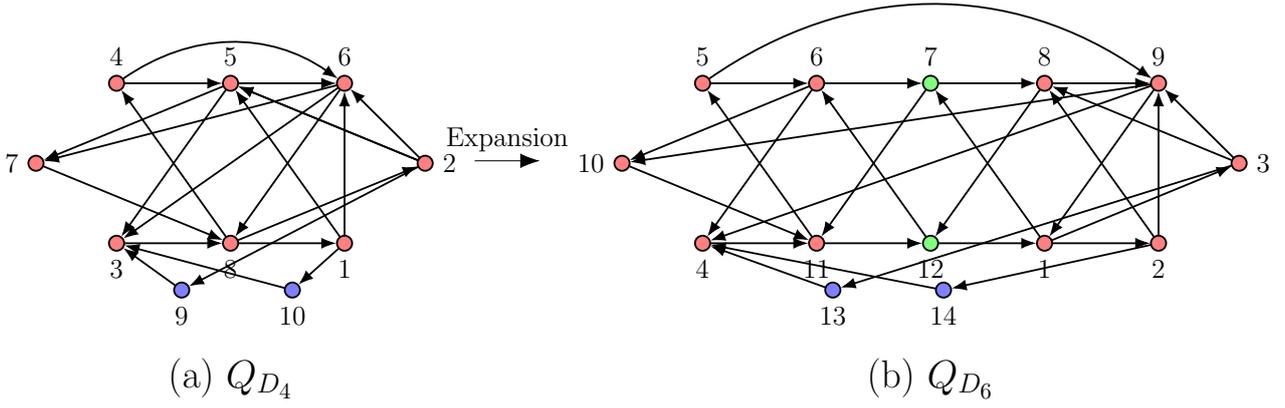
\begin{figure}[h!]
\begin{center}
\resizebox{1\textwidth}{!}{%
 \begin{tikzpicture}[every circle node/.style={draw,scale=0.6,thick},node distance=15mm]
 
  \node [draw,circle,fill=red!50,"$4$"] (a7) at (0,0) {};
  \node [draw,circle,fill=red!50,"$5$"] (a2) [right=of a7] {};
  \node [draw,circle,fill=red!50,"$6$"] (a5) [right=of a2] {};
  \node [draw,circle,fill=red!50,"$2$" right] (a4) [below right=of a5] {};
  \node [draw,circle,fill=red!50,"$1$" below] (a3) [below left=of a4] {};
  \node [draw,circle,fill=red!50,"$8$" below] (a6) [left=of a3] {};
  \node [draw,circle,fill=red!50,"$3$" below] (a1) [left=of a6] {};
  \node [draw,circle,fill=red!50,"$7$" left]  (a8) [above left=of a1] {};
   \node [draw,circle,fill=blue!50,"$10$" below]  (10) at (2.7,-3.2) {};
   \node [draw,circle,fill=blue!50,"$9$" below]  (9) at (1,-3.2) {};
   
   \node (b) [below=of a6] {\Large (a) $Q_{D_{4}}$};
 
  \begin{scope}[>=Latex]
  
  \draw[-> , thick]  (a2) edge (a1);
   \draw[-> , thick]  (a3) edge (a2);
  \draw[-> , thick]  (a1) edge (a6);
  \draw[-> , thick]  (a5) edge (a1);
  \draw[-> , thick]  (a2) edge (a8);
  \draw[-> , thick] (a7) edge (a2);
    \draw[-> , thick] (a2) edge (a5);
      \draw[-> , thick] (a4) edge (a2);
       \draw[-> , thick] (a4) edge (a2);
       \draw[-> , thick] (a6) edge (a3);
 \draw[-> , thick]  (a3) edge (a5);
 \draw[-> , thick]  (a6) edge (a4);
  \draw[-> , thick]   (a4) edge (a5);
   \draw[-> , thick]   (a7) edge[bend left=35] (a5);
    \draw[-> , thick]  (a5) edge(a6);
    \draw[-> , thick] (a5) edge(a8);
    \draw[-> , thick]  (a8) edge(a6);
    \draw[-> , thick] (a6) edge(a7);
    
    \draw[-> , thick] (9) edge(a1);
     \draw[-> , thick] (a4) edge(9);
     \draw[-> , thick] (10) edge(a1);
     \draw[-> , thick] (a3) edge(10);

    \end{scope}
\draw [-{Latex[length=3mm]}] (5.5,-1.2) -- (6.5,-1.2) node[midway,sloped,above] {Expansion};

     \node [draw,circle,fill=red!50,"$5$"] (a7) at (9,0) {};
  \node [draw,circle,fill=red!50,"$6$"] (a2) [right=of a7] {};
  
   \node [draw,circle,fill=green!50,"$7$"] (aa1) [right=of a2] {};
     \node [draw,circle,fill=red!50,"$8$"] (aa2) [right=of aa1] {};

  \node [draw,circle,fill=red!50,"$9$"] (a5) [right=of aa2] {};
  \node [draw,circle,fill=red!50,"$3$" right] (a4) [below right=of a5] {};
  \node [draw,circle,fill=red!50,"$2$" below] (a3) [below left=of a4] {};
  
  \node [draw,circle,fill=red!50,"$1$" below] (bb1) [left=of a3] {};
  \node [draw,circle,fill=green!50,"$12$" below] (bb2) [left=of bb1] {};
  
  \node [draw,circle,fill=red!50,"$11$" below] (a6) [left=of bb2] {};
  \node [draw,circle,fill=red!50,"$4$" below] (a1) [left=of a6] {};
  \node [draw,circle,fill=red!50,"$10$" left]  (a8) [above left=of a1] {};
   \node [draw,circle,fill=blue!50,"$14$" below]  (10) at (12.7,-3.2) {};
   \node [draw,circle,fill=blue!50,"$13$" below]  (9) at (11,-3.2) {};

   \node (b) [below=of bb2] {\Large (b) $Q_{D_{6}}$};

  \begin{scope}[>=Latex]
  
     \draw[-> , thick]  (a2) edge (a1);
   \draw[-> , thick]  (a3) edge (aa2);
  \draw[-> , thick]  (a1) edge (a6);
  \draw[-> , thick]  (a5) edge (a1);
  \draw[-> , thick]  (a2) edge (a8);
  \draw[-> , thick] (a7) edge (a2);
    \draw[-> , thick] (a2) edge (aa1);
      \draw[-> , thick] (a4) edge (aa2);
 
       \draw[-> , thick] (a6) edge (bb2);
 \draw[-> , thick]  (a3) edge (a5);
 \draw[-> , thick]  (bb1) edge (a4);
  \draw[-> , thick]   (a4) edge (a5);
   \draw[-> , thick]   (a7) edge[bend left=35] (a5);
    \draw[-> , thick]  (a5) edge(bb1);
    \draw[-> , thick] (a5) edge(a8);
    \draw[-> , thick]  (a8) edge(a6);
    \draw[-> , thick] (a6) edge(a7);
    
     \draw[-> , thick] (aa2) edge(a5);
     \draw[-> , thick] (aa1) edge (aa2);
     \draw[-> , thick] (bb2) edge(bb1);
     \draw[-> , thick] (bb1) edge(a3);
     \draw[-> , thick] (bb2) edge(a2);
     \draw[-> , thick] (bb1) edge(aa1);
     \draw[-> , thick] (aa1) edge(a6);
     \draw[-> , thick] (aa2) edge(bb2);
    
    \draw[-> , thick] (9) edge (a1);
     \draw[-> , thick] (a4) edge (9);
     
      \draw[-> , thick] (10) edge (a1);
     \draw[-> , thick] (a3) edge (10);

    \end{scope}

   \node [draw,circle,fill=red!50,"$5$"] (5) at (3.5,-7) {};
  \node [draw,circle,fill=red!50,"$8$"below] (8) at (3.5,-9.2) {};
  
 \node (4) at (2.5,-7) {}; 
  \node (3) [below=of 4] {}; 
  \node (7) at (2.5,-7.5) {}; 
  \node (3a) at (2.5,-9.2){};
    \node (7a) at (2.5,-8.8) {}; 
    \node (4a) at (2.5,-7.7) {};

   \node (6) at (4.5,-7) {}; 
    \node (1) [below=of 6] {}; 
  \node (2) at (4.5,-7.5){};
  \node (1a) at (4.5,-9.2){};
   \node (2a) at (4.5,-8.8) {}; 
   \node (6a) at (4.5,-7.7) {};

  \begin{scope}[>=Latex]
      \draw[-> , thick]  (4) edge (5);
      \draw[- , thick]  (5) edge (7);
      \draw[- , thick]  (5) edge (3);
      \draw[-> , thick]  (3a) edge (8);
      
       \draw[- , thick]  (5) edge (6);
       \draw[-> , thick]  (2) edge (5);
        \draw[-> , thick]  (1) edge (5);
        \draw[- , thick]  (8) edge (1a);
         \draw[-> , thick]  (7a) edge (8);
          \draw[- , thick]  (8) edge (4a);
          
           \draw[->, thick]  (6a) edge (8);
            \draw[- , thick]  (8) edge (2a) ;

    \end{scope}
\draw [-{Latex[length=3mm]}] (5.5,-8.2) -- (6.5,-8.2) node[midway,sloped,above] {Expansion };
    \node [draw,circle,fill=red!50,"$6$"] (6) at (8.5,-7) {};
  \node [draw,circle,fill=red!50,"$11$" below] (11) at (8.5,-9.2) {};
   \node [draw,circle,fill=green!50,"$7$"] (7) [right=of 6] {};
  \node [draw,circle,fill=green!50,"$12$"below] (12) [right=of 11 ]  {};
   \node [draw,circle,fill=red!50,"$8$"] (8) [right=of 7] {};
  \node [draw,circle,fill=red!50,"$1$" below] (1) [right=of 12 ]  {};

    \node (5) at (7.5,-7) {}; 
  \node (4) [below=of 5] {}; 
  \node (10) at (7.5,-7.5) {}; 
  \node (4a) at (7.5,-9.2){};
    \node (10a) at (7.5,-8.8) {}; 
    \node (5a) at (7.5,-7.7) {};

   \node (9) at (12.9,-7) {}; 
    \node (2) [below=of 9] {}; 
  \node (3) at (12.9,-7.5){};
  \node (2a) at (12.9,-9.2){};
   \node (3a) at (12.9,-8.8) {}; 
   \node (9a) at (12.9,-7.7) {};

  \begin{scope}[>=Latex]
  
    \draw[-> , thick]  (5) edge (6);
    \draw[-, thick]  (6) edge (10);
     \draw[- , thick]  (6) edge (4);
      \draw[-> , thick]  (4a) edge (11);
       \draw[-> , thick]  (10a) edge (11);
       \draw[- , thick]  (11) edge (5a);
       
       \draw[- , thick]  (8) edge (9);
       \draw[-> , thick]  (3) edge (8);
       \draw[-> , thick]  (2) edge (8);
       \draw[-> , thick]  (9a) edge (1);
       \draw[- , thick]  (1) edge (3a);
       \draw[- , thick]  (1) edge (2a);
       
       \draw[-> , thick]  (6) edge (7);
        \draw[-> , thick]  (7) edge (8);
         \draw[-> , thick]  (11) edge (12);
          \draw[-> , thick]  (12) edge (1);
           \draw[-> , thick]  (12) edge (6);
            \draw[-> , thick]  (7) edge (11);
             \draw[-> , thick]  (1) edge (7);
              \draw[-> , thick]  (8) edge (12);

    \end{scope}

\end{tikzpicture}
}
\end{center}
\caption{Extension from $Q_{D_{4}}$ to $Q_{D_{6}}$.}\label{extensionD4toD6}
\end{figure}
This expansion suggests that the cluster $\tilde{\vb{x}}$, associated with $Q_{D_{4}}^{(2)}$, extends to the type $D_6$ case by shifting the subindices $1\leq i \leq 5 $ of the $\tx_i$ by 1 ($\tx_i \to \tx_{i+1}$) and while the remaining indices are shifted by 3  ($\tx_i \to \tx_{i+3}$). The new cluster variables $\tx_{1}$, $\tx_{7}$, $\tx_8$, $\tx_{12}$, corresponding to additional nodes, are then inserted into the cluster $\tilde{\vb{x}}$, resulting in the cluster $\hat{\vb{x}}$ \eqref{D6initialvar}, which is associated with $Q_{D_{6}}$. From these examples, one can infer that applying this expansion recursively constructs a family of quivers with $4N$ nodes for $N\geq 3$ that arises from the Laurentification of deformed type $D_{2N}$ map,$\tilde{\varphi}_{D_{2N}}$ whose iteration is given by  
\begin{equation}\label{D2Ndmap}
\begin{split}
x_{1,n+1}x_{1,n} &= b_{2N-1}b_{2N} + x_{2,n}\\
x_{2,n+1}x_{2,n} &= b_{2N-1}b_{2N} + x_{1,n+1}x_{3,n}\\
&\vdots \\ 
x_{N-2,n+1}x_{N-2,n} &= b_{2N-1}b_{2N} + x_{N-3,n+1}x_{N-1,n}x_{N,n}\\
x_{N-1,n+1}x_{N-1,n} &= b_{2N-1} + x_{N-2,n+1}\\
x_{N,n+1}x_{N,n} &= b_{2N} + x_{N-2,n+1}\\
\end{split}
\end{equation}
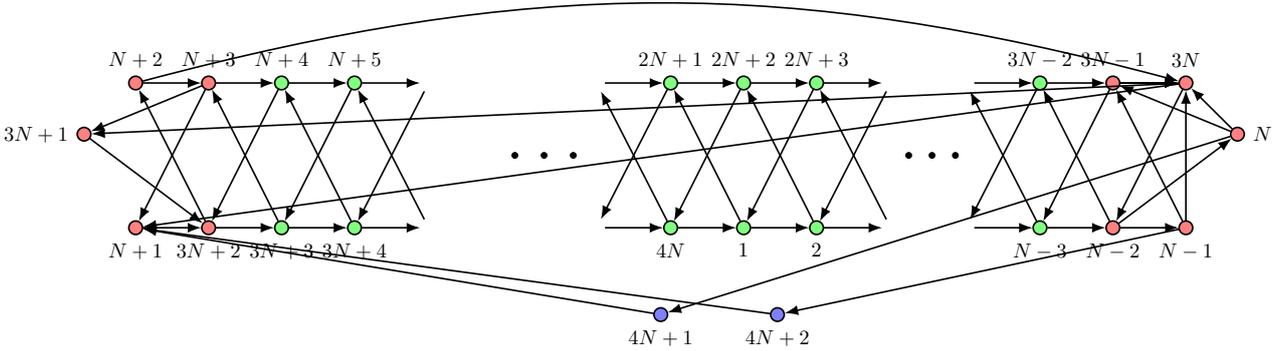
\begin{figure}[h!]

\begin{center}
\resizebox{1\textwidth}{!}{%
 \begin{tikzpicture}[every circle node/.style={draw,scale=0.6,thick},node distance=10mm]

  \node [draw,circle,fill=red!50,"\footnotesize{$N+2$}"] (bb1) at (1,0) {};
  \node [draw,circle,fill=red!50,"\footnotesize{$N+3$}"] (aa3) [right=of bb1] {};
  \node [draw,circle,fill=green!50,"\footnotesize{$N+4$}"] (bb3) [right=of aa3] {};
  \node [draw,circle,fill=green!50,"\footnotesize{$N+5$}"] (aa5) [right=of bb3] {};
  \node [draw,circle,fill=red!50,"\footnotesize{$3N+1$}" left]  (cc1) [below left=of bb1] {};
  \node  (bb5) [right=of aa5] {};

  \node [draw,circle,fill=red!50,"\footnotesize{$N+1$}" below] (aa2) at (1,-2.5) {} ; 
  \node [draw,circle,fill=red!50,"\footnotesize{$3N+2$}" below] (bb2) [right=of aa2] {};
  \node [draw,circle,fill=green!50,"\footnotesize{$3N+3$}" below] (aa4) [right=of bb2] {};
   \node [draw,circle,fill=green!50,"\footnotesize{$3N+4$}" below] (bb4) [right=of aa4] {};
   \node (aa6) [right=of bb4] {};

  \begin{scope}[>=Latex]

\draw[-> , thick]  (cc1) edge (bb2);

\draw[-> , thick]  (aa3) edge (cc1);

\draw[-> , thick]  (bb2) edge (bb1); 
\draw[-> , thick]  (bb1) edge (aa3);

\draw[-> , thick]  (aa3) edge (aa2); 
\draw[-> , thick]   (aa2) edge (bb2); 

\draw[-> , thick]  (aa4) edge (aa3);
\draw[-> , thick]  (aa3) edge (bb3);

\draw[-> , thick]   (bb2) edge (aa4); 
\draw[-> , thick]    (aa5) edge (aa4); 
\draw[-> , thick]    (aa4) edge (bb4);

\draw[-> , thick]   (bb3) edge (aa5);
\draw[-> , thick]  (aa6) edge (aa5);

\draw[-> , thick]   (bb3) edge (bb2);
\draw[-> , thick]  (bb4) edge (bb3);
\draw[-> , thick]  (bb5) edge (bb4);

\draw[-> , thick]  (bb4) edge (aa6);
\draw[-> , thick]  (aa5) edge (bb5);

\end{scope}

 \node [draw,circle,fill=green!50,"\footnotesize{$3N-2$}"] (b2) at (15.5,0) {};
  \node [draw,circle,fill=red!50,"\footnotesize{$3N-1$}"] (a4) [right=of b2] {};
  \node [draw,circle,fill=red!50,"\footnotesize{$3N$}"] (b4) [right=of a4] {};
  \node [draw,circle,fill=red!50,"\footnotesize{$N$}" right] (c2) [below right=of b4] {};
  \node (a1) [left=of b2] {};
   \node (a0) [left=of a1] {};
   \node (B0) [draw,circle,fill=green!50,"\footnotesize{$2N+3$}"] [left=of a0] {};
   \node (A1) [draw,circle,fill=green!50,"\footnotesize{$2N+2$}"][left=of B0] {};
   \node (B2) [draw,circle,fill=green!50,"\footnotesize{$2N+1$}"][left=of A1] {};
   \node (A3)[left=of B2] {};

  \node [draw,circle,fill=green!50,"\footnotesize{$N-3$}" below](a3) at (15.5,-2.5) {};
  \node (b1) [left=of a3] {};
  \node (b0) [left=of b1] {};
  \node (A0)[draw,circle,fill=green!50,"\footnotesize{$2$}" below] [left=of b0] {};
  \node (B1)[draw,circle,fill=green!50,"\footnotesize{$1$}" below] [left=of A0] {};
  \node (A2)[draw,circle,fill=green!50,"\footnotesize{$4N$}" below] [left=of B1] {};
  \node (B3) [left=of A2] {};
  \node [draw,circle,fill=red!50,"\footnotesize{$N-2$}" below] (b3) [right=of a3] {};
   \node [draw,circle,fill=red!50,"\footnotesize{$N-1$}" below] (a5) [right=of b3] {};

\draw[decorate sep={1mm}{5mm},fill] (6.5,-1.25) -- (7.5,-1.25);
\draw[decorate sep={1mm}{4mm},fill] (13.25,-1.25) -- (14.25,-1.25);

 \node [draw,circle,fill=blue!50,"\footnotesize{$4N+1$}" below] (4N+1) at (9,-4) {} ; 
 
 \node [draw,circle,fill=blue!50,"\footnotesize{$4N+2$}" below] (4N+2) at (11,-4) {} ;

  \begin{scope}[>=Latex]

  \draw[-> , thick]  (4N+1) edge (aa2);
   \draw[-> , thick]  (4N+2) edge (aa2); 
    \draw[-> , thick]  (c2) edge (4N+1);
   \draw[-> , thick]  (a5) edge (4N+2);

    \draw[-> , thick]  (a3) edge  (a1);
   \draw[-> , thick]   (a4) edge  (a3);
    \draw[-> , thick]  (a5) edge  (a4);

     \draw[-> , thick]   (b2) edge  (b1);
   \draw[-> , thick]  (b3) edge  (b2);
   \draw[-> , thick]  (b4) edge  (b3);
   
   \draw[-> , thick]   (b4) edge  (cc1);
     \draw[-> , thick]   (b4) edge  (aa2);
     \draw[-> , thick]   (bb1) edge[bend left=15]  (b4);

 \draw[-> , thick]   (a1) edge  (b2);
   \draw[-> , thick]  (b2) edge  (a4);

 \draw[-> , thick]   (a3) edge  (b3);
   \draw[-> , thick]  (b3) edge  (c2);
    \draw[-> , thick]    (b3) edge  (a5);

   \draw[-> , thick]   (a4) edge  (b4);

\draw[-> , thick]   (c2) edge  (a4);
\draw[-> , thick]   (c2) edge  (b4);
\draw[-> , thick]    (a5) edge  (b4);

\draw[-> , thick]     (a4) edge  (b4);

   \draw[-> , thick]   (a4) edge  (b4);

\draw[-> , thick]  (b1) edge  (a3); 

\draw[-> , thick]   (b0) edge  (B0); 
\draw[-> , thick]  (B0) edge  (B1);
\draw[-> , thick]   (B1) edge  (B2); 
\draw[-> , thick]  (B2) edge  (B3);

\draw[-> , thick]  (A0) edge  (b0); 
\draw[-> , thick]   (B0) edge  (a0);
\draw[-> , thick]    (A1) edge  (B0); 
\draw[-> , thick]    (B1) edge  (A0);
\draw[-> , thick]     (A2) edge  (B1); 
\draw[-> , thick]   (B2) edge  (A1);
\draw[-> , thick]   (A3) edge  (B2); 
\draw[-> , thick]  (B3) edge  (A2);

\draw[-> , thick]   (A2) edge  (A3); 
\draw[-> , thick]   (A1) edge  (A2); 
\draw[-> , thick]   (A0) edge  (A1); 
\draw[-> , thick]  (a0) edge  (A0);
    \end{scope}

\end{tikzpicture}
}
\end{center}
\caption{(potential) quiver $Q_{D_{2N}}$ emerged via Laurentification of the deformed type $D_{2N}$ map \eqref{D2Ndmap}. } \label{D2Nquiver}
\end{figure}

Let us denote the initial cluster variables, corresponding to the nodes of the quiver, by following tau-functions, 
\begin{equation}\label{D2Ninitialcluster}
\vb{\tx}_{2N} = (\tx_{1},\tx_{2}, \dots, \tx_{4N}) = (\xi_{N-2,0}, \dots, \xi_{1,0}, r_{0},s_{0}, \tau_{0},\tau_{1}, \dots, \tau_{2N},\rho_{0}, \eta_{1,0}, \dots , \eta_{N-2,0} )
\end{equation}
Recall the variable transformations for $x$ variables in type $D_{4}$ case are given by 
\begin{align*}
x_{1} = \frac{\tx_{7}\tx_{3} }{\tx_{6}\tx_{4}}, \quad x_{2}= \frac{\tx_{8}}{\tx_{6}\tx_{5}}, \quad x_{3} = \frac{\tx_{1}\sigma_{0}}{\tx_{6}}, \quad x_{4} = \frac{\tx_{2}}{\tx_{6}\sigma_{0}}
\end{align*}
We then shift the indices of $x_{j}$ for $j=3,4$ by 1 $(j \to j+1)$, while at the same time shifting the indices of the $\tx_{i}$ variables, according to the cluster extension described above. After this shift, we define new variables
\begin{align*}
x_{3} = \frac{\tx_{12}}{\tx_{9} \tx_{7}}, \quad x_{4} = \frac{\tx_{1}}{\tx_{9} \tx_{8}}
\end{align*}
By including this, we have following variable transformation,
\begin{align*}
x_{1} = \frac{\tx_{10}\tx_{4} }{\tx_{9}\tx_{5}}, \quad x_{2}= \frac{\tx_{11}}{\tx_{9}\tx_{6}},\quad x_{3}= \frac{\tx_{12}}{\tx_{9}\tx_{7}},\quad x_{4} = \frac{\tx_{1}}{\tx_{9}\tx_{8}}, \quad  x_{5} = \frac{\tx_{2}\sigma_{0}}{\tx_{9}}, \quad x_{6} = \frac{\tx_{3}}{\tx_{9}\sigma_{0}}
\end{align*}
which matches with \eqref{vartransformD6}. Then same as above, one can deduce the structure of the variable transformation (shown in \eqref{D2N: xvartrans}) that brings the deformed type $D_{2N}$ map \eqref{D2Ndmap}  into the form of cluster map defined by the cluster mutations in cluster algebra, equipped with the cluster $\vb{\tx}_{2N}$ and the quiver $Q_{D_{2N}}$. 
\begin{equation}\label{D2N: xvartrans}
\begin{split}
&x_{1} = \frac{\tx_{3N+1}\tx_{N+1}}{\tx_{3N} \tx_{N+2}}, \quad x_{2} = \frac{\tx_{3N+2}}{\tx_{3N} \tx_{N+3}} ,\quad \dots, \quad  x_{N} = \frac{\tx_{4N}}{\tx_{3N} \tx_{2N+1}},\\
& x_{N+1} = \frac{\tx_{1}}{\tx_{3N} \tx_{2N+2}}, \quad \dots, x_{2N-2} = \frac{\tx_{N-2}}{\tx_{3N} \tx_{3N-1}}, \quad x_{2N-1} = \frac{\tx_{N-1}\sigma_{0}}{\tx_{3N}}, \quad x_{2N} = \frac{\tx_{N}}{\tx_{3N}\sigma_{0}}  
\end{split} 
\end{equation}
As for the associated symplectic coordinates, we consider the symplectic structure in the cluster algebra of Dynkin type $D_{2N}$. Let  $e_{m} = (0, \dots, 0 ,\underbrace{1}_{\text{ m\ts{th} position} },0,\dots, 0)$ be standard basis vectors for $1 \leq m \leq 2N $. The exchange matrix corresponding to Dynkin type $D_{2N}$, written in the form, 
\begin{center}
\footnotesize
 \begin{tikzpicture}
 \node at (-3.5,0) {$B_{D_{2N}}=$};
        \matrix [inner sep=0pt, 
      nodes={inner sep=.3333em}, 
      matrix of math nodes,
      left delimiter=(,
      right delimiter=),
    ] (m)
        {
      0 & 1 &  &  &  &  & \\ 
        -1 & \phantom{0} & \phantom{1} &  &  \\
        &  \phantom{-1} & 0 & 1 &  &  &   \\ 
         &   &  -1 & 0 & 1 & 0  & 0 &    \\
          &  &  & -1& 0 & 1 &  1& \\
   &  &  & 0 & -1  & 0 & 0 &    \\ 
  &  &  & 0 & -1  & 0 & 0 &    \\ 
        };
        \draw[loosely dotted,thick] (m-2-1)-- (m-4-3);
\draw[loosely dotted,thick] (m-1-1)-- (m-3-3);
\draw[loosely dotted,thick] (m-1-2)-- (m-3-4);
 \end{tikzpicture}
\end{center}
has rank $2N-2 < 2N$, with kernel $\ker B_{D_{2N}}$ consisting of two linearly independent vectors, $\vb{u}_{j} = (u^{(j)}_{1}, u^{(j)}_{2}, \dots , u^{(j)}_{2N})$ for $ j= 1 ,  2$, namely, 
\begin{equation}\label{kernD2N}
\vb{u}_{1} =  e_{ 2N}+  \sum_ {k=1}^{N-1} e_{2k-1} , \quad \vb{u}_{2} =    \sum_ {k=1}^{N}e_{2k-1}.
\end{equation}
and image $\im B_{D_{2N}}$ spanned by $2N-2$ linearly independent vectors   $\vb{v}_{j} =  (v^{(j)}_{1}, v^{(j)}_{2}, \dots , v^{(j)}_{2N})$ of $\im B_{D_{2N}}$, for $1 \leq j \leq 2N-2$, specifically, 
\begin{equation}\label{ImD2N}
\vb{v}_{1}  =  e_{2}, \quad \vb{v}_{l}=  e_{ l+1} -  e_{l-1} \ (2\leq l \leq 2N-3), \quad \vb{v}_{2N-2}= e_{ 2N-1} + e_{ 2N} - e_{ 2N-3}.
\end{equation}
Then null distribution of associated presymplectic form, $\om$, 
\begin{align*}
\om = \sum^{2N}_{i<j}\frac{b_{ij}}{x_{i}x_{j}}\dd x_{i} \w \dd x_{j}
\end{align*}
 is generated by the two vector fields  $\vb{z}_{1} = x_{2N}\partial_{x_{2N}} + \sum_{k=1}^{N-1}x_{2k-1}\partial_{x_{2k-1}}$ and $\vb{z}_{2} = \sum_{k=1}^{N}x_{2k-1}\partial_{x_{2k-1}} $. Therefore  corresponding symplectic coordinates is given by
\begin{align*}
y_{1}&=x_{2},\\
y_{i} &= x_{i+1}/x_{i-1} \quad \text{for} \ 2 \leq i \leq 2N-3 \\ 
y_{2N-2} &= x_{2N-1}x_{2N}/x_{2N-3}.
\end{align*}
and they are invariant under the flow generated by the vector fields $\vb{z}_{1}$ and $\vb{z}_{2}$. By substituting the $x_{i}$ \eqref{D2N: xvartrans} into the reduced coordinates, we have 
\begin{equation}\label{y-vartransform}
\begin{split}
&y_{1} = \frac{\tx_{3N+2}}{\tx_{3N} \tx_{N+3}}, \quad y_{2} = \frac{\tx_{3N+3} \tx_{N+2}}{\tx_{N+4} \tx_{3N+1}\tx_{N+1}}, \quad y_{3} = \frac{\tx_{3N+4}\tx_{N+3}}{\tx_{3N+2} \tx_{N+5}}, \dots , y_{N-1} = \frac{\tx_{4N}\tx_{2N-1}}{ \tx_{4N-2} \tx_{2N+1}}, \\
&y_{N} = \frac{\tx_{1} \tx_{2N}}{\tx_{4N-1} \tx_{2N+2}},\quad  y_{N+1} = \frac{\tx_{2} \tx_{2N+1}}{\tx_{4N} \tx_{2N+3}}, \quad y_{N+2} = \frac{\tx_{3} \tx_{2N+2}}{\tx_{1} \tx_{2N+4}}, \dots, y_{2N-3} = \frac{\tx_{N-2} \tx_{3N-3}}{\tx_{N-4} \tx_{3N-1}}, \\ 
& y_{2N-2} = \frac{\tx_{N-1}\tx_{N} \tx_{3N-2}}{\tx_{3N}\tx_{N-3}}
\end{split}
\end{equation}
for $N \geq 4$. Through these variable transformation, one can verify that $Q_{D_{2N}}$ is indeed arises from the cluster algebra of type $D_{2N}$. Firstly we consider the $(2N-2) \times (2N-2)$  coefficient matrix $\tilde{B} = (\tilde{b}_{ij})$ of the symplectic form $\hat{\om}$,
\begin{align*}
\hat{\omega} = \sum\limits_{i<j}\frac{\hat{b}_{ij}}{y_{i}y_{j}}\dd y_{i} \wedge \dd y_{j}
\end{align*}
defined on the reduced space $\vb{y} = (y_{1}, \dots, y_{2N})$, that satisfies the pull-back relation $\pi^* \hat{\omega} = \omega$.   Following the statement in Theorem \ref{symreduction}, for $N\geq 4$, we define the $(2N-2) \times 4N$ matrix $M$, constructed from the vectors in $\im B_{D_{2N}}$  \eqref{ImD2N} and in $\ker B_{D_{2N}}$\eqref{kernD2N}, as shown below
\begin{equation}
M = \mqty(\vb{v}_{1} \\ \vb{v}_{2} \\ \vdots \\ \vb{v}_{N-2} \\ \vb{u}_{1} \\ \vb{u}_{2}). 
\end{equation}
The reduced matrix is then obtained via the relation \eqref{sympmatrix},
\begin{align*}
M^{-T} B_{D_{2N}}M^{-1} = \mqty(\hat{B} & 0 \\ 0& 0),
\end{align*}
yielding the $2N-2 \times 2N-2$ skew-symmetric matrix whose entries are given by 
\begin{equation}
    \hb_{i,j} = \begin{cases}
        1 & i<j, \ i= 2m-1, \ j=2n, \quad 0<n,m <N \\
        0 & j=2n+1, \ n\geq 1 \\
        -1 & j<i, \ j= 2m-1, \ i=2n, \quad 0<n,m <N
    \end{cases}
\end{equation}

Secondly, we write the variable transformations \eqref{y-vartransform} into the form $y_{j} = \tilde{\vb{x}}^{\vb{w}_{j}}$. Setting that the standard basis $\te_{i}$ for $1 \leq i \leq 4N$, the vector $\vb{w}_{j}$ can be written as 
\begin{equation}
\begin{split}
\vb{w}_{1} &= \te_{3N+2} - \te_{3N} - \te_{N+3}, \\ 
\vb{w}_{2} &= \te_{3N+3} + \te_{N+2} - \te_{3N+1} - \te_{N+4} - \te_{N+1}, \\ 
\vb{w}_{3} &= \te_{3N+4} + \te_{N+3} - \te_{3N+2} - \te_{N+5}, \\ 
\vdots \\ 
\vb{w}_{N-1} &= \te_{4N} + \te_{2N-1} - \te_{4N-2} - \te_{2N+1} \\
\vb{w}_{N} &= \te_{1} + \te_{2N} - \te_{4N-1} - \te_{2N+2} \\
\vb{w}_{N+1} &= \te_{2} + \te_{2N+1} -\te_{4N} - \te_{2N+3} \\
\vb{w}_{N+2} &= \te_{3} + \te_{2N+2} - \te_{1} - \te_{2N+4} \\
\vdots \\
\vb{w}_{2N-3} &=\te_{N-2} + \te_{3N-3} - \te_{N-4} - \te_{3N-1} \\
\vb{w}_{2N-2} &= \te_{N-1} + \te_{N} + \te_{3N-2} - \te_{3N} - \te_{N-3}
\end{split}
\end{equation}
Let $\tilde{M}$ be $(2N-2) \times 4N$ matrix given by 
\begin{equation}
\tilde{M} = \mqty(\vb{w}_{1} \\ \vb{w}_{2} \\ \vdots \\ \vb{w}_{2N-2})
\end{equation} 
The pull-back of the symplectic form 
\begin{equation}
\hat{\pi}^{*}\hat{\om} =\sum_{i<j}^{4N} \tilde{b}_{ij} \dl \tx_{i} \w \dl \tx_{j} \iff \tilde{B} = \tilde{M}^{T}\hat{B} \tilde{M}
\end{equation}
gives rise to the $4N \times 4N$ skew-symmetric matrix $\tilde{B}$. Furthermore, appending  the row corresponding frozen nodes at the bottom of the matrix $\tilde{B}$ yields the  extended exchange matrix 
\begin{equation}
B_{D_{2N}} = 
{\tiny
\vcenter{\hbox{ \begin{tikzpicture}
        \matrix [inner sep=0pt, 
      nodes={inner sep=.3333em}, 
      matrix of math nodes,
      left delimiter=(,
      right delimiter=),
    ] (m)
        {
      0 & 1 &  &  &  &  &  &  &  &  & 0 & 1 & 0 & -1 &  &  &  & &   &  &  & -1 \\ 
        -1 & \phantom{0} & \phantom{1} &  &  &  &  &  &  &  &  & \phantom{1} & \phantom{0} & \phantom{-1} &  &  &  &  &  &  &  & \\
        &  \phantom{-1} & 0 & 1 &  &  &  &  &  &  &  &  & \phantom{0} & 1 & 0 & -1 &  &  &  &  &  &  \\ 
          &   &  -1 & 0 & 1 & 1  & 0 & 0 &0  &  &  &  &  & 0 & 1 & 0 & -1& 0 &0  &  &  & \\
          &  &  & -1 & 0 & 0 & 0 & 0 & 0 &  &  &  &  & 0 & 0 & 1 & 1 & 0 &0  & &  & \\
   &  &  & -1 & 0  & 0 & 0 & 0 & 0  &  &  &  &  & 0 & 0 & 1 & 1 & 0 &  0&  &  &  \\ 
  &  &  & 0 & 0  & 0 & 0 & 0 & -1  &  &  &  &  & 0 & 0 & 0 & -1 & 0 & 1  & &  & \\ 
 &  &  & 0 & 0 & 0 & 0 & 0 & 1 &  &  &  &  & 0 & 0 & 0 & 1 & 0 & -1 &  & \phantom{0} & \phantom{0} \\ 
  &  &  &0  &  0 & 0 & 1 & -1 & 0 & 1 &  &  &  & 0 & 0 & 0 &0 & 1 & 0 & -1  & \phantom{0} & \phantom{0} \\ 
  &  &  & & &  &  &  & -1  & \phantom{0} & \phantom{1} &  &  &  &  &  & \phantom{0} & \phantom{0} & \phantom{1} & \phantom{0} & \phantom{-1} & \phantom{0} \\ 
 0 &  &  &  &  &  &  &  &  & \phantom{-1} & \phantom{0} & \phantom{1} &  &  &  &  &  & \phantom{0} & \phantom{0} & \phantom{1} & 0 & -1 \\ 
 -1 & \phantom{-1} &   &  &  &  &  &  &  &  &  \phantom{-1} & \phantom{0}  & \phantom{1} &  &  &  &  & \phantom{0} & \phantom{0} & \phantom{0} & 1 & 0\\
 0 & \phantom{0} & \phantom{-1} &  &   &  &  &  &  &  &   & \phantom{-1} & \phantom{0} & 1 &  &  &  &  & \phantom{0} & \phantom{0} & \phantom{0} & 1\\ 
 1 & \phantom{1} & -1 & 0 &  &  &  &  &  &  &  &  & -1 & 0  & 1 &  &  &  & &  &  &  \\
   &  & 0 & -1 & 0 & 0 & 0 & 0 & 0  &  &  &  &  & -1 & 0 & 1 & 0 & 0 &  &  &  &  \\
   &  &  1& 0 & -1 & -1 & 0 & 0 & 0 &  &  &  &  &  & -1 & 0 & 1 & 0 &  &  &   &  \\
   &  &  &1 & -1 & -1 & 1 & -1 &  0&  &  & &  &  & 0 & -1 & 0 & 1 &  &  &  &  \\
    &  &  & 0 & 0 & 0 & 0 & 0 & -1 &  &  &  &  &  & 0 & 0 & -1 & 0 &  &  &  & \\
    &  &  & 0 & 0 & 0 & -1 & 1 & 0 & \phantom{-1} &  &  &  &  &  &  &  &  &  &  &  &  \\
    &  &  &  &  &  &  &  & \phantom{1} & \phantom{0} & \phantom{-1} &  &  &  &  &  &  &  &  &  & 0 & 1 \\
    1 &  &  &  &  &  &  &  &  & 1 &0 & -1 &  &  &  &  &  &  &  &  &-1  & 0 \\
     &  &  & 0 & 0 & -1 & 1 & 0 & &  & &  &  &  & 0 & 0 & 0 & 0 &  &  &  & \\
      &  &  & 0 & -1 & 0 & 1 & 0 & &  & &  &  &  & 0 & 0 & 0 & 0 &  &  &  & \\
        };

 \draw[loosely dotted,thick] (m-2-1)-- (m-4-3);
\draw[loosely dotted,thick] (m-1-1)-- (m-3-3);
\draw[loosely dotted,thick] (m-1-2)-- (m-3-4);

\draw[loosely dotted,thick] (m-9-10)-- (m-13-14);
\draw[loosely dotted,thick] (m-9-9)-- (m-14-14);
\draw[loosely dotted,thick] (m-10-9)-- (m-14-13);

\draw[loosely dotted,thick] (m-11-1)-- (m-14-4);
\draw[loosely dotted,thick] (m-12-1)-- (m-14-3);
\draw[loosely dotted,thick] (m-13-1)-- (m-15-3);
\draw[loosely dotted,thick] (m-14-1)-- (m-16-3);

\draw[loosely dotted,thick] (m-18-9)-- (m-21-12);
\draw[loosely dotted,thick] (m-19-9)-- (m-21-11);
\draw[loosely dotted,thick] (m-19-8)-- (m-21-10);

\draw[loosely dotted,thick] (m-1-12)-- (m-3-14);
\draw[loosely dotted,thick] (m-1-13)-- (m-3-15);
\draw[loosely dotted,thick] (m-1-14)-- (m-3-16);

\draw[loosely dotted,thick] (m-8-19)-- (m-11-22);
\draw[loosely dotted,thick] (m-9-19)-- (m-11-21);
\draw[loosely dotted,thick] (m-9-18)-- (m-12-21);

\draw[loosely dotted,thick] (m-17-18)-- (m-20-22);
\draw[loosely dotted,thick] (m-18-18)-- (m-20-21);
\draw[loosely dotted,thick] (m-18-17)-- (m-21-21);

    \end{tikzpicture}}}}
\end{equation}

which represents the quiver $Q_{D_{2N}}$.

\newpage

\section{Type $D_{2N}$ deformed cluster map} \label{S: D2Nmap}

In the previous sections, we demonstrated that the deformed maps of type $D_{4}$ and $D_{6}$ can be described as the cluster map on the new cluster algebra of higher rank case. Here we verify that the quiver $Q_{D_{2N}}$ (illustrated in Figure \ref{D2Nquiver}) is mutation periodic under specific sequence of mutations, which in turn gives rise to the cluster maps associated with deformed type $D_{2N}$ maps via Laurentification.

\begin{prop} For each $N \geq 2$, the quiver $Q_{D_{2N}}$ is mutation periodic,
\begin{equation}
\mu_{N}\mu_{N-1} \cdots \mu_{1}\mu_{4N}\mu_{4N-1} \cdots \mu_{3N+2}\mu_{N+1} (Q) = \rho(Q),
\end{equation}
\end{prop}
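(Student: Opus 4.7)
The plan is to proceed by induction on $N$, with the base cases $N=2$ and $N=3$ already verified in Sections \ref{s:DeformD4} and \ref{deformedtypeD6}, where the cluster maps associated with the deformed $D_4$ and $D_6$ maps are shown to satisfy exactly the stated mutation periodicity (for example, the $N=3$ case matches the sequence $\mu_{12}\mu_{11}\mu_{10}\mu_9\mu_8\mu_1$ with permutation $\rho = (4\,5\,6\,7\,8\,9\,10)(2\,3)$). For the inductive step from $N-1$ to $N$, I would exploit the local expansion construction of Section \ref{S:localexpand}: the quiver $Q_{D_{2N}}$ arises from $Q_{D_{2N-2}}$ by inserting four new "green" nodes with prescribed boundary arrows, and the mutation sequence $\mu_N \cdots \mu_1 \mu_{4N} \cdots \mu_{3N+2} \mu_{N+1}$ decomposes naturally along this insertion.

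The verification then proceeds in three blocks: (i) $\mu_{N+1}$, which mutates the leftmost bottom node adjacent to both frozen nodes; (ii) the block $\mu_{4N} \cdots \mu_{3N+2}$, which advances through the newly inserted nodes in the bottom row; and (iii) the block $\mu_N \cdots \mu_1$, which sweeps through the remaining bottom-row nodes. For each mutation, I would apply the matrix mutation rule \eqref{matrixmu} to track how arrows are reversed, composed, and cancelled, paying particular attention to the $2$-cycle cancellations around the boundary between the inserted block and the rest of the quiver. The total count of $2N$ mutations matches the Coxeter-like periodicity observed in the base cases ($4$ for $D_4$, $6$ for $D_6$).

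Once the block structure is in hand, the permutation $\rho$ should be read off by comparing the resulting quiver to $Q_{D_{2N}}$. I expect $\rho$ to be a product of two disjoint cycles: one long cycle shifting the bottom-row and core $\tau$-labels, mirroring the index shift $\tau_n \mapsto \tau_{n+1}$ implicit in the recurrences \eqref{eq:deformD6} and \eqref{D2Ndmap}, and one short cycle encoding the exchange of the two "$r$, $s$" labels associated with the frozen variables $b_{2N-1}, b_{2N}$. Consistency with the inductive hypothesis applied to the embedded copy of $Q_{D_{2N-2}}$, combined with the local action of the expansion block on the green nodes, should then complete the inductive step.

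The main obstacle I anticipate is the detailed bookkeeping of arrows at the boundary between the inserted green nodes and the neighbouring red core nodes. The mutations $\mu_{3N+2}, \ldots, \mu_{4N}$ create transient arrows that must be precisely cancelled by the later block $\mu_1, \ldots, \mu_N$, and a case analysis distinguishing interior nodes from those at the left and right endpoints of the expansion appears unavoidable. A secondary concern is the effect of the sequence on the two frozen nodes $4N+1, 4N+2$: the coefficient formulas \eqref{coeffvar} must produce exactly the parameter combination $b_{2N-1}b_{2N}$ appearing in \eqref{D2Ndmap}, which constrains how the arrows incident to the frozen nodes may be permuted under $\rho$.
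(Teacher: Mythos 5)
Your overall computational content (tracking arrows with the matrix mutation rule \eqref{matrixmu}, splitting the sequence into the blocks $\mu_{N+1}$, then $\mu_{4N}\cdots\mu_{3N+2}$, then $\mu_{N}\cdots\mu_{1}$, and reading off $\rho$ as a long cycle times a transposition) points in the right direction, but the organising principle you propose --- induction on $N$ via the local expansion --- has a genuine gap that the proposal does not close. Quiver mutation is not compatible with passing to subquivers or with inserting new vertices: when you mutate a node of $Q_{D_{2N}}$ that is adjacent to the inserted green block, the composition/cancellation step of the mutation creates and destroys arrows \emph{between} the old nodes \emph{through} the new ones, so the effect of the ``old'' mutations on the embedded copy of $Q_{D_{2N-2}}$ is not the same as it was in $Q_{D_{2N-2}}$ itself. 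Consequently ``applying the inductive hypothesis to the embedded copy'' is not a well-defined step; you would have to re-verify every mutation in the presence of the new nodes anyway, at which point the induction buys nothing and the boundary case analysis you flag as a ``secondary concern'' is in fact the entire proof. There is also a labelling mismatch in your base case: the sequence $\mu_{12}\mu_{11}\mu_{10}\mu_{9}\mu_{8}\mu_{1}$ of Section \ref{deformedtypeD6} is written in the node labels of Figure \ref{quiverD6}, which differ from the labels of the general quiver $Q_{D_{2N}}$ in Figure \ref{D2Nquiver} (for $N=3$ the proposition's sequence reads $\mu_{3}\mu_{2}\mu_{1}\mu_{12}\mu_{11}\mu_{4}$), so the base case requires an explicit relabelling before it verifies the stated formula.

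The paper's proof avoids induction on $N$ altogether. It fixes $N$, computes the first two mutations $\mu_{N+1}$ and $\mu_{3N+2}$ explicitly, and then observes that after $\mu_{3N+2}$ the local neighbourhood of the next node $3N+3$ is isomorphic to the neighbourhood that $3N+2$ had just before its own mutation: the ``mutation front'' simply translates one step rightward along the ladder. Because the ladder pattern is homogeneous all the way to the right end, this single local computation propagates through the whole sequence $\mu_{4N}\cdots\mu_{3N+3}$ and then through $\mu_{1},\dots,\mu_{N}$, and the permutation $\rho=(N-1,N)(N+1,\dots,3N+1)$ is read off by comparing the final quiver with the initial one. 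That translation-invariance of the local mutation pattern is the key lemma your proposal is missing; if you want to salvage an inductive structure, the correct induction is on the position of the mutating node along the ladder within a fixed $Q_{D_{2N}}$, not on $N$.
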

up to permutation $\rho= (N-1,N)(N+1,N+2,\dots, 3N,3N+1)$.

\begin{proof}

Applying the mutation $\mu_{N+1}$ followed by the mutation $\mu_{3N+2}$ at the nodes of the quiver $Q_{D_{2N}}$ gives 

\begin{center}
\resizebox{0.85\textwidth}{!}{%
 \begin{tikzpicture}[every circle node/.style={draw,scale=0.6,thick},node distance=10mm]
  \node [draw,circle,fill=red!50,"{\footnotesize{$N+2$}}"] (N+2) at (0,0) {};
  \node [draw,circle,fill=red!50,"{\footnotesize{$N+3$}}"] (N+3) [right=of N+2] {};
  \node [draw,circle,fill=red!50,"{\footnotesize{$N+4$}}"] (N+4) [right=of N+3] {};
  \node [draw,circle,fill=red!50,"{\footnotesize{$N+5$}}"] (N+5) [right=of N+4] {};
  \node [draw,circle,fill=red!50,"{\footnotesize{$N+6$}}"] (N+6) [right=of N+5] {};
  \node [draw,circle,fill=red!50,"{\footnotesize{$3N+1$}}" left]  (3N+1) [below left=of N+2] {};
     \node (a6)[right=of N+6]{};

  \node [draw,circle,fill=red!50,"{\footnotesize{$N+1$}}" below] (N+1) at (0,-2.5) {} ;
     \node [draw,circle,fill=blue!50,"{\footnotesize{$4N+1$}}" left] (4N+1) [above left=of N+1] {} ;
  \node [draw,circle,fill=red!50,"{\footnotesize{$3N+2$}}" below] (3N+2) [right=of N+1] {} ; 
  \node [draw,circle,fill=red!50,"{\footnotesize{$3N+3$}}" below] (3N+3) [right=of 3N+2] {};
  \node [draw,circle,fill=red!50,"{\footnotesize{$3N+4$}}" below] (3N+4) [right=of 3N+3] {};
   \node [draw,circle,fill=red!50,"{\footnotesize{$3N+5$}}" below] (3N+5) [right=of 3N+4] {};
   \node (b5)[right=of 3N+4]{};
    \node (b6) at (5.5,-3) {};
    \node (b7) at (5.5,-3.2) {};
    \node (b8) at (5.5,-3.6) {};
    \node (b9) at (5.5,-3.8) {};
    
    \node (h1) at (5.4,-1) {};
        \node (h2) at (5.5, -5) {};

    \node [draw,circle,fill=red!50,"{\footnotesize{$3N$}}" below] (3N) at (2.5,-4.2) {} ;
    
    \node [draw,circle,fill=blue!50,"{\footnotesize{$4N+2$}}" below] (4N+2) at (2.5,-5.2) {} ;

\draw [-{Latex[length=3mm]}] (-3.5,-1.2) -- (-2.5,-1.2) node[midway,sloped,above] {$\mu_{N+1}$};

  \begin{scope}[>=Latex]
  
  \draw[-> , thick] (3N+2) edge (N+1);
   \draw[-> , thick] (N+1) edge (3N);
     \draw[-> , thick] (N+1) edge (4N+2);
       \draw[-> , thick] (N+1) edge (N+3);
        \draw[-> , thick] (N+1) edge (4N+1);
        
      \draw[-> , thick] (3N+2) edge (N+2);
      \draw[-> , thick] (3N+2) edge (3N+3);
      \draw[-> , thick] (4N+1) edge (3N+2);
       \draw[-> , thick] (3N+1) edge (3N+2);
        \draw[-> , thick] (N+3) edge (3N+2);
         \draw[-> , thick] (N+4) edge (3N+2);
          \draw[-> , thick] (3N) edge (3N+2);
           \draw[-> , thick] (4N+2) edge (3N+2);
           
            \draw[-> , thick] (h1) edge (4N+1);
            
            \draw[-> , thick] (N+3) edge (3N+1); 
            \draw[-> , thick] (3N) edge (3N+1);    
            
            \draw[-> , thick] (N+2) edge (3N); 
             \draw[-> , thick] (N+2) edge (N+3); 
            
             \draw[-> , thick] (3N+3) edge (N+3); 
             \draw[-> , thick] (N+3) edge (N+4); 
             
             \draw[-> , thick] (N+4) edge (N+5); 
             \draw[-> , thick] (3N+4) edge (N+4); 
             
             \draw[-> , thick] (N+5) edge (3N+3); 
              \draw[-> , thick] (3N+5) edge (N+5); 
              
               \draw[-> , thick] (3N+3) edge (3N+4); 
              \draw[-> , thick] (3N+4) edge (3N+5); 
              
               \draw[-> , thick] (N+5) edge (N+6); 
                \draw[-> , thick] (N+6) edge (3N+4); 
                
                 \draw[-> , thick] (b6) edge (3N); 
                  \draw[-> , thick] (b7) edge (3N); 
                   \draw[- , thick] (3N) edge (b8); 
                    \draw[-> , thick] (b9) edge (3N); 
             \draw[-> , thick] (h2) edge (4N+2);

    \end{scope}
\draw [-{Latex[length=3mm]}] (6,-1.2) -- (7,-1.2) node[midway,sloped,above] {$\mu_{3N+2}$};

\node [draw,circle,fill=red!50,"{\footnotesize{$N+2$}}"] (N+2) at (10,0) {};
  \node [draw,circle,fill=red!50,"{\footnotesize{$N+3$}}"] (N+3) [right=of N+2] {};
  \node [draw,circle,fill=red!50,"{\footnotesize{$N+4$}}"] (N+4) [right=of N+3] {};
  \node [draw,circle,fill=red!50,"{\footnotesize{$N+5$}}"] (N+5) [right=of N+4] {};
  \node [draw,circle,fill=red!50,"{\footnotesize{$N+6$}}"] (N+6) [right=of N+5] {};
  \node [draw,circle,fill=red!50,"{\footnotesize{$3N+1$}}" left]  (3N+1) [below left=of N+2] {};
     \node (a6)[right=of N+6]{};

  \node [draw,circle,fill=red!50,"{\footnotesize{$N+1$}}" below] (N+1) at (10,-2.5) {} ;
     \node [draw,circle,fill=blue!50,"{\footnotesize{$4N+1$}}" left] (4N+1) [above left=of N+1] {} ;
  \node [draw,circle,fill=red!50,"{\footnotesize{$3N+2$}}" below] (3N+2) [right=of N+1] {} ; 
  \node [draw,circle,fill=red!50,"{\footnotesize{$3N+3$}}" below] (3N+3) [right=of 3N+2] {};
  \node [draw,circle,fill=red!50,"{\footnotesize{$3N+4$}}" below] (3N+4) [right=of 3N+3] {};
   \node [draw,circle,fill=red!50,"{\footnotesize{$3N+5$}}" below] (3N+5) [right=of 3N+4] {};
   \node (b5)[right=of 3N+4]{};
    \node (b6) at (15.5,-3) {};
    \node (b7) at (15.5,-3.2) {};
    \node (b8) at (15.5,-3.6) {};
    \node (b9) at (15.5,-3.8) {};
    
    \node (h1) at (15.4,-1) {};
         \node (h2) at (15.5, -5) {};

  \node [draw,circle,fill=red!50,"{\footnotesize{$3N$}}" below] (3N) at (12.5,-4.2) {} ;
    
    \node [draw,circle,fill=blue!50,"{\footnotesize{$4N+2$}}" below] (4N+2) at (11.5,-5.2) {} ;

  \begin{scope}[>=Latex]
  \draw[-> , thick] (N+1) edge (3N+2);

      \draw[-> , thick] (N+2) edge (3N+2);
      \draw[-> , thick] (3N+3) edge (3N+2);
      \draw[-> , thick] (3N+2) edge (4N+1);
       \draw[-> , thick] (3N+2) edge (3N+1);
        \draw[-> , thick] (3N+2) edge (N+3);
         \draw[-> , thick] (3N+2) edge (N+4);
          \draw[-> , thick] (3N+2) edge (3N);
           \draw[-> , thick] (3N+2) edge (4N+2);
           
            \draw[-> , thick] (h1) edge (4N+1);
            
            \draw[-> , thick] (N+3) edge (3N+1); 
            \draw[-> , thick] (3N) edge (3N+1);

             \draw[-> , thick] (N+3) edge (N+4); 
             
             \draw[-> , thick] (N+4) edge (N+5); 
             \draw[-> , thick] (3N+4) edge (N+4); 
             
             \draw[-> , thick] (N+5) edge (3N+3); 
              \draw[-> , thick] (3N+5) edge (N+5); 
              
               \draw[-> , thick] (3N+3) edge (3N+4); 
              \draw[-> , thick] (3N+4) edge (3N+5); 
              
               \draw[-> , thick] (N+5) edge (N+6); 
                \draw[-> , thick] (N+6) edge (3N+4); 
                
                 \draw[-> , thick] (b6) edge (3N); 
                  \draw[-> , thick] (b7) edge (3N); 
                   \draw[- , thick] (3N) edge (b8); 
                    \draw[-> , thick] (b9) edge (3N); 
                    
              \draw[-> , thick] (h2) edge (4N+2);

 \draw[-> , thick] (4N+2) edge(N+2); 
  \draw[-> , thick] (4N+1) edge (N+2); 
  \draw[-> , thick] (4N+1) edge (3N+3); 
    \draw[-> , thick] (3N+1) edge (N+1); 
   \draw[-> , thick] (N+4) edge (N+1); 
    \draw[-> , thick] (3N+3) edge (3N+2); 
     \draw[-> , thick] (3N+3) edge (3N+4); 
     \draw[-> , thick] (3N+1) edge (3N+3); 
      \draw[-> , thick] (N+4) edge (3N+3); 
       \draw[-> , thick] (3N) edge (3N+3); 
        \draw[-> , thick] (4N+2) edge (3N+3); 
        \draw[-> , thick] (3N+1) edge (N+2); 
         \draw[-> , thick] (N+4) edge[bend right=15] (N+2);

    \end{scope}

\end{tikzpicture}
}
\end{center}

Let us consider the subquiver in mutated quiver $\mu_{3N+2}\mu_{1}(Q_{D_{2N}})$, consisting of next mutating node $3N+3$ along with all incident nodes that are connected to it. Upon applying mutation $\mu_{3N+3}$, we see that the node $3N+4$ is connected in a similar way as previous mutating node. This resembles a rightward shift of the mutating node within the quiver as shown below.
\begin{center}
\resizebox{0.85\textwidth}{!}{%
 \begin{tikzpicture}[every circle node/.style={draw,scale=0.6,thick},node distance=10mm]
  \node [draw,circle,fill=red!50,"{\footnotesize{$N+2$}}"] (N+2) at (0,0) {};
  \node [draw,circle,fill=red!50,"{\footnotesize{$N+3$}}"] (N+3) [right=of N+2] {};
  \node [draw,circle,fill=red!50,"{\footnotesize{$N+4$}}"] (N+4) [right=of N+3] {};
  \node [draw,circle,fill=red!50,"{\footnotesize{$N+5$}}"] (N+5) [right=of N+4] {};
  \node [draw,circle,fill=red!50,"{\footnotesize{$N+6$}}"] (N+6) [right=of N+5] {};
  \node [draw,circle,fill=red!50,"{\footnotesize{$3N+1$}}" left]  (3N+1) [below left=of N+2] {};
     \node (a6)[right=of N+6]{};

  \node [draw,circle,fill=red!50,"{\footnotesize{$N+1$}}" below] (N+1) at (0,-2.5) {} ;
     \node [draw,circle,fill=blue!50,"{\footnotesize{$4N+1$}}" left] (4N+1) [above left=of N+1] {} ;
  \node [draw,circle,fill=red!50,"{\footnotesize{$3N+2$}}" below] (3N+2) [right=of N+1] {} ; 
  \node [draw,circle,fill=red!50,"{\footnotesize{$3N+3$}}" below] (3N+3) [right=of 3N+2] {};
  \node [draw,circle,fill=red!50,"{\footnotesize{$3N+4$}}" below] (3N+4) [right=of 3N+3] {};
   \node [draw,circle,fill=red!50,"{\footnotesize{$3N+5$}}" below] (3N+5) [right=of 3N+4] {};
   \node (b5)[right=of 3N+4]{};
    \node (b6) at (5.5,-3) {};
    \node (b7) at (5.5,-3.2) {};
    \node (b8) at (5.5,-3.6) {};
    \node (b9) at (5.5,-3.8) {};
    
    \node (h1) at (5.4,-1) {};
        \node (h2) at (5.5, -5) {};

    \node [draw,circle,fill=red!50,"{\footnotesize{$3N$}}" below] (3N) at (2.5,-4.2) {} ;
    
    \   \node [draw,circle,fill=blue!50,"{\footnotesize{$4N+2$}}" below] (4N+2)[left=of 3N] {} ;

\draw [-{Latex[length=3mm]}] (-3.5,-1.2) -- (-2.5,-1.2) node[midway,sloped,above] {$\mu_{3N+2}$};

  \begin{scope}[>=Latex]
   \draw[-> , thick] (4N+1) edge (3N+3); 
   \draw[-> , thick] (3N+2) edge (4N+1); 
   \draw[-> , thick] (3N+2) edge (3N+1); 
  
   \draw[-> , thick] (N+4) edge (N+1); 
   \draw[-> , thick] (N+1) edge (3N+2);
    \draw[-> , thick] (3N+3) edge (3N+2); 
     \draw[-> , thick] (3N+3) edge (3N+4); 
     \draw[-> , thick] (3N+1) edge (3N+3); 
      \draw[-> , thick] (N+4) edge (3N+3); 
       \draw[-> , thick] (3N) edge (3N+3); 
        \draw[-> , thick] (4N+2) edge (3N+3); 
        
         \draw[-> , thick] (3N+2) edge (N+4); 
           \draw[-> , thick] (3N+4) edge (N+4);
             \draw[-> , thick] (3N+2) edge (4N+2);
  
           \draw[-> , thick] (3N+2) edge (3N);
           
           \draw[-> , thick] (N+4) edge (N+5);
           \draw[-> , thick] (N+5) edge (N+6);
           \draw[-> , thick] (N+5) edge (3N+3);
           \draw[-> , thick] (N+6) edge (3N+4);
           \draw[-> , thick] (3N+4) edge (3N+5);
             \draw[-> , thick] (3N+5) edge (N+5);

    \end{scope}
\draw [-{Latex[length=3mm]}] (6,-1.2) -- (7,-1.2) node[midway,sloped,above] {$\mu_{3N+3}$};


\node [draw,circle,fill=red!50,"{\footnotesize{$N+2$}}"] (N+2) at (10,0) {};
  \node [draw,circle,fill=red!50,"{\footnotesize{$N+3$}}"] (N+3) [right=of N+2] {};
  \node [draw,circle,fill=red!50,"{\footnotesize{$N+4$}}"] (N+4) [right=of N+3] {};
  \node [draw,circle,fill=red!50,"{\footnotesize{$N+5$}}"] (N+5) [right=of N+4] {};
  \node [draw,circle,fill=red!50,"{\footnotesize{$N+6$}}"] (N+6) [right=of N+5] {};
  \node [draw,circle,fill=red!50,"{\footnotesize{$3N+1$}}" left]  (3N+1) [below left=of N+2] {};
     \node (a6)[right=of N+6]{};

  \node [draw,circle,fill=red!50,"{\footnotesize{$N+1$}}" below] (N+1) at (10,-2.5) {} ;
     \node [draw,circle,fill=blue!50,"{\footnotesize{$4N+1$}}" left] (4N+1) [above left=of N+1] {} ;
  \node [draw,circle,fill=red!50,"{\footnotesize{$3N+2$}}" below] (3N+2) [right=of N+1] {} ; 
  \node [draw,circle,fill=red!50,"{\footnotesize{$3N+3$}}" below] (3N+3) [right=of 3N+2] {};
  \node [draw,circle,fill=red!50,"{\footnotesize{$3N+4$}}" below] (3N+4) [right=of 3N+3] {};
   \node [draw,circle,fill=red!50,"{\footnotesize{$3N+5$}}" below] (3N+5) [right=of 3N+4] {};
   \node (b5)[right=of 3N+4]{};
    \node (b6) at (15.5,-3) {};
    \node (b7) at (15.5,-3.2) {};
    \node (b8) at (15.5,-3.6) {};
    \node (b9) at (15.5,-3.8) {};
    
    \node (h1) at (15.4,-1) {};
         \node (h2) at (15.5, -5) {};

  \node [draw,circle,fill=red!50,"{\footnotesize{$3N$}}" below] (3N) at (12.5,-4.2) {} ;
    
    \node [draw,circle,fill=blue!50,"{\footnotesize{$4N+2$}}" below] (4N+2) [left=of 3N] {} ;

  \begin{scope}[>=Latex]

  \draw[-> , thick] (3N+3) edge (4N+1);

   \draw[-> , thick] (N+4) edge (N+1); 
   \draw[-> , thick] (N+1) edge (3N+2);
    \draw[-> , thick] (3N+2) edge (3N+3); 
     \draw[-> , thick] (3N+4) edge (3N+3); 
     \draw[-> , thick] (3N+3) edge (3N+1); 
      \draw[-> , thick] (3N+3) edge (N+4); 
       \draw[-> , thick] (3N+3) edge (3N); 
        \draw[-> , thick] (3N+3) edge (4N+2);

           \draw[-> , thick] (N+4) edge (N+5);
           \draw[-> , thick] (N+5) edge (N+6);
           \draw[-> , thick] (3N+3) edge (N+5);
           \draw[-> , thick] (N+6) edge (3N+4);
           \draw[-> , thick] (3N+4) edge (3N+5);
             \draw[-> , thick] (3N+5) edge (N+5);
     
      \draw[-> , thick,red] (3N) edge (3N+4);
       \draw[-> , thick, red ] (4N+1) edge (3N+4);
        \draw[-> , thick, red] (3N+1) edge (3N+4);
       \draw[-> , thick, red] (N+5) edge (3N+2);
    \draw[-> , thick,red] (N+5) edge (3N+4);
      \draw[-> , thick,red] (4N+2) edge (3N+4);
    \end{scope}

\end{tikzpicture}
}
\end{center}
Since the "ladder" shaped structural pattern of the quiver is extended toward right end of the quiver, this transformation is repeated throughout the quiver. Thus by applying the sequence of mutations $\mu_{N}\mu_{N-1} \cdots \mu_{1}\mu_{4N}\mu_{4N-1} \cdots \mu_{3N+4} $ sequentially, we see the following transformations within quiver.

\begin{center}
\resizebox{1.1\textwidth}{!}{%
 \begin{tikzpicture}[every circle node/.style={draw,scale=0.6,thick},node distance=10mm]
 
 
  \node [draw,circle,fill=red!50,"{\footnotesize{$2N-1$}}"] (2N-1) at (0,0) {};
  \node [draw,circle,fill=red!50,"{\footnotesize{$2N$}}"] (2N) [right=of 2N-1] {};
  \node [draw,circle,fill=red!50,"{\footnotesize{$2N+1$}}"] (2N+1) [right=of 2N] {};
  \node [draw,circle,fill=red!50,"{\footnotesize{$2N+2$}}"] (2N+2) [right=of 2N+1] {};
  \node [draw,circle,fill=red!50,"{\footnotesize{$2N+3$}}"] (2N+3) [right=of 2N+2] {};
  \node (a1)  [below left=of 2N-1] {};
  \node [draw,circle,fill=red!50,"{\footnotesize{$3N+1$}}"  above]  (3N+1) [left=of a1] {};
     \node (a6)[right=of 2N+3]{};

  \node [draw,circle,fill=red!50,"{\footnotesize{$4N-2$}}" below] (4N-2) at (0,-2.5) {} ;
  \node (b1)  [above left=of 4N-2] {} ;
     \node [draw,circle,fill=blue!50,"{\footnotesize{$4N+1$}}" below] (4N+1) [ left=of b1] {} ;
  \node [draw,circle,fill=red!50,"{\footnotesize{$4N-1$}}" below] (4N-1) [right=of 4N-2] {} ; 
  \node [draw,circle,fill=red!50,"{\footnotesize{$4N$}}" below] (4N) [right=of 4N-1] {};
  \node [draw,circle,fill=red!50,"{\footnotesize{$1$}}" below] (1) [right=of 4N] {};
   \node [draw,circle,fill=red!50,"{\footnotesize{$2$}}" below] (2) [right=of 1] {};
   
   \node (r1)[left=of 4N-2]{};
   
    \node(l1) at (5.7,0){};
     \node(l2) at (5.7,-2.5){};
   
     \node(c1) at (-0.7,0){};
     \node(c2) at (-0.7,-2.5){};
  
    \node (b6) at (5.5,-3) {};
    \node (b7) at (5.5,-3.2) {};
     \node (m1) at (5.5,-3.4) {};
    \node (b8) at (5.5,-3.6) {};
    \node (b9) at (5.5,-3.8) {};
    
    \node (h1) at (5.4,-1) {};
        \node (h2) at (5.5, -5) {};
    
     \node(r2) at (-2,0) {};

    \node [draw,circle,fill=red!50,"{\footnotesize{$3N$}}" below] (3N) at (2.5,-4.2) {} ;
    
    \   \node [draw,circle,fill=blue!50,"{\footnotesize{$4N+2$}}" below] (4N+2)[left=of 3N] {} ;

\draw [-{Latex[length=3mm]}] (-4,-1.2) -- (-3,-1.2) node[midway,sloped,above] {$\mu_{4N-2}$};

  \begin{scope}[>=Latex]
  
  \draw[- , thick] (4N+2) edge (r2);
  
   \draw[- , thick] (2N-1) edge (c1);
   \draw[- , thick] (4N-2) edge (c2);
  
    \draw[- , thick] (2N+3) edge (l1);
        \draw[- , thick] (2) edge (l2);
  
           \draw[-> , thick] (2N) edge (4N-1);
           \draw[->, thick] (4N) edge (2N);
             \draw[-> , thick] (2N+1) edge (4N-1);
               \draw[-> , thick] (2N) edge (2N+1);
                 \draw[-> , thick] (2N+1) edge (2N+2);
                 \draw[-> , thick] (2N+2) edge (2N+3);
                   \draw[-> , thick] (4N-1) edge (4N);
                     \draw[-> , thick] (4N) edge (1);
                       \draw[-> , thick] (1) edge (2);
                         \draw[-> , thick] (2N+2) edge (4N);
                         \draw[-> , thick] (2N+3) edge (1);
                          \draw[-> , thick] (1) edge (2N+1);
                            \draw[-> , thick] (2) edge (2N+2);
                            \draw[-> , thick] (2N) edge (r1);
                            \draw[-> , thick] (4N-2) edge (2N);
              \draw[-> , thick] (4N-1) edge (4N-2);
            \draw[-> , thick] (2N-1) edge (2N);
            \draw[-> , thick] (3N+1) edge (4N-1);
            \draw[-> , thick] (4N-2) edge (3N+1);
            \draw[-> , thick] (4N+1) edge (4N-1);
            \draw[-> , thick] (4N-2) edge (4N+1);
            \draw[-> , thick] (3N+1) edge (4N-1);
            \draw[-> , thick] (4N-2) edge (4N+2);
            
            \draw[-> , thick] (3N) edge (4N-1);
            \draw[-> , thick] (3N+1) edge (4N-1);
            \draw[-> , thick] (4N+2) edge (4N-1);
            \draw[-> , thick] (4N-2) edge (3N);
            \draw[-> , thick] (3N) edge[bend right=30] (3N+1);
            
            \draw[-> , thick] (m1) edge (4N+2);
             \draw[-> , thick] (b6) edge (3N);
            \draw[-> , thick] (b7) edge (3N);
            \draw[- , thick] (3N) edge (b8);
            \draw[-> , thick] (b9) edge (3N);

    \end{scope}
\draw [-{Latex[length=3mm]}] (6,-1.2) -- (7,-1.2) node[midway,sloped,above] {$\mu_{4N-1}$};


 \node [draw,circle,fill=red!50,"{\footnotesize{$2N-1$}}"] (2N-1) at (10,0) {};
  \node [draw,circle,fill=red!50,"{\footnotesize{$2N$}}"] (2N) [right=of 2N-1] {};
  \node [draw,circle,fill=red!50,"{\footnotesize{$2N+1$}}"] (2N+1) [right=of 2N] {};
  \node [draw,circle,fill=red!50,"{\footnotesize{$2N+2$}}"] (2N+2) [right=of 2N+1] {};
  \node [draw,circle,fill=red!50,"{\footnotesize{$2N+3$}}"] (2N+3) [right=of 2N+2] {};
  \node (a1) [below left=of 2N-1] {};
  \node [draw,circle,fill=red!50,"{\footnotesize{$3N+1$}}" above]  (3N+1) [left=of a1] {};
     \node (a6)[right=of 2N+3]{};

  \node [draw,circle,fill=red!50,"{\footnotesize{$4N-2$}}" below] (4N-2) at (10,-2.5) {} ;
  \node (b1) [above left=of 4N-2] {} ;
     \node [draw,circle,fill=blue!50,"{\footnotesize{$4N+1$}}" below] (4N+1) [left=of b1] {} ;
  \node [draw,circle,fill=red!50,"{\footnotesize{$4N-1$}}" below] (4N-1) [right=of 4N-2] {} ; 
  \node [draw,circle,fill=red!50,"{\footnotesize{$4N$}}" below] (4N) [right=of 4N-1] {};
  \node [draw,circle,fill=red!50,"{\footnotesize{$1$}}" below] (1) [right=of 4N] {};
   \node [draw,circle,fill=red!50,"{\footnotesize{$2$}}" below] (2) [right=of 1] {};
   
   \node (r1)[left=of 4N-2]{};
   
   \node(l1) at (15.7,0){};
     \node(l2) at (15.7,-2.5){};
   
     \node(c1) at (9.3,0){};
     \node(c2) at (9.3,-2.5){};
  
    \node (b6) at (15.5,-3) {};
    \node (b7) at (15.5,-3.2) {};
     \node (m1) at (15.5,-3.4) {};
    \node (b8) at (15.5,-3.6) {};
    \node (b9) at (15.5,-3.8) {};
    
    \node (h1) at (15.4,-1) {};
        \node (h2) at (15.5, -5) {};
    
    \node(r2) at (8,0) {};

    \node [draw,circle,fill=red!50,"{\footnotesize{$3N$}}" below] (3N) at (12.5,-4.2) {} ;
    
    \   \node [draw,circle,fill=blue!50,"{\footnotesize{$4N+2$}}" below] (4N+2)[left=of 3N] {} ;

  \begin{scope}[>=Latex]

\draw[- , thick] (4N+2) edge (r2);

  \draw[-> , thick] (4N-1) edge (2N);
  
  \draw[- , thick] (2N-1) edge (c1);
   \draw[- , thick] (4N-2) edge (c2);
  
    \draw[- , thick] (2N+3) edge (l1);
        \draw[- , thick] (2) edge (l2);

             \draw[-> , thick] (4N-1) edge (2N+1);
               \draw[-> , thick] (2N) edge (2N+1);
                 \draw[-> , thick] (2N+1) edge (2N+2);
                 \draw[-> , thick] (2N+2) edge (2N+3);
                   \draw[-> , thick] (4N) edge (4N-1);
                     \draw[-> , thick] (4N) edge (1);
                       \draw[-> , thick] (1) edge (2);
                         \draw[-> , thick] (2N+2) edge (4N);
                         \draw[-> , thick] (2N+3) edge (1);
                          \draw[-> , thick] (1) edge (2N+1);
                            \draw[-> , thick] (2) edge (2N+2);
                            \draw[-> , thick] (2N) edge (r1);
                           
              \draw[-> , thick] (4N-2) edge (4N-1);
            \draw[-> , thick] (2N-1) edge (2N);
            \draw[-> , thick] (4N-1) edge (3N+1);
            \draw[-> , thick] (3N+1) edge (4N);
            \draw[-> , thick] (4N-1) edge (4N+1);
            \draw[-> , thick] (4N+1) edge (4N);
            \draw[-> , thick] (2N+1) edge (4N);
            \draw[-> , thick] (2N+1) edge (4N-2);
            
\draw[-> , thick] (3N) edge (4N);
             \draw[-> , thick] (4N+2) edge (4N);
            \draw[-> , thick] (4N-1) edge (3N);
           
            \draw[-> , thick] (4N-1) edge (4N+2);
          
            \draw[-> , thick] (3N) edge[bend right=30] (3N+1);
            
            \draw[-> , thick] (m1) edge (4N+2);
             \draw[-> , thick] (b6) edge (3N);
            \draw[-> , thick] (b7) edge (3N);
            \draw[- , thick] (3N) edge (b8);
            \draw[-> , thick] (b9) edge (3N);

    \end{scope}

\draw [-{Latex[length=3mm]}] (16,-1.2) -- (17,-1.2) node[midway,sloped,above] {$\mu_{4N}$};
 
 
 \node [draw,circle,fill=red!50,"{\footnotesize{$2N-1$}}"] (2N-1) at (20,0) {};
  \node [draw,circle,fill=red!50,"{\footnotesize{$2N$}}"] (2N) [right=of 2N-1] {};
  \node [draw,circle,fill=red!50,"{\footnotesize{$2N+1$}}"] (2N+1) [right=of 2N] {};
  \node [draw,circle,fill=red!50,"{\footnotesize{$2N+2$}}"] (2N+2) [right=of 2N+1] {};
  \node [draw,circle,fill=red!50,"{\footnotesize{$2N+3$}}"] (2N+3) [right=of 2N+2] {};
   \node (a1) [below left=of 2N-1] {};
  \node [draw,circle,fill=red!50,"{\footnotesize{$3N+1$}}" above ] (3N+1) [left=of a1] {};
     \node (a6)[right=of 2N+3]{};

  \node [draw,circle,fill=red!50,"{\footnotesize{$4N-2$}}" below] (4N-2) at (20,-2.5) {} ;
  \node (b1) [above left=of 4N-2] {} ;
     \node [draw,circle,fill=blue!50,"{\footnotesize{$4N+1$}}" below] (4N+1) [left=of b1] {} ;
  \node [draw,circle,fill=red!50,"{\footnotesize{$4N-1$}}" below] (4N-1) [right=of 4N-2] {} ; 
  \node [draw,circle,fill=red!50,"{\footnotesize{$4N$}}" below] (4N) [right=of 4N-1] {};
  \node [draw,circle,fill=red!50,"{\footnotesize{$1$}}" below] (1) [right=of 4N] {};
   \node [draw,circle,fill=red!50,"{\footnotesize{$2$}}" below] (2) [right=of 1] {};

   \node (r1)[left=of 4N-2]{};
   
   \node(l1) at (25.7,0){};
     \node(l2) at (25.7,-2.5){};
     
       \node(c1) at (19.3,0){};
     \node(c2) at (19.3,-2.5){};

    \node (b6) at (25.5,-3) {};
    \node (b7) at (25.5,-3.2) {};
     \node (m1) at (25.5,-3.4) {};
    \node (b8) at (25.5,-3.6) {};
    \node (b9) at (25.5,-3.8) {};
    
    \node (h1) at (25.4,-1) {};
        \node (h2) at (25.5, -5) {};
        
        \node(r2) at (18,0) {};

    \node [draw,circle,fill=red!50,"{\footnotesize{$3N$}}" below] (3N) at (22.5,-4.2) {} ;
    
    \   \node [draw,circle,fill=blue!50,"{\footnotesize{$4N+2$}}" below] (4N+2)[left=of 3N] {} ;

  \begin{scope}[>=Latex]

 \draw[- , thick] (4N+2) edge (r2);

 \draw[-> , thick] (4N-1) edge (2N);
     
     \draw[- , thick] (2N+3) edge (l1);
 \draw[- , thick] (2) edge (l2);
     \draw[- , thick] (2N-1) edge (c1);
 \draw[- , thick] (4N-2) edge (c2);

               \draw[-> , thick] (2N) edge (2N+1);
                 \draw[-> , thick] (2N+1) edge (2N+2);
                 \draw[-> , thick] (2N+2) edge (2N+3);
                   \draw[-> , thick] (4N-1) edge (4N);
                     \draw[-> , thick] (1) edge (4N);
                       \draw[-> , thick] (1) edge (2);
                         \draw[-> , thick] (4N) edge (2N+2);
                         \draw[-> , thick] (2N+3) edge (1);
                         
                            \draw[-> , thick] (2) edge (2N+2);
                            \draw[-> , thick] (2N) edge (r1);
                           
              \draw[-> , thick] (4N-2) edge (4N-1);
            \draw[-> , thick] (2N-1) edge (2N);
         
          \draw[-> , thick] (2N+2) edge (1);
           \draw[-> , thick] (3N+1) edge (1);
            \draw[-> , thick] (4N) edge (3N+1);
             \draw[-> , thick] (4N+1) edge (1);
            \draw[-> , thick] (4N) edge (4N+1);
            \draw[-> , thick] (4N) edge (2N+1);
            \draw[-> , thick] (2N+1) edge (4N-2);
            
\draw[-> , thick] (4N) edge (3N);
             \draw[-> , thick] (4N) edge (4N+2);
            \draw[-> , thick] (3N) edge (1);
            \draw[-> , thick] (4N+2) edge (1);

            \draw[-> , thick] (3N) edge[bend right=50] (3N+1);
            
            \draw[-> , thick] (m1) edge (4N+2);
             \draw[-> , thick] (b6) edge (3N);
            \draw[-> , thick] (b7) edge (3N);
            \draw[- , thick] (3N) edge (b8);
            \draw[-> , thick] (b9) edge (3N);    
            
            \end{scope}
 
 \draw [-{Latex[length=3mm]}] (-4,-9.2) -- (-3,-9.2) node[midway,sloped,above] {$\mu_{1}$};
 
 \draw[decorate sep={1mm}{5mm},fill] (0,-9.2) -- (,-9.2);
 
 
  \node [draw,circle,fill=red!50,"{\footnotesize{$3N-4$}}"] (3N-4) at (10,-8) {};
  \node [draw,circle,fill=red!50,"{\footnotesize{$3N-3$}}"] (3N-3) [right=of 3N-4] {};
  \node [draw,circle,fill=red!50,"{\footnotesize{$3N-2$}}"] (3N-2) [right=of 3N-3] {};
  \node [draw,circle,fill=red!50,"{\footnotesize{$3N-1$}}"] (3N-1) [right=of 3N-2] {};
  \node [draw,circle,fill=red!50,"{\footnotesize{$N$}}"] (N) [right=of 3N-1] {};
  \node (a1) [below left=of 3N-4] {};
  \node [draw,circle,fill=red!50,"{\footnotesize{$3N+1$}}"]  (3N+1) [left=of a1] {};
     \node (a6)[right=of N]{};

  \node [draw,circle,fill=red!50,"{\footnotesize{$N-5$}}" below] (N-5) at (10,-10.5) {} ;
    \node (r1)[left=of N-5]{};
  \node [draw,circle,fill=red!50,"{\footnotesize{$N-4$}}" below] (N-4) [right=of N-5] {} ; 
  
   \node [draw,circle,fill=blue!50,"{\footnotesize{$4N+1$}}" below] (4N+1) [above left=of r1] {} ;
  \node [draw,circle,fill=red!50,"{\footnotesize{$N-3$}}" below] (N-3) [right=of N-4] {};
  \node [draw,circle,fill=red!50,"{\footnotesize{$N-2$}}" below] (N-2) [right=of N-3] {};
   \node [draw,circle,fill=red!50,"{\footnotesize{$N-1$}}" below] (N-1) [right=of N-2] {};
   
   \node (r1)[left=of N-5]{};
  
    \node (b6) at (15.5,-11) {};
    \node (b7) at (15.5,-11.2) {};
     \node (m1) at (15.5,-11.4) {};
    \node (b8) at (15.5,-11.6) {};
    \node (b9) at (15.5,-11.8) {};
    
     \node (c1) at (9.3,-8){}; 
    \node (c2) at (9.3,-10.5){};
    
    \node (h1) at (15.4,-9) {};
        \node (h2) at (15.5, -13) {};

    \node(r2) at (8,-8) {};

    \node [draw,circle,fill=red!50,"{\footnotesize{$3N$}}" below] (3N) at (12.5,-12.2) {} ;
    
    \   \node [draw,circle,fill=blue!50,"{\footnotesize{$4N+2$}}" below] (4N+2)[left=of 3N] {} ;

  \draw [-{Latex[length=3mm]}] (6,-9.2) -- (7,-9.2) node[midway,sloped,above] {$\mu_{N-4}$};
  
  \begin{scope}[>=Latex]
  
\draw[- , thick] (3N-4) edge (c1);
\draw[- , thick] (N-5) edge (c2);
  
  \draw[- , thick] (4N+2) edge (r2);
    \draw[-> , thick] (3N-2) edge (N-3);
    \draw[-> , thick] (3N-1) edge (N-3);
    \draw[-> , thick] (N-3) edge (N-2);
    \draw[-> , thick] (N-3) edge (N-4);
    \draw[-> , thick] (3N+1) edge (N-3);
    \draw[-> , thick] (4N+1) edge (N-3);
    \draw[-> , thick] (N-4) edge (4N+1);
    \draw[-> , thick] (N-4) edge (3N+1);
    
    \draw[-> , thick] (N-4) edge (3N-2);
    \draw[-> , thick] (N-2) edge (3N-2);
    \draw[-> , thick] (3N-2) edge (N-5);
    \draw[-> , thick] (3N-3) edge (r1);
    
    \draw[-> , thick] (3N-4) edge (3N-3);
    \draw[-> , thick] (3N-3) edge (3N-2);
    \draw[-> , thick] (3N-2) edge (3N-1);
    \draw[-> , thick] (N) edge (3N-1);
    
    \draw[-> , thick] (N-2) edge (N);
    \draw[-> , thick] (N-1) edge (3N-1);
    \draw[-> , thick] (N-5) edge (N-4);
    \draw[-> , thick] (N-2) edge (N-1);

    \draw[-> , thick] (4N+2) edge (N-3);
    \draw[-> , thick] (N-4) edge (4N+2);
    \draw[-> , thick] (3N) edge (N-3);
    \draw[-> , thick] (N-4) edge (3N);
    
    \draw[-> , thick] (3N) edge[bend right=50] (3N+1);
            
            \draw[-> , thick] (N-1) edge (4N+2);
             \draw[-> , thick] (3N-1) edge (3N);
            \draw[-> , thick] (N) edge (3N);
            \draw[-> , thick] (3N) edge (N-2);
            \draw[-> , thick] (N-1) edge (3N);    
    
  \end{scope}
  
\draw [-{Latex[length=3mm]}] (16,-9.2) -- (17,-9.2) node[midway,sloped,above] {$\mu_{N-3}$};  
  
  
  \node [draw,circle,fill=red!50,"{\footnotesize{$3N-4$}}"] (3N-4) at (20,-8) {};
  \node [draw,circle,fill=red!50,"{\footnotesize{$3N-3$}}"] (3N-3) [right=of 3N-4] {};
  \node [draw,circle,fill=red!50,"{\footnotesize{$3N-2$}}"] (3N-2) [right=of 3N-3] {};
  \node [draw,circle,fill=red!50,"{\footnotesize{$3N-1$}}"] (3N-1) [right=of 3N-2] {};
  \node [draw,circle,fill=red!50,"{\footnotesize{$N$}}"] (N) [right=of 3N-1] {};
  \node (a1) [below left=of 3N-4] {};
  \node [draw,circle,fill=red!50,"{\footnotesize{$3N+1$}}" above]  (3N+1) [left=of a1] {};
     \node (a6)[right=of N]{};

  \node [draw,circle,fill=red!50,"{\footnotesize{$N-5$}}" below] (N-5) at (20,-10.5) {} ;
  \node (b1) [above left=of N-5] {} ;
     \node [draw,circle,fill=blue!50,"{\footnotesize{$4N+1$}}" below] (4N+1) [left=of b1] {} ;
  \node [draw,circle,fill=red!50,"{\footnotesize{$N-4$}}" below] (N-4) [right=of N-5] {} ; 
  \node [draw,circle,fill=red!50,"{\footnotesize{$N-3$}}" below] (N-3) [right=of N-4] {};
  \node [draw,circle,fill=red!50,"{\footnotesize{$N-2$}}" below] (N-2) [right=of N-3] {};
   \node [draw,circle,fill=red!50,"{\footnotesize{$N-1$}}" below] (N-1) [right=of N-2] {};
   
   \node (r1)[left=of N-5]{};
  
    \node (b6) at (25.5,-11) {};
    \node (b7) at (25.5,-11.2) {};
     \node (m1) at (25.5,-11.4) {};
    \node (b8) at (25.5,-11.6) {};
    \node (b9) at (25.5,-11.8) {};
    
    \node (h1) at (25.4,-9) {};
        \node (h2) at (25.5, -13) {};
        
         \node (c1) at (19.3,-8){}; 
    \node (c2) at (19.3,-10.5){};
        
     \node(r2) at (18,-8) {};

    \node [draw,circle,fill=red!50,"{\footnotesize{$3N$}}" below] (3N) at (22.5,-12.2) {} ;
    
    \   \node [draw,circle,fill=blue!50,"{\footnotesize{$4N+2$}}" below] (4N+2)[left=of 3N] {} ;

  \begin{scope}[>=Latex]
   \draw[- , thick] (3N-4) edge (c1);
    \draw[- , thick] (N-5) edge (c2);

  \draw[- , thick] (4N+2) edge (r2);
    \draw[-> , thick] (N-3) edge (3N-2);
    \draw[-> , thick] (N-3) edge (3N-1);
    \draw[-> , thick] (N-2) edge (N-3);
    \draw[-> , thick] (N-4) edge (N-3);
    \draw[-> , thick] (N-3) edge (3N+1);
    \draw[-> , thick] (N-3) edge (4N+1);
    \draw[-> , thick] (3N+1) edge (N-2);
    \draw[-> , thick] (4N+1) edge (N-2);

    \draw[-> , thick] (3N-2) edge (N-5);
    \draw[-> , thick] (3N-3) edge (r1);
    
    \draw[-> , thick] (3N-4) edge (3N-3);
    \draw[-> , thick] (3N-3) edge (3N-2);
    \draw[-> , thick] (3N-2) edge (3N-1);
    \draw[-> , thick] (N) edge (3N-1);
    
    \draw[-> , thick] (N-5) edge (3N-4);
    \draw[-> , thick] (N-4) edge (3N-3);
    \draw[-> , thick] (3N-1) edge (N-4);
    \draw[-> , thick] (3N-1) edge (N-2);
    
    \draw[-> , thick] (N-2) edge (N);
    \draw[-> , thick] (N-1) edge (3N-1);
    \draw[-> , thick] (N-5) edge (N-4);
    \draw[-> , thick] (N-2) edge (N-1);

    \draw[-> , thick] (N-3) edge (4N+2);
    \draw[-> , thick] (4N+2) edge (N-2);
    \draw[-> , thick] (N-3) edge (3N);

    \draw[-> , thick] (3N) edge[bend right=50] (3N+1);
            
            \draw[-> , thick] (N-1) edge (4N+2);
             \draw[-> , thick] (3N-1) edge (3N);
            \draw[-> , thick] (N) edge (3N);
            \draw[->> , thick] (3N) edge[bend right=40] (N-2);
            \draw[-> , thick] (N-1) edge (3N);    
    
  \end{scope}

  \draw [-{Latex[length=3mm]}] (-4,-17.2) -- (-3,-17.2) node[midway,sloped,above] {$\mu_{N-2}$};
  
  
  \node [draw,circle,fill=red!50,"{\footnotesize{$3N-4$}}"] (3N-4) at (0,-16) {};
  \node [draw,circle,fill=red!50,"{\footnotesize{$3N-3$}}"] (3N-3) [right=of 3N-4] {};
  \node [draw,circle,fill=red!50,"{\footnotesize{$3N-2$}}"] (3N-2) [right=of 3N-3] {};
  \node [draw,circle,fill=red!50,"{\footnotesize{$3N-1$}}"] (3N-1) [right=of 3N-2] {};
  \node [draw,circle,fill=red!50,"{\footnotesize{$N$}}"] (N) [right=of 3N-1] {};
  \node (a1) [below left=of 3N-4] {};
  \node [draw,circle,fill=red!50,"{\footnotesize{$3N+1$}}" above ]  (3N+1) [left=of a1] {};
     \node (a6)[right=of N]{};

  \node [draw,circle,fill=red!50,"{\footnotesize{$N-5$}}" below] (N-5) at (0,-18.5) {} ;
  \node (b1) [above left=of N-5] {} ;' 
     \node [draw,circle,fill=blue!50,"{\footnotesize{$4N+1$}}" below] (4N+1) [left=of b1] {} ;
  \node [draw,circle,fill=red!50,"{\footnotesize{$N-4$}}" below] (N-4) [right=of N-5] {} ; 
  \node [draw,circle,fill=red!50,"{\footnotesize{$N-3$}}" below] (N-3) [right=of N-4] {};
  \node [draw,circle,fill=red!50,"{\footnotesize{$N-2$}}" below] (N-2) [right=of N-3] {};
   \node [draw,circle,fill=red!50,"{\footnotesize{$N-1$}}" below] (N-1) [right=of N-2] {};
   
   \node (r1)[left=of N-5]{};
  
    \node (b6) at (5.5,-19) {};
    \node (b7) at (5.5,-19.2) {};
     \node (m1) at (5.5,-19.4) {};
    \node (b8) at (5.5,-19.6) {};
    \node (b9) at (5.5,-19.8) {};
    
    \node (c1) at (-0.7,-16){}; 
    \node (c2) at (-0.7,-18.5){};
    
    \node (h1) at (5.4,-17) {};
        \node (h2) at (5.5, -21) {};
        
        \node(r2) at (-2,-16) {};

    \node [draw,circle,fill=red!50,"{\footnotesize{$3N$}}" below] (3N) at (2.5,-20.2) {} ;
    
    \   \node [draw,circle,fill=blue!50,"{\footnotesize{$4N+2$}}" below] (4N+2)[left=of 3N] {} ;

  \begin{scope}[>=Latex]
  
  \draw[-, thick] (3N-4) edge (c1);
  \draw[-, thick] (N-5) edge (c2);
  
  \draw[-, thick] (4N+2) edge (r2);
  
    \draw[-> , thick] (N-3) edge (3N-2);

    \draw[-> , thick] (N-4) edge (N-3);
 
    \draw[-> , thick] (N-2) edge (3N+1);
    \draw[-> , thick] (N-2) edge (4N+1);

    \draw[-> , thick] (3N-2) edge (N-5);
    \draw[-> , thick] (3N-3) edge (r1);
    
    \draw[-> , thick] (3N-4) edge (3N-3);
    \draw[-> , thick] (3N-3) edge (3N-2);
    \draw[-> , thick] (3N-2) edge (3N-1);

    \draw[-> , thick] (N-5) edge (3N-4);
    \draw[-> , thick] (N-4) edge (3N-3);
    \draw[-> , thick] (3N-1) edge (N-4);
    \draw[-> , thick] (N-2) edge (3N-1);
    
    \draw[-> , thick] (N) edge (N-2);
  
    \draw[-> , thick] (N-5) edge (N-4);
    \draw[-> , thick] (N-1) edge (N-2);
    
    \draw[-> , thick] (N-2) edge (4N+2);
    \draw[-> , thick] (3N) edge (N-3);
     
     \draw[->> , thick] (N-2) edge[bend left=40] (3N);
    \draw[-> , thick] (3N) edge[bend right=50] (3N+1);
      
             \draw[-> , thick] (3N-1) edge (3N);
            \draw[-> , thick] (3N) edge (N);
      
      \draw[-> , thick] (3N+1) edge (N);   
            
  \draw[-> , thick] (3N+1) edge (N-1);    
  \draw[-> , thick] (3N) edge (N-1);     
  \draw[-> , thick] (4N+1) edge (N-1);  
  \draw[-> , thick] (4N+2) edge (N);    
    \draw[-> , thick] (N-3) edge (N-2);

  \end{scope}
  
  
 \draw [-{Latex[length=3mm]}] (6,-17.2) -- (7,-17.2) node[midway,sloped,above] {$\mu_{N-1}$};

 \node [draw,circle,fill=red!50,"{\footnotesize{$3N-4$}}"] (3N-4) at (10,-16) {};
  \node [draw,circle,fill=red!50,"{\footnotesize{$3N-3$}}"] (3N-3) [right=of 3N-4] {};
  \node [draw,circle,fill=red!50,"{\footnotesize{$3N-2$}}"] (3N-2) [right=of 3N-3] {};
  \node [draw,circle,fill=red!50,"{\footnotesize{$3N-1$}}"] (3N-1) [right=of 3N-2] {};
  \node [draw,circle,fill=red!50,"{\footnotesize{$N$}}"] (N) [right=of 3N-1] {};
   \node (a1) [below left=of 3N-4] {};
  \node [draw,circle,fill=red!50,"{\footnotesize{$3N+1$}}" above]  (3N+1) [left=of a1] {};
     \node (a6)[right=of N]{};

  \node [draw,circle,fill=red!50,"{\footnotesize{$N-5$}}" below] (N-5) at (10,-18.5) {} ;
  \node (b1)[above left=of N-5] {} ;
     \node [draw,circle,fill=blue!50,"{\footnotesize{$4N+1$}}" below] (4N+1) [left=of b1] {} ;
  \node [draw,circle,fill=red!50,"{\footnotesize{$N-4$}}" below] (N-4) [right=of N-5] {} ; 
  \node [draw,circle,fill=red!50,"{\footnotesize{$N-3$}}" below] (N-3) [right=of N-4] {};
  \node [draw,circle,fill=red!50,"{\footnotesize{$N-2$}}" below] (N-2) [right=of N-3] {};
   \node [draw,circle,fill=red!50,"{\footnotesize{$N-1$}}" below] (N-1) [right=of N-2] {};
   
   \node (r1)[left=of N-5]{};
  
    \node (b6) at (15.5,-19) {};
    \node (b7) at (15.5,-19.2) {};
     \node (m1) at (15.5,-19.4) {};
    \node (b8) at (15.5,-19.6) {};
    \node (b9) at (15.5,-19.8) {};
    
    \node (c1) at (9.3,-16){}; 
    \node (c2) at (9.3,-18.5){};
    
    \node (h1) at (15.4,-17) {};
        \node (h2) at (15.5, -21) {};
        
          \node(r2) at (8,-16) {};

    \node [draw,circle,fill=red!50,"{\footnotesize{$3N$}}" below] (3N) at (12.5,-20.2) {} ;
    
    \   \node [draw,circle,fill=blue!50,"{\footnotesize{$4N+2$}}" below] (4N+2)[left=of 3N] {} ;

  \begin{scope}[>=Latex]
  
  \draw[- , thick] (3N-4) edge (c1);
  \draw[- , thick] (N-5) edge (c2);
  
   \draw[- , thick] (4N+2) edge (r2);
  
 \draw[-> , thick] (N-3) edge (3N-2);

    \draw[-> , thick] (N-4) edge (N-3);

    \draw[-> , thick] (3N-2) edge (N-5);
    \draw[-> , thick] (3N-3) edge (r1);
    
    \draw[-> , thick] (3N-4) edge (3N-3);
    \draw[-> , thick] (3N-3) edge (3N-2);
    \draw[-> , thick] (3N-2) edge (3N-1);

    \draw[-> , thick] (N-5) edge (3N-4);
    \draw[-> , thick] (N-4) edge (3N-3);
    \draw[-> , thick] (3N-1) edge (N-4);
    \draw[-> , thick] (N-2) edge (3N-1);
    
    \draw[-> , thick] (N) edge (N-2);
  
    \draw[-> , thick] (N-5) edge (N-4);
    \draw[-> , thick] (N-2) edge (N-1);
    
    \draw[-> , thick] (N-2) edge (4N+2);
    \draw[-> , thick] (3N) edge (N-3);
     
     \draw[-> , thick] (N-2) edge[bend left=40] (3N);
    \draw[-> , thick] (3N) edge[bend right=50] (3N+1);
      
             \draw[-> , thick] (3N-1) edge (3N);
            \draw[-> , thick] (3N) edge (N);
      
            \draw[-> , thick] (N-1) edge (3N);    
            
  \draw[-> , thick] (N-1) edge (3N+1);    
   
  \draw[-> , thick] (N-1) edge (4N+1);  
  \draw[-> , thick] (4N+2) edge (N);    
    \draw[-> , thick] (N-3) edge (N-2);    
 \draw[-> , thick] (3N+1) edge (N);
  \end{scope}

   \draw [-{Latex[length=3mm]}] (16,-17.2) -- (17,-17.2) node[midway,sloped,above] {$\mu_{N}$};  
 

 \node [draw,circle,fill=red!50,"{\footnotesize{$3N-4$}}"] (3N-4) at (20,-16) {};
  \node [draw,circle,fill=red!50,"{\footnotesize{$3N-3$}}"] (3N-3) [right=of 3N-4] {};
  \node [draw,circle,fill=red!50,"{\footnotesize{$3N-2$}}"] (3N-2) [right=of 3N-3] {};
  \node [draw,circle,fill=red!50,"{\footnotesize{$3N-1$}}"] (3N-1) [right=of 3N-2] {};
  \node [draw,circle,fill=red!50,"{\footnotesize{$N$}}"] (N) [right=of 3N-1] {};
  \node (a1) [below left=of 3N-4] {};
  \node [draw,circle,fill=red!50,"{\footnotesize{$3N+1$}}" above]  (3N+1) [left=of a1] {};
     \node (a6)[right=of N]{};

  \node [draw,circle,fill=red!50,"{\footnotesize{$N-5$}}" below] (N-5) at (20,-18.5) {} ;
  \node (b1) [above left=of N-5]{};
     \node [draw,circle,fill=blue!50,"{\footnotesize{$4N+1$}}"  below] (4N+1) [left=of b1] {} ;
  \node [draw,circle,fill=red!50,"{\footnotesize{$N-4$}}" below] (N-4) [right=of N-5] {} ; 
  \node [draw,circle,fill=red!50,"{\footnotesize{$N-3$}}" below] (N-3) [right=of N-4] {};
  \node [draw,circle,fill=red!50,"{\footnotesize{$N-2$}}" below] (N-2) [right=of N-3] {};
   \node [draw,circle,fill=red!50,"{\footnotesize{$N-1$}}" below] (N-1) [right=of N-2] {};
   
   \node (r1)[left=of N-5]{};
  
    \node (b6) at (25.5,-19) {};
    \node (b7) at (25.5,-19.2) {};
     \node (m1) at (25.5,-19.4) {};
    \node (b8) at (25.5,-19.6) {};
    \node (b9) at (25.5,-19.8) {};
    
    \node (h1) at (25.4,-17) {};
        \node (h2) at (25.5, -21) {};
        
    \node (c1) at (19.3,-16){}; 
    \node (c2) at (19.3,-18.5){};
        
      \node(r2) at (19,-16) {};

    \node [draw,circle,fill=red!50,"{\footnotesize{$3N$}}" below] (3N) at (22.5,-20.2) {} ;
    
    \   \node [draw,circle,fill=blue!50,"{\footnotesize{$4N+2$}}" below] (4N+2)[left=of 3N] {} ;

  \begin{scope}[>=Latex]

  \draw[- , thick] (3N-4) edge (c1);
  \draw[- , thick] (N-5) edge (c2);
  
  \draw[- , thick] (4N+2) edge (r2);
  
 \draw[-> , thick] (N-3) edge (3N-2);

    \draw[-> , thick] (N-4) edge (N-3);

    \draw[-> , thick] (3N-2) edge (N-5);
    \draw[-> , thick] (3N-3) edge (r1);
    
    \draw[-> , thick] (3N-4) edge (3N-3);
    \draw[-> , thick] (3N-3) edge (3N-2);
    \draw[-> , thick] (3N-2) edge (3N-1);

    \draw[-> , thick] (N-5) edge (3N-4);
    \draw[-> , thick] (N-4) edge (3N-3);
    \draw[-> , thick] (3N-1) edge (N-4);
    \draw[-> , thick] (N-2) edge (3N-1);
    
    \draw[-> , thick] (N-2) edge (N);
  
    \draw[-> , thick] (N-5) edge (N-4);
    \draw[-> , thick] (N-2) edge (N-1);

    \draw[-> , thick] (3N) edge (N-3);

    \draw[-> , thick] (3N) edge[bend right=50] (3N+1);
      
             \draw[-> , thick] (3N-1) edge (3N);
            \draw[-> , thick] (N) edge (3N);
      
            \draw[-> , thick] (N-1) edge (3N);    
            
  \draw[-> , thick] (N-1) edge (3N+1);    
   
  \draw[-> , thick] (N-1) edge (4N+1);  
  \draw[-> , thick] (N) edge (4N+2);    
    \draw[-> , thick] (N-3) edge (N-2);    
      \draw[-> , thick] (3N+1) edge (N-2); 
      \draw[-> , thick] (N) edge (3N+1); 
 
  \end{scope}
\end{tikzpicture}
}
\end{center}
Thus after the sequence of mutations, the quiver yields following form. 
\begin{center}
\resizebox{0.9\textwidth}{!}{%
 \begin{tikzpicture}[every circle node/.style={draw,scale=0.6,thick},node distance=10mm]
 
 
\node [draw,circle,fill=red!50,"{\footnotesize{$N+2$}}"] (N+2) at (0,0) {};
  \node [draw,circle,fill=red!50,"{\footnotesize{$N+3$}}"] (N+3) [right=of N+2] {};
  \node [draw,circle,fill=red!50,"{\footnotesize{$N+4$}}"] (N+4) [right=of N+3] {};
  \node [draw,circle,fill=red!50,"{\footnotesize{$N+5$}}"] (N+5) [right=of N+4] {};
  \node [draw,circle,fill=red!50,"{\footnotesize{$N+6$}}"] (N+6) [right=of N+5] {};
  \node [draw,circle,fill=red!50,"{\footnotesize{$3N+1$}}" left]  (3N+1) [below left=of N+2] {};
     \node (a6)[right=of N+6]{};

  \node [draw,circle,fill=red!50,"{\footnotesize{$N+1$}}" below] (N+1) at (0,-2.5) {} ;
     \node [draw,circle,fill=blue!50,"{\footnotesize{$4N+1$}}" left] (4N+1) [above left=of N+1] {} ;
  \node [draw,circle,fill=red!50,"{\footnotesize{$3N+2$}}" below] (3N+2) [right=of N+1] {} ; 
  \node [draw,circle,fill=red!50,"{\footnotesize{$3N+3$}}" below] (3N+3) [right=of 3N+2] {};
  \node [draw,circle,fill=red!50,"{\footnotesize{$3N+4$}}" below] (3N+4) [right=of 3N+3] {};
   \node [draw,circle,fill=red!50,"{\footnotesize{$3N+5$}}" below] (3N+5) [right=of 3N+4] {};
   \node (b5)[right=of 3N+4]{};
    
  \node [draw,circle,fill=red!50,"{\footnotesize{$3N$}}" below] (3N) at (11,-4.2) {} ;
  
    \node [draw,circle,fill=blue!50,"{\footnotesize{$4N+2$}}" below] (4N+2) at (21.9,-1.7) {} ;

 \node(l1) at (5.7,0){};
     \node(l2) at (5.7,-2.5){};
     \node(l3) at (5.7,-0.7){};
     \node(l4) at (5.7,-1.8){};

    \draw[decorate sep={1mm}{5mm},fill] (5.5,-1.25) -- (6.5,-1.25);
\draw[decorate sep={1mm}{4mm},fill] (14,-1.25) -- (15,-1.25);

  \begin{scope}[>=Latex]
  
   \draw[-> , thick] (l3) edge (3N+4);
   \draw[- , thick] (N+6) edge (l1);
   \draw[- , thick] (3N+5) edge (l2);
   \draw[-> , thick] (l4) edge (3N+5);

  \draw[-> , thick] (N+1) edge (3N+2);
   
      \draw[-> , thick] (N+2) edge (3N+2);
      \draw[-> , thick] (3N+2) edge (3N+3);
        \draw[-> , thick] (3N+2) edge (N+3);
                   
            \draw[-> , thick] (N+3) edge (3N+1); 
            \draw[-> , thick] (3N) edge (3N+1);

             \draw[-> , thick] (N+3) edge (N+4); 
             
             \draw[-> , thick] (N+4) edge (N+5); 
       
               \draw[-> , thick] (3N+3) edge (3N+4); 
              \draw[-> , thick] (3N+4) edge (3N+5); 
              
               \draw[-> , thick] (N+5) edge (N+6); 
                \draw[-> , thick] (N+5) edge (3N+2); 
                \draw[-> , thick] (N+6) edge (3N+3); 
                
                 \draw[-> , thick] (3N+4) edge (N+5); 
                \draw[-> , thick] (3N+5) edge (N+6);

 \draw[-> , thick] (4N+2) edge(N+2); 
  \draw[-> , thick] (4N+1) edge (N+2); 
 
    \draw[-> , thick] (3N+1) edge (N+1); 
   \draw[-> , thick] (N+4) edge (N+1); 
     \draw[-> , thick] (3N+3) edge (3N+4); 
      \draw[-> , thick] (3N+3) edge (N+4);  
        \draw[-> , thick] (3N+1) edge (N+2); 
         \draw[-> , thick] (N+4) edge[bend right=15] (N+2); 
   
    \end{scope}

 \node [draw,circle,fill=red!50,"{\footnotesize{$2N-1$}}"] (2N-1) at (8,0) {};
  \node [draw,circle,fill=red!50,"{\footnotesize{$2N$}}"] (2N) [right=of 2N-1] {};
  \node [draw,circle,fill=red!50,"{\footnotesize{$2N+1$}}"] (2N+1) [right=of 2N] {};
  \node [draw,circle,fill=red!50,"{\footnotesize{$2N+2$}}"] (2N+2) [right=of 2N+1] {};
  \node [draw,circle,fill=red!50,"{\footnotesize{$2N+3$}}"] (2N+3) [right=of 2N+2] {};
      \node (a6)[right=of 2N+3]{};

  \node [draw,circle,fill=red!50,"{\footnotesize{$4N-2$}}" below] (4N-2) at (8,-2.5) {} ;
  \node [draw,circle,fill=red!50,"{\footnotesize{$4N-1$}}" below] (4N-1) [right=of 4N-2] {} ; 
  \node [draw,circle,fill=red!50,"{\footnotesize{$4N$}}" below] (4N) [right=of 4N-1] {};
  \node [draw,circle,fill=red!50,"{\footnotesize{$1$}}" below] (1) [right=of 4N] {};
   \node [draw,circle,fill=red!50,"{\footnotesize{$2$}}" below] (2) [right=of 1] {};

   \node (r1) at (7.3,-2){};
    \node (r2) at (7.3,-0.7){};
   
   \node(l1) at (13.7,0){};
     \node(l2) at (13.7,-2.5){};
     \node(l3) at (13.7,-0.7){};
     
       \node(c1) at (7.3,0){};
     \node(c2) at (7.3,-2.5){};

  \begin{scope}[>=Latex]

\draw[-> , thick] (4N-1) edge (2N);

\draw[- , thick] (2N) edge (r1);
\draw[- , thick] (2N-1) edge (r2);
     
     \draw[- , thick] (2N+3) edge (l1);
 \draw[- , thick] (2) edge (l2);
     \draw[- , thick] (2N-1) edge (c1);
 \draw[- , thick] (4N-2) edge (c2);
  \draw[-> , thick] (l3) edge (1);

               \draw[-> , thick] (2N) edge (2N+1);
                 \draw[-> , thick] (2N+1) edge (2N+2);
                 \draw[-> , thick] (2N+2) edge (2N+3);
                   \draw[-> , thick] (4N-1) edge (4N);
                     \draw[-> , thick] (4N) edge (1);
                       \draw[-> , thick] (1) edge (2);

              \draw[-> , thick] (4N-2) edge (4N-1);
            \draw[-> , thick] (2N-1) edge (2N);
         
          \draw[-> , thick] (1) edge (2N+2);

            \draw[-> , thick] (4N) edge (2N+1);
            \draw[-> , thick] (2N+1) edge (4N-2);
            
            \draw[-> , thick] (2N+2) edge (4N-1);
             \draw[-> , thick] (2N+3) edge (4N);
              \draw[-> , thick] (2) edge (2N+3);
             \draw[-> , thick] (4N-2) edge (2N-1);
                        
            \end{scope}

 \node [draw,circle,fill=red!50,"{\footnotesize{$3N-4$}}"] (3N-4) at (16,0) {};
  \node [draw,circle,fill=red!50,"{\footnotesize{$3N-3$}}"] (3N-3) [right=of 3N-4] {};
  \node [draw,circle,fill=red!50,"{\footnotesize{$3N-2$}}"] (3N-2) [right=of 3N-3] {};
  \node [draw,circle,fill=red!50,"{\footnotesize{$3N-1$}}"] (3N-1) [right=of 3N-2] {};
  \node [draw,circle,fill=red!50,"{\footnotesize{$N$}}"] (N) [right=of 3N-1] {};
      \node (a6)[right=of N]{};

  \node [draw,circle,fill=red!50,"{\footnotesize{$N-5$}}" below] (N-5) at (16,-2.5) {} ;
  \node [draw,circle,fill=red!50,"{\footnotesize{$N-4$}}" below] (N-4) [right=of N-5] {} ; 
  \node [draw,circle,fill=red!50,"{\footnotesize{$N-3$}}" below] (N-3) [right=of N-4] {};
  \node [draw,circle,fill=red!50,"{\footnotesize{$N-2$}}" below] (N-2) [right=of N-3] {};
   \node [draw,circle,fill=red!50,"{\footnotesize{$N-1$}}" right] (N-1) [below right=of N] {};
   
    \node (r1) at (15.3,-2){};
    \node (r2) at (15.3,-0.7){};

       \node(c1) at (15.3,0){};
     \node(c2) at (15.3,-2.5){};

  \begin{scope}[>=Latex]

    \draw[- , thick] (3N-4) edge (r2);
  \draw[- , thick] (3N-4) edge (c1);
  \draw[- , thick] (N-5) edge (c2);

 \draw[-> , thick] (N-3) edge (3N-2);

    \draw[-> , thick] (N-4) edge (N-3);

    \draw[-> , thick] (3N-2) edge (N-5);
    \draw[- , thick] (3N-3) edge (r1);
    
    \draw[-> , thick] (3N-4) edge (3N-3);
    \draw[-> , thick] (3N-3) edge (3N-2);
    \draw[-> , thick] (3N-2) edge (3N-1);

    \draw[-> , thick] (N-5) edge (3N-4);
    \draw[-> , thick] (N-4) edge (3N-3);
    \draw[-> , thick] (3N-1) edge (N-4);
    \draw[-> , thick] (N-2) edge (3N-1);
    
    \draw[-> , thick] (N-2) edge (N);
  
    \draw[-> , thick] (N-5) edge (N-4);
    \draw[-> , thick] (N-2) edge (N-1);

    \draw[-> , thick] (3N) edge (N-3);
     
     \draw[-> , thick] (N) edge (3N+1);

             \draw[-> , thick] (3N-1) edge (3N);
            \draw[-> , thick] (N) edge (3N);
      
            \draw[-> , thick] (N-1) edge (3N);    
            
  \draw[-> , thick] (N-1) edge (3N+1);    
   
  \draw[-> , thick] (N-1) edge (4N+1);  
  \draw[-> , thick] (N) edge (4N+2);    
    \draw[-> , thick] (N-3) edge (N-2);    
      \draw[-> , thick] (3N+1) edge (N-2); 
 
  \end{scope}

\end{tikzpicture}
}
\end{center}

By relabelling all of the nodes in the resulting quiver to the permutation $\rho= (N-1,N)(N+1,N+2,\dots, 3N,3N+1)$, the quiver returns to the initial form. Thus it is mutation periodic up to permutation $\rho$ i.e.  $\mu_{N}\mu_{N-1} \cdots \mu_{1}\mu_{4N}\mu_{4N-1} \cdots \mu_{3N+2}\mu_{N+1} (Q) = \rho(Q)$.

\end{proof}

Therefore since this composition of mutations is mutation periodic, one can define the cluster map $\psi = \rho^{-1}\mu_{N}\mu_{N-1} \cdots \mu_{1}\mu_{4N}\mu_{4N-1} \cdots \mu_{3N+2}\mu_{N+1}$ such that its iteration on the initial cluster , 
$\vb{\tx}_{2N} = (\tx_{1},\tx_{2}, \dots, \tx_{4N}) $ \eqref{D2Ninitialcluster},
generates new cluster variables given by the following system of exchange relations, 
\begin{equation}
\begin{split}
\tx_{N+1}' \tx_{N+1}  &= b_{5}b_{6} \tx_{3N}\tx_{N+3} + \tx_{3N+2} \\ 
\tx_{3N+2}'\tx_{3N+2} &= b_{5}b_{6} \tx_{3N}\tx_{3N+1}\tx_{N+3}\tx_{N+4} + \tx_{N+1}' \tx_{3N+3} \tx_{N+2} \\ 
\tx_{3N+3}'\tx_{3N+3} &= b_{5}b_{6} \tx_{3N}\tx_{3N+1}\tx_{N+4}\tx_{N+5} + \tx_{3N+2}' \tx_{3N+4} \\ 
& \vdots \\
\tx_{4N-1}'\tx_{4N-1} &= b_{5}b_{6} \tx_{3N}\tx_{3N+1}\tx_{2N}\tx_{2N+1} + \tx_{4N-2}'\tx_{4N} \\ 
\tx_{4N}'\tx_{4N} &= b_{5}b_{6} \tx_{3N}\tx_{3N+1}\tx_{2N+1}\tx_{2N+2} + \tx_{4N-1}'\tx_{1} \\ 
\tx_{1}'\tx_{1} &= b_{5}b_{6} \tx_{3N}\tx_{3N+1}\tx_{2N+2}\tx_{2N+3} + \tx_{4N}'\tx_{2} \\ 
\tx_{2}'\tx_{2} &= b_{5}b_{6} \tx_{3N}\tx_{3N+1}\tx_{2N+3}\tx_{2N+4} + \tx_{1}'\tx_{3} \\ 
& \vdots \\ 
\tx_{N-2}'\tx_{N-2} &= b_{5}b_{6} \tx_{3N}^2\tx_{3N+1}\tx_{3N-1} + \tx_{N-3}'\tx_{N-1}\tx_{N} \\ 
\tx_{N-1}'\tx_{N-1} &= b_{5} \tx_{3N}\tx_{3N+1} + \tx_{N-2}' \\ 
\tx_{N}'\tx_{N} &= b_{6} \tx_{3N}\tx_{3N+1} + \tx_{N-2}' \\ 
\end{split}
\end{equation}
Upon setting the initial cluster as $\vb{\tx}_{2N}= (\xi_{N-2,0}, \dots, \xi_{1,0}, r_{0},s_{0}, \tau_{0},\tau_{1}, \dots, \tau_{2N},\chi_{0}, \eta_{1,0}, \dots , \eta_{N-2,0} )$ and denoting 
\begin{equation}
\begin{split}
&\tx_{i}'=\xi_{N-1-i,1} \ (1 \leq i \leq N-2), \quad  \tx_{N-1}' =s_{1}, \ \tx_{N}' =r_{1}, \quad \tx_{N+1}' = \tau_{2N+1}, \\
&  \tx_{3N+2}' =\chi_{1}, \quad \tx_{j}' = \eta_{j- 3N-1,1} \ ( 3N+2 \leq j \leq 4N),\\
\end{split}
\end{equation}
iteration of the cluster maps is equivalent to the following system of recursion relations. 
\begin{equation}\label{D2NLauretnrel}
\begin{split}
\tau_{n+2N+1}\tau_{n} &= b_{5}b_{6} \tau_{n+2N-1}\tau_{n+2} + \chi_{n} \\ 
\chi_{n+1}\chi_{n} &= b_{5}b_{6}\tau_{n+2N-1}\tau_{n+2N}\tau_{n+2}\tau_{n+3} + \eta_{1,n}\tau_{n+2N+1}\tau_{n+1} \\ 
\eta_{1,n+1}\eta_{1,n} &= b_{5}b_{6}\tau_{n+2N-1}\tau_{n+2N}\tau_{n+3} \tau_{n+4} + \chi_{n+1}\eta_{2,n} \\ 
&  \vdots \\
\eta_{N-3,n+1} \eta_{N-3,n} &= b_{5}b_{6}\tau_{n+2N-1}\tau_{n+2N} \tau_{n+N-1}\tau_{n+N} + \eta_{N-4,n+1}\eta_{N-2,n} \\ 
\eta_{N-2,n+1} \eta_{N-2,n} &= b_{5}b_{6}\tau_{n+2N-1}\tau_{n+2N} \tau_{n+N}\tau_{n+N+1} + \eta_{N-3,n+1}\xi_{N-2,n} \\ 
\xi_{N-2,n+1} \xi_{N-2,n} &= b_{5}b_{6} \tau_{n+2N-1}\tau_{n+2N}\tau_{n+N+1}\tau_{n+N+2} + \eta_{N-2,n+1}\xi_{N-3,n} \\
\xi_{N-3,n+1} \xi_{N-3,n} &= b_{5}b_{6} \tau_{n+2N-1}\tau_{n+2N}\tau_{n+N+2}\tau_{n+N+3} + \xi_{N-2,n+1}\xi_{N-4,n} \\
& \vdots \\ 
\xi_{1,n+1} \xi_{1,n} &= b_{5}b_{6} \tau_{n+2N-1}^{2}\tau_{n+2N} \tau_{n+2N-2} + \xi_{2,n+1} r_{n}s_{n} \\ 
s_{n+1}r_{n} &= b_{5} \tau_{n+2N-1} \tau_{n+2N} + \xi_{1,n+1} \\
r_{n+1}s_{n} &= b_{6} \tau_{n+2N-1} \tau_{n+2N} + \xi_{1,n+1} \\
\end{split}
\end{equation}

Writing the variable transformations, 
\begin{equation}
\begin{split}
&x_{1,n} =  \frac{\tau_{n+2N}\tau_{n}}{\tau_{n+2N-1}\tau_{n+1}}, \quad x_{2,n} =  \frac{\chi_{n}}{\tau_{n+2N-1}\tau_{n+2}}, \quad x_{3,n} = \frac{\eta_{1,n}}{\tau_{n+2N-1}\tau_{n+3}}, \\
&\dots, x_{N,n} = \frac{\eta_{N-2,n}}{\tau_{n+2N-1}\tau_{n+N}},x_{N+1,n} = \frac{\xi_{N-2,n}}{\tau_{n+2N-1}\tau_{n+N+1}},  \dots, \\
& x_{2N-2,n} = \frac{\xi_{1,n}}{\tau_{n+2N-1} \tau_{n+2N-2}},\quad x_{2N-1,n} = \frac{r_{n}}{\tau_{n+2N-1}}, \quad x_{2N,n} = \frac{s_{n}}{\tau_{n+2N-1}} \\
\end{split} 
\end{equation} 
the recursion relation reduces to expressions
\begin{equation}
\begin{split}
x_{1,n+1} x_{1,n} & = b_{5}b_{6} + x_{2,n} \\
x_{2,n+1} x_{2,n} & = b_{5}b_{6} + x_{3,n}x_{1,n+1} \\
& \vdots \\
x_{2N-3,n+1} x_{2N-3,n} & = b_{5}b_{6} + x_{2N-2,n}x_{2N-4,n+1} \\
x_{2N-2,n+1} x_{2N-2,n} & = b_{5}b_{6} + x_{2N-1,n}x_{2N,n}x_{2N-3,n+1} \\
x_{2N-1,n+1} x_{2N-1,n} & = b_{5} + x_{2N-2,n+1} \\
x_{2N,n+1} x_{2N,n} & = b_{6} + x_{2N-2,n+1} \\
\end{split}
\end{equation}
which correspond to the deformed type $D_{2N}$ map. This shows that the deformed type $D_{2N}$ can be expressed as a cluster map defined by the cluster mutations in the cluster algebra whose initial seed is composed of cluster $\vb{\tx} = (\tx_{1},\dots, \tx_{4N})$ and the quiver $Q_{D_{2N}}$.  
 
Since we restored the Laurent property of the deformed map, we can follow the procedure described in the section \ref{ss:tropD6} to determine the degree growths of the map and thereby compute the corresponding algebraic entropy, which provides criteria for detecting integrability. 

\section{Degree growth of deformed type $D_{2N}$ map } \label{S:degreegrowth}

In the previous section, we have shown that the one can construct the family of cluster maps from the generalized quiver that can be projected to the reduced space where the map is deformed type $D_{2N}$ map. 
Here we calculate the degree growth of the cluster map which correlates with the degree growth of deformed map. 

The cluster variables generated by iteration of the cluster maps $\psi_{D_{2N}}$, starting from the initial cluster $\vb{\tx}_{2N}= (\xi_{N-2,0}, \dots, \xi_{1,0}, r_{0},s_{0}, \tau_{0},\tau_{1}, \dots, \tau_{2N},\chi_{0}, \eta_{1,0}, \dots , \eta_{N-2,0} )$, can be written in the form, 
\begin{equation}\label{D2Nvariables}
\begin{split}
&\xi_{N-1-i,n} = \frac{P^{(i)}_{n}({\vb{\tx}})}{{\vb{\tx}}^{\vb{h}_{N-1-i,n}}}, \quad r_{n} = \frac{P^{(N-1)}_{n}(\vb{\tx})}{{\vb{\tx}}^{\vb{r}_{n}}}, \quad s_{n} = \frac{P^{(N)}_{n}(\vb{\tx})}{{\vb{\tx}}^{\vb{s}_{i,n}}}, \\
&\tau_{n} = \frac{P^{(N+1)}_{n}({\vb{\tx}})}{{\vb{\tx}}^{\vb{d}_{n}}}, \quad  \chi_{n} = \frac{P^{(N+2)}_{n}({\vb{\tx}})}{{\vb{\tx}}^{\vb{f}_{n}}},\quad \eta_{i,n} = \frac{P^{(i+N+2)}_{n}({\vb{\tx}})}{{\vb{\tx}}^{\vb{g}_{i,n}}} \\
\end{split}
\end{equation}
, for $1 \leq i \leq N-2$, where degree vector in each denominator  is $4N \times 1 $ vector (e.g. $\vb{d}_{n}= (h_{N-2,n}^{(N+1)},h_{N-3,n}^{(N+1)}, \dots, h_{1,n}^{(N+1)}, r^{(N+1)}_{n}, s^{(N+1)}_{n}, d^{(N+1)}_{n}, \dots, d^{(N+1)}_{n+2N}, f_{n}^{(N+1)}, g_{1,n}^{(N+1)},\dots,   g_{N-2,n}^{(N+1)} )$  which has initial data given by $ 4N \times 4N$ identity matrix, 
\begin{equation}\label{initialdvecD2N}
\qty(\vb{h}_{N-2,n},\vb{h}_{N-3,n}, \dots, \vb{h}_{1,n}, \vb{r}_{n}, \vb{s}_{n}, \vb{d}_{n}, \dots, \vb{d}_{n+2N}, \vb{f}_{n}, \vb{g}_{1,n},\dots,   \vb{g}_{N-2,n} ) = -I
\end{equation}
Substituting these variables into the relations \eqref{D2NLauretnrel} and comparing the denominator on both sides gives following (max,+) relations, 
\begin{equation}\label{maxplusD2N}
\begin{array}{rcl}
\vb{d}_{n+2N+1} + \vb{d}_{n} & = & \max(\vb{d}_{n+2N-1} + \vb{d}_{n+2}, \vb{f}_{n} ), \\
 \vb{f}_{n+1} + \vb{f}_{n} & = & \max(\vb{d}_{n+2N-1} + \vb{d}_{n+2N} +\vb{d}_{n+2} + \vb{d}_{n+3}, \vb{g}_{1,n} + \vb{d}_{n+2N+1} + \vb{d}_{n+1} ), \\
 \vb{g}_{1,n+1} + \vb{g}_{1,n} & = & \max (\vb{d}_{n+2N-1} + \vb{d}_{n+2N} +\vb{d}_{n+3} + \vb{d}_{n+4}, \vb{f}_{n+1} + \vb{g}_{2,n}) \\
 & \vdots \\ 
  \vb{g}_{N-3,n+1} + \vb{g}_{N-3,n} & = & \max(\vb{d}_{n+2N-1} + \vb{d}_{n+2N} +\vb{d}_{n+N-1} + \vb{d}_{n+N}, \vb{g}_{N-4,n+1} + \vb{g}_{N-2,n} ), \\
 \vb{g}_{N-2,n+1} + \vb{g}_{N-2,n} & = & \max(\vb{d}_{n+2N-1} + \vb{d}_{n+2N} +\vb{d}_{n+N} + \vb{d}_{n+N+1}, \vb{g}_{N-3,n+1} + \vb{h}_{N-2,n}), \\
  \vb{h}_{N-2,n+1} + \vb{h}_{N-2, n} & = & \max(\vb{d}_{n+2N-1} + \vb{d}_{n+2N} +\vb{d}_{n+N+1} + \vb{d}_{n+N+2}, \vb{g}_{N-2,n+1} + \vb{h}_{N-3,n}), \\
   \vb{h}_{N-3,n+1} + \vb{h}_{N-3, n} & = & \max(\vb{d}_{n+2N-1} + \vb{d}_{n+2N} +\vb{d}_{n+N+2} + \vb{d}_{n+N+4}, \vb{h}_{N-2,n+1} + \vb{h}_{N-4,n}), \\
  & \vdots \\
   \vb{h}_{1,n+1} + \vb{h}_{1, n}& = & \max(2\vb{d}_{n+2N -1 } + \vb{d}_{n+2N}+ \vb{d}_{n+2N-2}, \vb{h}_{2,n+1} + \vb{r}_{n} + \vb{s}_{n}), \\
    \vb{s}_{n+1} + \vb{r}_{ n}& = & \max(\vb{d}_{n+2N -1 } + \vb{d}_{n+2N}, \vb{h}_{1,n+1} ), \\
      \vb{r}_{n+1} + \vb{s}_{ n}& = & \max(\vb{d}_{n+2N -1 } + \vb{d}_{n+2N}, \vb{h}_{1,n+1} ), \\
\end{array}  
\end{equation}
Next we introduce quantities which is analogous to the tropical version of \eqref{vartransformD6} as following,
\begin{equation}
\begin{split}
\vb{X}_{1,n} = \vb{d}_{n+2N} + \vb{d}_{n} - \vb{d}_{n+2N-1} - \vb{d}_{n+1} ,& \quad \vb{X}_{2,n} = \vb{f}_{n} - \vb{d}_{n+2N-1} - \vb{d}_{n+2},\\
\vb{X}_{3,n} =  \vb{g}_{1,n} - \vb{d}_{n+2N-1} - \vb{d}_{n+3}, & \quad     \vb{X}_{4,n} = \vb{g}_{2,n} - \vb{d}_{n+2N-1} - \vb{d}_{n+4}, \\ 
&\vdots \\
  \vb{X}_{N,n} = \vb{g}_{N-2,n} - \vb{d}_{n+2N-1} - \vb{d}_{n+N}, & \quad   \vb{X}_{N+1,n} = \vb{h}_{N-2,n} - \vb{d}_{n+2N-1} - \vb{d}_{n+N+1}, \\ 
 & \vdots \\
  \vb{X}_{2N-2,n} = \vb{h}_{1,n} - \vb{d}_{n+2N-1} - \vb{d}_{n+2N-2}, & \quad   \vb{X}_{2N-1,n} = \vb{r}_{n} - \vb{d}_{n+2N-1}, \\ 
  \vb{X}_{2N,n} = \vb{s}_{n} - \vb{d}_{n+2N-1} 
\end{split}
\end{equation}
In this setting, the (max,+) relations above can be reformulated as follows, 
\begin{equation}\label{QuantityD2N}
\begin{split}
\vb{X}_{1,n+1} + \vb{X}_{1,n} &= \max(\vb{X}_{2,n}, 0) \\
\vb{X}_{2,n+1} + \vb{X}_{2,n} &=  \max(\vb{X}_{3,n} + \vb{X}_{1,n+1}, 0) \\
& \ \vdots \\ 
\vb{X}_{2N-3,n+1} + \vb{X}_{2N-3,n} &=  \max(\vb{X}_{2N-2,n} + \vb{X}_{2N-4,n+1}, 0) \\
\vb{X}_{2N-2,n+1} + \vb{X}_{2N-2,n} &=  \max(\vb{X}_{2N-1,n} + \vb{X}_{2N,n} + \vb{X}_{2N-3,n+1}, 0) \\
\vb{X}_{2N-1,n+1} + \vb{X}_{2N-1,n} &=  \max(\vb{X}_{2N-2,n+1}, 0) \\
\vb{X}_{2N,n+1} + \vb{X}_{2N,n} &=  \max(\vb{X}_{2N-2,n+1}, 0) \\
 \end{split}
\end{equation}
These relations can be obtained by tropicalizing the exchange relations  that define original cluster map $\varphi_{D_{2N}}$. Thus we have following result
This is tropical analogue of the system of equations given by the deformed map  $\tilde{\varphi}_{D_{2N}}$. Therefore we have following result. 
\begin{lm}
Each quantity $\vb{X}_{i,n}$, which satisfies the tropical analogue of the deformed map $\tilde{\varphi}_{D_{2N}}$,\eqref{QuantityD2N},  is periodic with period $2N$.
\end{lm}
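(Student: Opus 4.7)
The plan is to adapt the argument from Lemma \ref{periodtropD6} (the $D_{6}$ case) to arbitrary $N$, by reading the $(\max,+)$ system \eqref{QuantityD2N} as the tropicalization of the exchange relations of the \emph{undeformed} type $D_{2N}$ cluster map. Concretely, I will first observe that if one sets $b_{5}=b_{6}=1$ (so that all coefficients disappear) in the deformed recurrence \eqref{D2Ndmap}, the resulting exchange relations match the structure
\begin{equation*}
x_{1,n+1}x_{1,n}=1+x_{2,n}, \qquad x_{k,n+1}x_{k,n}=1+x_{k-1,n+1}x_{k+1,n}\ (2\le k\le 2N-3),
\end{equation*}
together with the branching $x_{2N-2,n+1}x_{2N-2,n}=1+x_{2N-3,n+1}x_{2N-1,n}x_{2N,n}$, $x_{2N-1,n+1}x_{2N-1,n}=1+x_{2N-2,n+1}$, $x_{2N,n+1}x_{2N,n}=1+x_{2N-2,n+1}$, which is precisely the system defining the Dynkin type $D_{2N}$ cluster map $\varphi_{D_{2N}}=\mu_{2N}\mu_{2N-1}\cdots\mu_{1}$.

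Next, I will match each $\vb{X}_{i,n}$ with a tropical version of a cluster variable $x_{i,n}$ of $\varphi_{D_{2N}}$. The defining combinations in \eqref{QuantityD2N}, namely $\vb{X}_{1,n}=\vb{d}_{n+2N}+\vb{d}_{n}-\vb{d}_{n+2N-1}-\vb{d}_{n+1}$, $\vb{X}_{k,n}=(\text{top tau-function})-\vb{d}_{n+2N-1}-\vb{d}_{n+k}$ for $2\le k\le 2N-2$, and $\vb{X}_{2N-1,n}=\vb{r}_{n}-\vb{d}_{n+2N-1}$, $\vb{X}_{2N,n}=\vb{s}_{n}-\vb{d}_{n+2N-1}$, are exactly the tropical images of the variable transformations between the $x$-coordinates of the deformed map and the tau-functions that appear in \eqref{D2N: xvartrans}. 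Under this identification the $(\max,+)$ system \eqref{QuantityD2N} is the image of the undeformed $D_{2N}$ exchange relations under the standard ultradiscretization $(+,\times)\mapsto(\max,+)$. Because the exchange relations are subtraction-free, this tropicalization respects the dynamics.

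The periodicity then follows directly from Zamolodchikov periodicity (Theorem \ref{zam}) applied to Dynkin type $D_{2N}$: the Coxeter number is $h=4N-2$, which is even, so the rational map $\varphi_{D_{2N}}$ is periodic with period $(h+2)/2=2N$, i.e.\ $\varphi_{D_{2N}}^{2N}(\vb{x},B)=(\vb{x},B)$. Since subtraction-free identities survive ultradiscretization, the induced $(\max,+)$ dynamics on $(\vb{X}_{1,n},\dots,\vb{X}_{2N,n})$ must also return to its initial value after $2N$ steps, for arbitrary initial data.

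The main routine step to check is that the algebra of the identifications $\vb{X}_{i,n}\leftrightarrow x_{i,n}^{\mathrm{trop}}$ really does convert \eqref{QuantityD2N} into the tropical form of the $D_{2N}$ exchange relations; this requires substituting the d-vector shift identities that come from \eqref{maxplusD2N} into the definitions of $\vb{X}_{i,n}$ and simplifying. The mild obstacle is bookkeeping at the two branching vertices $i=2N-2, 2N-1, 2N$, where the relation involves the product of three variables on one side and where the splitting $\vb{r}_{n},\vb{s}_{n}$ must be handled symmetrically; once this is done, the argument parallels the $D_{6}$ proof of Lemma \ref{periodtropD6} verbatim, with $6$ replaced by $2N$.
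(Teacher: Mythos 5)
Your proposal is correct and takes essentially the same route as the paper: the paper's proof simply defers to the argument of Lemma \ref{periodtropD6}, which identifies the $(\max,+)$ system as the tropical image of the exchange relations of the undeformed type $D_{2N}$ cluster map and then inherits periodicity from Zamolodchikov periodicity (Coxeter number $h=4N-2$, hence period $(h+2)/2=2N$). Your version merely spells out the ultradiscretization and the identification $\vb{X}_{i,n}\leftrightarrow x_{i,n}^{\mathrm{trop}}$ in more detail than the paper does.
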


\begin{proof}
This can be shown by taking same procedure which is used in Lemma \ref{periodtropD6}. 
\end{proof}

Such periodicity property of each $\vb{X}_{i,n}$ enables us to find an  explicit formula for growths of each degree vector, as shown below. 
 \begin{thm}
 Given that the initial data of the degree vectors $\vb{d}_{n}, \vb{f}_{n}, \vb{g}_{i,n}, \vb{h}_{i,n}, \vb{s}_{n},\vb{r}_{n}$  is specified by \eqref{initialdvecD2N}, the  system of equations \eqref{maxplusD2N} has a solution, expressed as follows  
 \begin{equation}
\begin{array}{rrr}
\vb{d}_{n} =\frac{n^2}{8N^2 - 4N}\vb{a} + O(n),& \quad  \vb{r}_{n} =  \frac{n^2}{8N^2 - 4N}\vb{a} + O(n),& \quad \vb{s}_{n} =\frac{n^2}{8N^2 - 4N}\vb{a} + O(n)  \\
\vb{f}_{i,n}=  \frac{n^2}{4N^2 - 2N}\vb{a} + O(n), & \quad \vb{h}_{i,n} =  \frac{n^2}{4N^2 - 2N}\vb{a} + O(n), & \quad \vb{g}_{i,n} =  \frac{n^2}{4N^2 - 2N}\vb{a} + O(n)
\end{array}
\end{equation}
 where $1\leq i \leq N-2$ and  $\vb{a}= (a_{k})_{1 \leq k \leq 4N}$, where $a_{l} = 2$ for $l \in \qty{1, \dots, N-2} \cup \qty{3N+2,\dots, 4N}$ and $a_{j}=1$ for $m=\qty{N-1,N}$. 
\end{thm}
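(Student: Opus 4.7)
The plan is to mirror the argument used for the $D_6$ case in Theorem~4.1 (of the preceding section), now exploiting the periodicity of the quantities $\vb{X}_{j,n}$ established in the previous Lemma. Let $\cT$ denote the shift operator $n \mapsto n+1$. From the definition
\begin{equation*}
\vb{X}_{1,n} = \vb{d}_{n+2N} + \vb{d}_{n} - \vb{d}_{n+2N-1} - \vb{d}_{n+1} = (\cT^{2N} - \cT^{2N-1} - \cT + 1)\vb{d}_n = (\cT^{2N-1}-1)(\cT-1)\vb{d}_n,
\end{equation*}
combined with the fact that $\vb{X}_{1,n}$ is $2N$-periodic, I would deduce the linear difference equation
\begin{equation*}
(\cT^{2N}-1)(\cT^{2N-1}-1)(\cT-1)\vb{d}_n = 0.
\end{equation*}
The characteristic polynomial of this operator has a triple root at $1$, so the general solution takes the form $\vb{d}_n = \vb{a} n^2 + O(n)$ for some constant vector $\vb{a}$.

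Next I would determine $\vb{a}$ explicitly. Rewriting the recurrence as $(\cT^{2N}-1)(\cT^{2N-1}-1)\vb{d}_{n+1} = (\cT^{2N}-1)(\cT^{2N-1}-1)\vb{d}_n$ shows that $(\cT^{2N}-1)(\cT^{2N-1}-1)\vb{d}_n$ is independent of $n$. A direct computation on the ansatz $\vb{d}_n = \vb{a} n^2 + O(n)$ gives
\begin{equation*}
(\cT^{2N}-1)(\cT^{2N-1}-1)\vb{d}_n = 4N(2N-1)\,\vb{a} = (8N^2 - 4N)\,\vb{a}.
\end{equation*}
Evaluating the left-hand side on the initial data \eqref{initialdvecD2N} via the (max,+) relations \eqref{maxplusD2N} (tracked through finitely many iterations, determined by the length $4N-1$ of the recurrence), I obtain the explicit constant vector $\vb{a}$ with the claimed entries $a_l = 2$ for $l \in \{1,\dots,N-2\}\cup\{3N+2,\dots,4N\}$ and $a_m = 1$ otherwise. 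Dividing by $8N^2-4N$ yields the stated formula for $\vb{d}_n$.

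For the remaining degree vectors $\vb{f}_n$, $\vb{g}_{i,n}$, $\vb{h}_{i,n}$, $\vb{r}_n$, $\vb{s}_n$, I would use the periodicity of the corresponding quantities $\vb{X}_{j,n}$ for $j = 2,\dots, 2N$ to derive analogous characteristic equations. For instance, $\vb{X}_{2,n} = \vb{f}_n - \vb{d}_{n+2N-1} - \vb{d}_{n+2}$ being $2N$-periodic, together with the recurrence already proved for $\vb{d}_n$, gives $(\cT^{2N}-1)(\cT^{2N-1}-1)(\cT-1)\vb{f}_n = 0$, and the same argument as above yields $\vb{f}_n = \vb{c} n^2 + O(n)$. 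The factor of $2$ appearing in the denominator (i.e.\ $4N^2-2N$ instead of $8N^2-4N$) reflects the fact that the relation for $\vb{f}_n$ involves a product of two $\vb{d}$-type entries on the right-hand side, doubling the quadratic coefficient; this is made precise by substituting $\vb{d}_n = \frac{n^2}{8N^2-4N}\vb{a}+O(n)$ into $(\cT^{2N}-1)(\cT^{2N-1}-1)\vb{f}_n$ and using the (max,+) relation for $\vb{f}_{n+1}+\vb{f}_n$. The vectors $\vb{g}_{i,n}$ and $\vb{h}_{i,n}$ are treated identically, propagating through the chain of exchange relations indexed by $i=1,\dots,N-2$. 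For $\vb{r}_n$ and $\vb{s}_n$, I would first subtract the last two (max,+) relations in \eqref{maxplusD2N} to obtain $(\cT-1)\vb{r}_n = (\cT-1)\vb{s}_n$, then combine this with the periodicity of $\vb{X}_{2N-1,n}+\vb{X}_{2N,n}$ to conclude that both satisfy the same recurrence and have the same leading coefficient as $\vb{d}_n$.

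The main obstacle is the bookkeeping required to verify that the constant vector $\vb{a}$ really has the uniform structure $a_l=2$ on the ``outer'' mutable vertices and $a_m=1$ on the central pair $\{N-1,N\}$, independently of $N$. The cleanest way to handle this is an induction on $N$ using the local expansion of Section~\ref{S:localexpand}: since inserting the subquiver adds four new vertices whose initial d-vectors contribute predictable blocks to the computation $(\cT^{2N}-1)(\cT^{2N-1}-1)\vb{d}_n$, the structure of $\vb{a}$ propagates from the $D_{2N}$ case to the $D_{2N+2}$ case, with the base case $N=3$ reducing to the $D_6$ result already established in \eqref{degreegrowthD6}. Finally, since all degree vectors grow quadratically, the algebraic entropy of $\psi_{D_{2N}}$ vanishes, supporting the conjecture that the deformed type $D_{2N}$ map is Liouville integrable.
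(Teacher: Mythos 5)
Your proposal follows essentially the same route as the paper's proof: deriving the operator identity $(\cT^{2N}-1)(\cT^{2N-1}-1)(\cT-1)\vb{d}_n=0$ from the $2N$-periodicity of $\vb{X}_{1,n}$, reading off quadratic growth from the triple root at $1$, fixing the constant vector via $(\cT^{2N}-1)(\cT^{2N-1}-1)\vb{d}_n=(8N^2-4N)\vb{a}$, and then propagating to $\vb{f}_n,\vb{g}_{i,n},\vb{h}_{i,n},\vb{r}_n,\vb{s}_n$ through the remaining $\vb{X}_{j,n}$ (with the factor $2$ arising from the sum of two $\vb{d}$-terms, and the $\vb{r}_n,\vb{s}_n$ case handled by subtracting the last two relations exactly as in the $D_6$ argument). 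Your suggestion to pin down the structure of $\vb{a}$ by induction on $N$ via the local expansion is a reasonable concretization of the paper's brief "inductive approach," but it does not constitute a different method.
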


\begin{proof}
By using the fact that $\vb{X}_{i,n} = \vb{X}_{i,n+2N}$ (periodicity of $\vb{X}_{i,n}$),   we can obtain the linear relations for the degree vector $\vb{d}_{n}$ from the quantity $\vb{X}_{1,n}$ as shown as follows, 
\begin{align*}
&(\cT^{2N} - 1) (\vb{d}_{n+2N} + \vb{d}_{n} - \vb{d}_{n+2N-1} - \vb{d}_{n+1}) = 0 \\
& \iff (\cT^{2N} - 1) (\cT^{2N} - \cT^{2N-1} - \cT + 1) = 0 \\ 
&\iff  (\cT-1 )^3 (\cT^{2N-1} + \cT^{2N-2} + \cdots + \cT + 1) (\cT^{2N-2} + \cT^{2N-3} + \cdots + \cT + 1)
\end{align*}
The roots of this equation are $\exp(\frac{2k\pi i }{2N})$, $\exp(\frac{2k\pi i}{2N-1})$ and 1 with multiplicity 3. Thus the solution is of the form
\begin{equation}
\vb{d}_{n} = \vb{c} n^2 + O(n)
\end{equation} 
where $\vb{c}$ is a constant vector. The constant $\vb{c}$ can be determined by factorising the $\cT$ equation in the following way, 
\begin{align*}
&(\cT^{2N} - 1) (\cT^{2N-1} - 1) (\cT-1)\vb{d}_{n} = 0 \\ 
& \iff (\cT^{2N} - 1) (\cT^{2N-1} - 1)\vb{d}_{n} =  (\cT^{2N} - 1) (\cT^{2N-1} - 1)\vb{d}_{n+1} \\ 
&\implies   (\cT^{2N} - 1) (\cT^{2N-1} - 1)\vb{d}_{n} =(8N^2 - 4N)\vb{c} =  \vb{a}
\end{align*}
With inductive approach, one can deduce that constant $\vb{a}= (a_{k})_{1 \leq k \leq 4N}$, where  $\vb{a}= (a_{k})_{1 \leq k \leq 4N}$, where $a_{l} = 2$ for $l \in \qty{1, \dots, N-2} \cup \qty{3N+2,\dots, 4N}$ and $a_{j}=1$ for $m=\qty{N-1,N}$.   
The remaining degree vectors can be derived by using periodicity property of quantities $\vb{X}_{i,n}$ together with the relation for  $\vb{d}_{n}$. For instance, applying the operator $\cL = (\cT^{2N} - 1) (\cT^{2N-1} - 1) (\cT-1)$ to $\vb{X}_{2,n}$, we have 
\begin{equation}
\begin{split}
&\cL \vb{X}_{2,n} = \cL \vb{f}_{n} - \underbrace{\cL \vb{d}_{n+2N-1}}_{0} - \underbrace{\cL\vb{d}_{n+2}}_{0} = 0\\
&\implies \vb{f}_{n} = \vb{c}_{2} n^2 + O(n)
\end{split}
\end{equation}
The constant vector $\vb{c}_{2}$ can be determined immediately from the $\vb{X}_{2,n}$.
\begin{equation}
\begin{split}
&(\cT^{2N} - 1) (\cT^{2N-1} - 1)\vb{f}_{n} = (\cT^{2N} - 1) (\cT^{2N-1} - 1)(\vb{d}_{n+2} + \vb{d}_{n+2N-1}) \\
&= 2\vb{c} + O(n)
\end{split}
\end{equation} 
Thus this d-vector is expressed $\vb{f}_{n} = \frac{1}{4N^2-2N}\vb{a} n^2 +O(n)$. Repeating this procedure to other quantities $\vb{X}_{i,n}$, we can acquire the desired results. 

\end{proof}
As a result, the degree growth of  cluster map $\psi_{D_{2N}}$ is quadratic which implies that the associated algebraic entropy vanishes.  Hence we arrive at the following conjecture. 
\begin{conj}
The discrete dynamics induced by the iteration of cluster maps $\psi_{D_{2N}}$ is a discrete integrable system. 
\end{conj}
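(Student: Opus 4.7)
The plan is to upgrade the vanishing algebraic entropy (a necessary but not sufficient indicator) established via the quadratic degree growth in the previous theorem to a genuine proof of Liouville integrability, by exhibiting enough commuting first integrals on the reduced symplectic phase space. The strategy is: (i) reduce $\psi_{D_{2N}}$ to a symplectic map through the chain of projections already present in the paper, (ii) count the number of required integrals, (iii) construct them explicitly, and (iv) verify they are independent and in involution with respect to the log-canonical bracket.

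First, I would follow the $D_4$ and $D_6$ templates. The cluster map $\psi_{D_{2N}}$ carries a log-canonical Poisson bracket with matrix $\tilde{B}_{D_{2N}}$ (pulled back from the symplectic form on the reduced space via the Laurentification $\hat\pi$), and two frozen Casimirs coming from the parameters $b_{2N-1},b_{2N}$. Using the transformations \eqref{D2N: xvartrans} and \eqref{y-vartransform} one first pushes $\psi_{D_{2N}}$ down to $\tilde\varphi_{D_{2N}}$ on the $x$-space and then to the symplectic map $\hat\varphi_{D_{2N}}$ on $2N-2$ reduced coordinates $y_1,\dots,y_{2N-2}$. Liouville integrability in the reduced space then requires $N-1$ functionally independent first integrals in involution; once obtained, they can be pulled back along $\pi\circ\hat\pi$ to produce invariants of $\psi_{D_{2N}}$, which together with the $2N+2$ Casimirs from $\ker\tilde B_{D_{2N}}$ establish integrability of the cluster map.

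The main step is the construction of the $N-1$ commuting integrals of $\hat\varphi_{D_{2N}}$. For small $N$ the recipe of \eqref{firstintD4} works: average a suitable monomial over the $\frac{h+2}{2}$-periodic orbit of the \emph{undeformed} reduced map $\hat\varphi$ to produce an undeformed integral, then perturb its coefficients by arbitrary constants and impose $\hat\varphi_{D_{2N}}^{\ast}\tilde I=\tilde I$ to fix them; the parameter constraint $b_1=\dots=b_{2N-2}=b_{2N-1}b_{2N}$ forced by Laurentification is precisely what allows this averaging to close. For higher $N$, I would try to extract the remaining $N-2$ integrals from the bilinear tau-function form (an analogue of \eqref{eq:deformD6} in type $D_{2N}$) by looking for conserved combinations of the $\tau_n,\chi_n,\eta_{i,n},\xi_{i,n},r_n,s_n$ whose Laurent monomials in the $y_j$ are invariant; alternatively, the local-expansion structure of $Q_{D_{2N}}$ should allow an inductive construction in which each expansion step produces exactly one new integral.

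The hard part will be proving pairwise Poisson commutativity for all $N$. Direct verification is hopeless at arbitrary rank, so the most promising route is to exhibit a discrete Lax pair for $\hat\varphi_{D_{2N}}$ with spectral parameter and realize the integrals as coefficients of the spectral curve of the Lax matrix; involutivity is then automatic. Functional independence could be checked inductively along the local-expansion tower, comparing Jacobians at a convenient initial point such as $y_j=1$, $b_{2N-1}=b_{2N}=1$, where the computation reduces to the undeformed type $D_{2N}$ case whose integrability is classical. If a Lax pair proves elusive, a fallback is to work modulo the deformation parameters: commutativity holds for $b_{2N-1}=b_{2N}=1$ by the known integrability of the periodic undeformed map, and one verifies that the obstruction to involutivity is a polynomial in $b_{2N-1},b_{2N}$ that vanishes identically by a degree argument combined with the already-established quadratic degree growth.
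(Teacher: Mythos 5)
The statement you are addressing is stated in the paper as a \emph{conjecture}, and the paper does not prove it: the only support offered is the preceding theorem on quadratic degree growth of the d-vectors, which gives vanishing algebraic entropy --- a heuristic indicator of integrability, not a proof. Your proposal correctly identifies what an actual proof would require (reduction to the $(2N-2)$-dimensional symplectic space, $N-1$ functionally independent first integrals in involution, plus the Casimirs), and your outline is consistent with how the $D_4$ case is treated in the cited work, where a single first integral is found by averaging a monomial over the periodic orbit of the undeformed map and then deforming its coefficients. In that sense your plan goes beyond what the paper attempts.

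However, what you have written is a research program, not a proof, and the gaps are exactly at the points where the conjecture is hard. (i) The construction of the $N-2$ additional integrals for general $N$ is entirely deferred (``I would try to extract\dots'', ``should allow an inductive construction''); no candidate functions are written down and no mechanism is given for why the local-expansion step produces exactly one new invariant. (ii) Involutivity is not established: the Lax pair is hypothetical, and the fallback argument is invalid as stated --- the claim that ``the obstruction to involutivity is a polynomial in $b_{2N-1},b_{2N}$ that vanishes identically by a degree argument combined with the already-established quadratic degree growth'' does not follow; polynomial degree growth of iterates places no constraint forcing a specific Poisson bracket $\{h_i,h_j\}$ to vanish, and specializing to $b_{2N-1}=b_{2N}=1$ only shows the obstruction vanishes at one parameter value, not identically. (iii) Functional independence ``at a convenient initial point'' such as $y_j=1$ is also delicate, since at the undeformed point the integrals built by orbit-averaging may degenerate. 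Until the integrals are exhibited and their involutivity proved for all $N$, the statement remains a conjecture, which is precisely how the paper presents it.
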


\section{Conclusion and future work }

In this paper, we considered one of the main results introduced in \cite{hkm24}, namely, the deformation of periodic cluster map associated with Dynkin type $D_4$. We showed that the $D_4$ cluster map admits a 2-parameter integrable deformation and then lifted via Laurentification to a cluster map $\psi_{D_{4}^{(2)}}$ in 8 dimensions, with 2 frozen variables, that is generated by the quiver $Q_{D_{4}}^{(2)}$, (shown in Figure \ref{DeformedQD4}). We then demonstrated that quiver corresponding to the deformed type $D_{6}$ map can be constructed from $Q_{D_{4}}^{(2)}$ by inserting "ladder-shaped" quiver the local expansion (Figure \ref{extensionD4toD6}), in a similar manner to that of \cite{grab}. As a result, we can construct a family of quivers (illustrated in Figure \ref{D2Nquiver}) that generate two parametric cluster maps arising from deformed type $D_{2N}$ via Laurentification.

In recent studies of the type $D_4$ case with Hone and Mase, we have found a distinguished cluster map $\hat{\psi} $, built from a composition of 6 mutations and a specific permutation applied to the same cluster algebra. This map preserves the same invariant function (first integral of $\psi_{D_{4}^{(2)}}$) which is also an elliptic curve as it has genus 1. This means that each map acts as a translation by a point on the curve. Thus we expect these two to be independent to each other and hence describe the Mordell-Weil group of the corresponding elliptic surface of rank 2 like in the case of type $A_{3}$  (considered in \cite{hkm24}), which is under investigation.

 In the paper \cite{MOT}, Masuda, Okubo and Tsuda demonstrated that the translation of the Weyl group, composed with $R$ and other distinct reflections, gives rise to the discrete Painlev\'{e} equation $q$-$P_{VI}$.  In recent studies, we have found correspondence between cluster map $\hat{\psi} $ and a certain birational map $R$ that is one of the generating elements (\textit{reflection})of a Weyl group,  $W(D^{(1)}_{5})$. We expect that there is a close relation between cluster algebra with quiver $Q_{D_{4}}^{(2)}$ and $q$-$P_{VI}$, which is subject to future work.
\\
\\
\\
\textbf{Conflict of interest}:
The author declare that there is no conflict of interest.
\\
\\
\textbf{Data availability statement}:
No new data were created or analysed in this study.
\\
\\
\textbf{Acknowledgments}:
 This research was supported by the Grant-in-Aid for Scientific Research of Japan Society for the Promotion of Science, JSPS KAKENHI Grant Number 24KF0208. The author is grateful to Andrew Hone (University of Kent), Jan Grabowski (Lancaster University) and Takafumi Mase (The University of Tokyo) for their valuable advice and helpful comments for this paper.

\end{document}